\newcommand{\fin}{{\mathit{fi}}}
\newcommand{\inn}{{\mathit{in}}}
\newcommand{\m}{{\sf m}}
\def\hlinewd#1{%
	\noalign{\ifnum0=`}\fi\hrule \@height #1 \futurelet
	\reserved@a\@xhline}
\mathchardef\gr="213E
\mathchardef\ls="213C
\newcommand{\commenteq}[1]{\hspace{2em} [\mbox{#1}]}
\newcommand{\suchthat}{\;\ifnum\currentgrouptype=16 \middle\fi|\;}
\newcommand{\Dist}{\mathrm{Distr}}
\newcommand{\support}{\mathrm{support}}
\newcommand{\nat}{\mathbb{N}}
\newcommand{\A}{\mathcal{A}}
\newcommand{\B}{\mathcal{B}}
\newcommand{\C}{\mathcal{C}}
\newcommand{\D}{\mathcal{D}}
\newcommand{\M}{\mathcal{M}}
\newcommand{\V}{\mathcal{V}}
\newcommand{\Reals}{\mathbb{R}}
\newcommand{\vecb}{\mathbf{b}}
\newcommand{\vecc}{\mathbf{c}}
\newcommand{\vecv}{\mathbf{v}}
\newcommand{\vecx}{\mathbf{x}}
\newcommand{\vech}{\mathbf{h}}
\newcommand{\zero}{\mathbf{0}}
\newcommand{\one}{\mathbf{1}}
\newcommand{\size}{\text{size}}
\newcommand{\TVeqZERO}{\mathrm{TV}=0}
\newcommand{\TVneqZERO}{\mathrm{TV}\gr0}
\newcommand{\TVeqONE}{\mathrm{TV}=1}
\newcommand{\TVneqONE}{\mathrm{TV}\ls1}
\newcommand{\PBeqZERO}{\mathrm{PB}=0}
\newcommand{\PBneqZERO}{\mathrm{PB}\gr0}
\newcommand{\PBeqONE}{\mathrm{PB}=1}
\newcommand{\PBneqONE}{\mathrm{PB}\ls1}
\newcommand{\Run}{\mathit{Run}}
\newcommand{\tv}{\mathit{tv}}
\newcommand{\pb}{\mathit{pb}}
\newcommand{\ETR}{\sf \displaystyle \exists \mathbb{R}}
\newcommand{\pfun}{\mathrel{\ooalign{\hfil$\mapstochar\mkern5mu$\hfil\cr$\to$\cr}}}
\tikzstyle{vertex} = [draw,circle,minimum size=0.5cm,inner sep=0pt]
\definecolor{OliveGreen}{rgb}{0,0.6,0}
\title{Comparing Labelled Markov Decision Processes} 
\titlerunning{}
\author{Stefan Kiefer}{Department of Computer Science, University of Oxford, UK}{stekie@cs.ox.ac.uk}{https://orcid.org/0000-0003-4173-6877}{Supported by a Royal Society University Fellowship.}
\author{Qiyi Tang}{Department of Computer Science, University of Oxford, UK}{qiyi.tang@cs.ox.ac.uk}{https://orcid.org/0000-0002-9265-3011}{}
\authorrunning{S.\ Kiefer and Q.\ Tang}
\keywords{Markov decision processes, Markov chains, Behavioural metrics}
\begin{document}\sloppy

\maketitle

\begin{abstract}
	A labelled Markov decision process is a labelled Markov chain with nondeterminism, i.e., together with a strategy a labelled MDP induces a labelled Markov chain.
	The model is related to interval Markov chains.
	Motivated by applications of equivalence checking for the verification of anonymity, we study the algorithmic comparison of two labelled MDPs, in particular, whether there exist strategies such that the MDPs become equivalent/inequivalent, both in terms of trace equivalence and in terms of probabilistic bisimilarity.
	We provide the first polynomial-time algorithms for computing memoryless strategies to make the two labelled MDPs inequivalent if such strategies exist.
	We also study the computational complexity of qualitative problems about making the total variation distance and the probabilistic bisimilarity distance less than one or equal to one.
\end{abstract} 

\section{Introduction} \label{section:introduction}
Given a model of computation (e.g., finite automata), and two instances of it, are they semantically equivalent (i.e., do they accept the same language)?
Such \emph{equivalence} problems can be viewed as a fundamental question for almost any model of computation.
As such, they permeate computer science, in particular, theoretical computer science.

In \emph{labelled Markov chains (LMCs)}, which are Markov chains whose states (or, equivalently, transitions) are labelled with an observable letter, there are two natural and very well-studied versions of equivalence, namely \emph{trace (or language) equivalence} and \emph{probabilistic bisimilarity}.

The \emph{trace equivalence} problem has a long history, going back to Sch\"utzenberger~\cite{Schutzenberger} and Paz~\cite{Paz71}
who studied \emph{weighted} and \emph{probabilistic} automata, respectively.
Those models generalize LMCs, but the respective equivalence problems are essentially the same.
It can be extracted from~\cite{Schutzenberger} that equivalence is decidable in polynomial time, using a technique based on linear algebra.
Variants of this technique were developed in \cite{Tzeng,DoyenHR08}.
%
More recently, the efficient decidability of the equivalence problem was exploited, both theoretically and practically, for the verification of probabilistic systems, see, e.g., \cite{KMOWW:CAV11,12KMOWW:CAV,Peyronnet12,Ngo13,Li15}.
In those works, equivalence naturally expresses properties such as obliviousness and anonymity, which are difficult to formalize in temporal logic.
In a similar vein, inequivalence can mean detectibility and the lack of anonymity.

\emph{Probabilistic bisimilarity} is an equivalence that was introduced by Larsen and Skou~\cite{LarsenS91}.
It is finer than trace equivalence, i.e., probabilistic bisimilarity implies trace equivalence.
A similar notion for Markov chains, called \emph{lumpability}, can be traced back at least to the classical text by Kemeny and Snell~\cite{KemenySnell76}.
Probabilistic bisimilarity can also be computed in polynomial time \cite{Baier1996,DerisaviHS03,ValmariF10}.
Indeed, in practice, computing the bisimilarity quotient is fast and has become a backbone for highly efficient tools for probabilistic verification such as \textsc{Prism}~\cite{Prism} and \textsc{Storm}~\cite{Storm}.

In this paper, we study equivalence problems for \emph{(labelled) Markov decision processes (MDPs)}, which are LMCs plus nondeterminism, i.e., each state may have several \emph{actions} (or ``moves'') one of which is chosen by a controller, potentially randomly.
An MDP and a controller \emph{strategy} together induce an LMC (potentially with infinite state space, depending on the complexity of the strategy).
The nondeterminism in MDPs gives rise to a spectrum of equivalence queries: one may ask about the existence of strategies for two given MDPs such that the induced LMCs become trace/bisimulation equivalent, or such that they become trace/bisimulation \emph{in}equivalent.
Another potential dimension of this spectrum is whether to consider general strategies or more restricted ones, such as memoryless or even memoryless deterministic (MD) ones.

In this paper, we focus on \emph{memoryless} strategies, for several reasons.
First, these questions for unrestricted strategies quickly lead to undecidability.
For example, in~\cite[Theorem~3.1]{FKS2020} it was shown that whether there exists a general strategy such that a given MDP becomes trace equivalent with a given LMC is undecidable.
Second, memoryless strategies are sufficient for a wide range of objectives in MDPs, and their simplicity means that even if it was known that a general strategy exists to accomplish (in)equivalence one might still wonder if there \emph{also} exists a memoryless strategy.
Third, probabilistic bisimilarity is a less natural notion for LMCs induced by general strategies: such LMCs will in general have an infinite state space, even when the MDP is finite.
Fourth, applying a memoryless strategy in an MDP is related to choosing an instance of an \emph{interval Markov chain (IMC)}.
IMCs are like Markov chains, but the transitions are labelled not with probabilities but with probability intervals.
IMCs were introduced by Jonsson and Larsen~\cite{JonssonL91} and have been well studied in verification-related domains~\cite{SenVA06,ChatterjeeSH08,DelahayeLLPW11,BenediktLW13,ChakrabortyKatoen15}, but also in areas such as systems biology, security or communication protocols, see, e.g.,~\cite{Delahaye15}.
Selecting a memoryless strategy in an MDP corresponds to selecting a probability from each interval (one out of generally uncountably many).
\emph{Parametric Markov chains} and \emph{parametric MDPs} are further related models, see, e.g., \cite{HahnHZ11,WinklerJPK19} and the references therein.

LMCs can also be compared in terms of their \emph{distance}.
We consider two natural distance functions between two LMCs: the \emph{total variation} distance (between the two trace distributions) and the \emph{probabilistic bisimilarity} distance~\cite{DesharnaisGJP04}.
Both distances can be at most~$1$.
The total variation (resp.\ probabilistic bisimilarity) distance is~$0$ if and only if the LMCs are trace equivalent (resp.\ probabilistic bisimilar).
Further, the probabilistic bisimilarity distance is an upper bound on the total variation distance~\cite{CBW2012}.
It was shown in~\cite{CK2014} (resp.~\cite{TangBreugel20}) that whether the total variation (resp.\ probabilistic bisimilarity) distance of two LMCs equals~$1$ can be decided in polynomial time.
This raises the question whether these results can be extended to MDPs, i.e., what is the complexity of deciding whether there exists a memoryless strategy to make the distance less than~$1$ or equal to~$1$, respectively.
It turns out that some of these problems are closely related to the corresponding (in)equivalence problem.

Instead of comparing two MDPs with initial distributions/states, one may equivalently compare two initial distributions/states in a single MDP (by taking a disjoint union of the states).
In this paper we study the computational complexity of the following problems:
\begin{itemize}
	\item
	$\TVeqZERO$ ($\TVneqZERO$), which asks whether there is a memoryless strategy such that the two initial distributions are (not) trace equivalent in the induced labelled Markov chain; 
	\item
	$\TVeqONE$ ($\TVneqONE$),   which asks whether there is a memoryless strategy such that the two initial distributions (do not) have total variation distance one; 
	\item
	$\PBeqZERO$ ($\PBneqZERO$),  which asks whether there is a memoryless strategy such that the two initial states are (not) probabilistic bisimilar; 
	\item
	$\PBeqONE$ ($\PBneqONE$),    which asks whether there is a memoryless strategy such that the two initial states (do not) have probabilistic bisimilarity distance one. 
\end{itemize}
In \cref{section:tvInequivalence,section:pbInequivalence} we provide the first polynomial-time algorithms for $\TVneqZERO$ and $\PBneqZERO$, respectively.
We also show how to compute memoryless strategies that witness trace and probabilistic bisimulation inequivalence, respectively.
In \cref{section:DistanceOne} we discuss $\TVeqONE$ and $\PBeqONE$, and in \cref{section:summaryDistanceZeroAndNeqOne} we establish the complexity of the remaining four problems, which are about making the distance small ($=0$ or $\ls 1$). We conclude in \cref{section:conclusion}. \cref{tab:summary} summarises the results in the paper. Missing proofs can be found in the Appendix.

\begin{table}[h]
	\centering
	\begin{tabular}{|c|c|}
		\hline
		Problem &  Complexity	\\
		\hline \hline
		$\TVeqZERO$ &  $\ETR$-complete\\
		$\TVneqZERO$ &  in ${\sf P}$ \\
		$\TVeqONE$ &  $\sf NP$-hard and in $\ETR$\\
		$\TVneqONE$ &   $\ETR$-complete\\
		$\PBeqZERO$ &  $\sf NP$-complete\\
		$\PBneqZERO$ & in   ${\sf P}$ \\
		$\PBeqONE$ &  $\sf NP$-complete\\
		$\PBneqONE$ &   $\sf NP$-complete\\
		\hline
	\end{tabular}	

	\caption{Summary of the results. These results also imply results for the problems which state ``for all memoryless strategies''. For example, $\TVneqZERO$ is the complement of the decision problem whether for all memoryless strategies the two initial distributions are  trace equivalent in the induced labelled Markov chains.} 
	\label{tab:summary}
\end{table}

\section{Preliminaries} \label{section:preliminaries}
 We write $\Reals$ for the set of real numbers and $\nat$ the set of nonnegative integers. Let $S$ be a finite set. We denote by $\Dist(S)$ the set of probability distributions on~$S$. By default we view vectors, i.e., elements of $\mathbb{R}^{S}$, as row vectors. For a vector $\mu \in [0, 1]^{S}$ we write $|\mu| := \sum_{s \in S} \mu(s)$ for its $L_1$-norm. A vector $\mu \in [0, 1]^{S}$ is a distribution (resp. subdistribution) over $S$ if $|\mu| = 1$ (resp.  $0 \ls |\mu| \le 1$). We denote column vectors by boldface letters; in particular, $\one \in \{1\}^{S}$ and  $\zero \in \{0\}^{S}$ are column vectors all whose entries are $1$ and $0$, respectively. For $s \in S$ we write $\delta_s$ for the (Dirac) distribution over $S$ with $\delta_{s}(s) = 1$ and $\delta_{s}(r) = 0$ for $r \in S \setminus \{s\}$. For a (sub)distribution $\mu$ we write $\support(\mu) = \{s \in S \mid \mu(s) \gr 0 \}$ for its support.

A \emph{labelled Markov chain} (LMC) is a quadruple $<S, L, \tau, \ell>$ consisting of a nonempty finite set $S$ of states, a nonempty finite set $L$ of labels, a transition function $\tau : S \to \Dist(S)$, and a labelling function $\ell: S \to L$.

We denote by $\tau(s)(t)$ the transition probability from $s$ to $t$. Similarly, we denote by $\tau(s)(E) = \sum_{t \in E} \tau(s)(t)$ the transition probability from $s$ to $E \subseteq S$. A trace in a LMC $\mathcal{M}$ is a sequence of labels $w = a_1a_2 \cdots a_n$ where $a_i \in L$. We denote by $L^{\le n}$ the set of traces of length at most $n$. Let $M: L \to [0, 1]^{S \times S}$ specify the transitions, so that $\sum_{a \in L}M(a)$ is a stochastic matrix, $M(a)(s, t) = \tau(s)(t)$ if $\ell(s) = a$ and $M(a)(s, t) = 0$ otherwise. We extend $M$ to the mapping $M : L^{*} \to [0, 1]^{S \times S}$ with $M(w) = M(a_1)\cdots M(a_n)$ for a trace $w = a_1 \cdots a_n$. If the LMC is in state $s$, then with probability $M(w)(s, s')$ it emits a trace $w$ and moves to state $s'$ in $|w|$ steps. 
For a trace $w \in L^{*}$, we define $\Run(w) := \{w\}L^{\omega}$; i.e., $\Run(w)$ is the set of traces starting with $w$. To an initial distribution $\pi$ on $S$, we associate the probability space $(L^{\omega}, \mathcal{F}, \mathrm{Pr_{\mathcal{M}, \pi}})$, where $\mathcal{F}$ is the $\sigma$-field generated by all basic cylinders $\Run(w)$ with $w \in L^{*}$ and  $\mathrm{Pr_{\mathcal{M}, \pi}}: \mathcal{F} \to [0,1]$ is the unique probability measure such that $\mathrm{Pr_{\mathcal{M}, \pi}}(\Run(w)) = |\pi M(w)|$. We generalize the definition of $\mathrm{Pr_{\mathcal{M}, \pi}}$ to subdistributions $\pi$ in the obvious way, yielding sub-probability measures. We may drop the subscript $\mathcal{M}$ if it is clear from the context. 

Given two initial distributions $\mu$ and $\nu$, the \emph{total variation distance} between $\mu$ and $\nu$ is defined as follows:
$$d_{\tv}(\mu, \nu) = \sup_{E \in \mathcal{F}} |\mathrm{Pr_{\mu}}(E) - \mathrm{Pr_{\nu}}(E)|.$$ 

We write $\mu \equiv \nu$ to denote that $\mu$ and $\nu$ are trace equivalent, i.e., $|\mathrm{Pr}_{\mu}(\Run(w))| = |\mathrm{Pr}_{\nu}(\Run(w))|$ holds for all $w \in L^{*}$. We have that trace equivalence and the total variation distance being zero are equivalent \cite[Proposition~3(a)]{CK2014}. 

The pseudometric \emph{probabilistic bisimilarity distance} of Desharnais et al. \cite{DGJP1999} , which we denote by $d_{\pb}$, is a function from $S \times S$ to $[0, 1]$, that is, an element of $[0, 1]^{S \times S}$. It can be defined as the least fixed point of the following function:
\[
\Delta(d)(s, t) = \left \{
\begin{array}{ll}
1 & \mbox{if $\ell(s) \not= \ell(t)$}\\
\displaystyle \min_{\omega \in \Omega(\tau(s), \tau(t))} \sum_{u, v \in S} \omega(u, v) \; d(u, v) & \mbox{otherwise}
\end{array}
\right .
\]
where the set $\Omega(\mu, \nu)$ of \emph{couplings} of $\mu,\nu \in \Dist(S)$ is defined as $\Omega(\mu, \nu)=\left \{ \, \omega \in \Dist(S \times S) \suchthat \sum_{t \in S} \omega(s, t) = \mu(s) \wedge \sum_{s \in S} \omega(s, t) = \nu(t) \, \right \}$. Note that a coupling $\omega \in \Omega$ is a joint probability distribution with marginals $\mu$ and $\nu$ (see, e.g., \cite[page 260-262]{billingsley1995}).

An equivalence relation $R \subseteq S \times S$ is a \emph{probabilistic bisimulation} if for all $(s, t) \in R$, $\ell(s) = \ell(t)$ and 
$\tau(s)(E) = \tau(t)(E)$ for each $R$-equivalence class $E$. \emph{Probabilistic bisimilarity}, denoted by $\mathord{\sim_{\M}}$ (or $\mathord{\sim}$ when $\M$ is clear), is the largest probabilistic bisimulation. For all $s,t \in S$, $s \sim t$ if and only if $d_{\pb}(s,t) = 0$ \cite[Theorem~1]{DGJP1999}.

A \emph{(labelled) Markov decision process} (MDP) is a tuple $<S, \mathcal{A}, L, \varphi, \ell>$ consisting of a finite set $S$ of states, a  finite set $\mathcal{A}$ of actions, a  finite set $L$ of labels, a partial function $\varphi: S \times \mathcal{A} \pfun \Dist(S)$ denoting the probabilistic transition, and a labelling function $\ell: S \to L$. The set of available actions in a state $s$ is $\mathcal{A}(s) = \{\m \in \mathcal{A} \mid \varphi(s, \m) \text{ is defined}\}$. 
A \emph{memoryless} strategy for an MDP is a  function $\alpha: S \to \Dist(\mathcal{A})$ that given a state $s$, returns a probability distribution on all the available actions at that state. Such strategies are also known as positional, as they do not depend on the history of past states. A strategy $\alpha$ is \emph{memoryless deterministic} (MD) if for all states $s$ there exists an action $\m \in \A(s)$ such that $\alpha(s)(\m) = 1$; we thus view an MD strategy as a function $\alpha: S \to \mathcal{A}$.

For the remainder of the paper,  we fix an MDP $\mathcal{D} = <S, \mathcal{A}, L, \varphi, \ell>$. Given a memoryless strategy $\alpha$ for $\mathcal{D}$, an LMC $\mathcal{D}(\alpha) = <S, L, \tau,\ell>$ is induced, where $\tau(s)(t) = \sum_{\m \in \A(s)}\alpha(s)(\m) \cdot \varphi(s, \m)(t)$. The matrix $M_{\alpha}$ specifies the transitions of the LMC $\mathcal{D}(\alpha)$ as is defined previously. 

We fix two initial distributions $\mu$ and $\nu$ on $S$ (resp.~two initial states $s$ and $t$) for problems related to total variation distance (resp.~probabilistic bisimilarity distance). 

\section{Trace Inequivalence} \label{section:tvInequivalence}
In this section we show that one can decide in polynomial time whether there exists a memoryless strategy~$\alpha$ so that $\mu \not\equiv \nu$ in $\mathcal{D}(\alpha)$.
In terms of the notation from the introduction, we show that $\TVneqZERO$ is in {\sf P}. Define the following column-vector spaces.
\begin{align*}
\V_1 &= <M_{\alpha_{1}}(a_1)M_{\alpha_{2}}(a_2)\cdots M_{\alpha_m}(a_m)\one:  \alpha_i \text{ is a memoryless strategy}; a_i \in L>  \text{ and} \\
\V_2 &= <M_{\alpha}(w)\one:  \alpha  \text{ is a memoryless strategy}; w \in L^*> \text{ and} \\
\V_3 &= <M_{\alpha}(w)\one:  \alpha  \text{ is an MD strategy}; w \in L^*>.
\end{align*}
Here and later we use the notation $<\cdot>$ to denote the span of (i.e., the vector space spanned by) a set of vectors.
By the definitions, we have that $\mu \equiv \nu$ in all LMCs induced by all memoryless strategies~$\alpha$ if and only if 
$\mu M_{\alpha}(w) \one= \nu M_{\alpha}(w) \one$ holds for all memoryless strategies $\alpha$ and all $w \in L^{*}$. It follows:

\begin{proposition} \label{proposition:TVneqZERO-and-vector-space}
	For all distributions $\mu, \nu$ over $S$ we have: 
	\[
	\exists \text{ a memoryless strategy } \alpha \text{ such that } \mu \not\equiv \nu \text{ in }  \D(\alpha)  \; \Longleftrightarrow\; \mu\vecv \not= \nu \vecv \ \text{ for some } \vecv \in \V_2.
	\]
\end{proposition}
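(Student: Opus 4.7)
The plan is to prove both directions by direct unpacking of the definitions, relying on the identity $\mathrm{Pr}_{\M,\pi}(\Run(w)) = |\pi M(w)| = \pi M(w) \one$ that was established in the preliminaries. Since the statement is really an ``equivalent reformulation'' of trace inequivalence in vector-space language, there is no deep obstacle; the only subtlety is handling the span on the right-hand side via linearity.

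For the forward direction, suppose a memoryless strategy $\alpha$ exists such that $\mu \not\equiv \nu$ in $\D(\alpha)$. By definition of trace equivalence, there is some $w \in L^*$ with $\mathrm{Pr}_{\D(\alpha),\mu}(\Run(w)) \neq \mathrm{Pr}_{\D(\alpha),\nu}(\Run(w))$. Translating via the identity above yields $\mu M_{\alpha}(w)\one \neq \nu M_{\alpha}(w)\one$. The vector $M_{\alpha}(w)\one$ lies in $\V_2$ by definition, so taking $\vecv := M_{\alpha}(w)\one$ witnesses the right-hand side.

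For the backward direction, suppose $\mu \vecv \neq \nu \vecv$ for some $\vecv \in \V_2$. By definition of $\V_2$ we may write $\vecv = \sum_{i=1}^k c_i M_{\alpha_i}(w_i)\one$ for scalars $c_i$, memoryless strategies $\alpha_i$, and traces $w_i \in L^*$. Linearity gives
\[
0 \;\neq\; \mu\vecv - \nu\vecv \;=\; \sum_{i=1}^k c_i\bigl(\mu M_{\alpha_i}(w_i)\one - \nu M_{\alpha_i}(w_i)\one\bigr),
\]
so at least one summand is nonzero. For any index $i$ with $\mu M_{\alpha_i}(w_i)\one \neq \nu M_{\alpha_i}(w_i)\one$, reversing the identity gives $\mathrm{Pr}_{\D(\alpha_i),\mu}(\Run(w_i)) \neq \mathrm{Pr}_{\D(\alpha_i),\nu}(\Run(w_i))$, so $\mu \not\equiv \nu$ in $\D(\alpha_i)$, as required.

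The hardest part is really just bookkeeping: keeping straight that $\V_2$ is a span (hence a priori its elements are linear combinations that need not themselves equal any single $M_\alpha(w)\one$) and observing that trace inequivalence on the right-hand side can be extracted from the nonzero-sum argument above. Everything else is a direct application of the preliminaries.
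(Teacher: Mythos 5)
Your proof is correct and follows the same route the paper sketches: the paper observes that, by definition of trace equivalence and the identity $\Pr_{\mu,\D(\alpha)}(\Run(w)) = \mu M_\alpha(w)\one$, the statement "$\mu \equiv \nu$ in $\D(\alpha)$ for all memoryless $\alpha$" is equivalent to "$\mu M_\alpha(w)\one = \nu M_\alpha(w)\one$ for all $\alpha$ and $w$," and then states the proposition as an immediate consequence. Your write-up simply makes explicit the linearity step that lets one pass between "some spanning vector separates $\mu$ and $\nu$" and "some element of the span separates $\mu$ and $\nu$," which the paper treats as obvious.
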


To decide $\TVneqZERO$ and to compute the ``witness'' memoryless strategy such that $\mu \not\equiv \nu$ in the induced LMC, it suffices to compute a basis for $\V_2$; more precisely, a set of $\alpha$ and~$w$ such that the vectors $M_\alpha(w) \one$ span~$\V_2$.
As the set of memoryless strategies is uncountable, this is not straightforward.
From the definitions, we know $\V_3 \subseteq \V_2 \subseteq \V_1$.
We will show $\V_1 \subseteq \V_3$ and thus establish the equality of these three vector spaces. It follows from \cite[Theorem~5.12]{FKS2020} that computing a basis for $\V_1$ is in {\sf P}. It follows that our problem $\TVneqZERO$ is also in {\sf P}, but this does not explicitly give the witnessing memoryless strategy. Since $\V_2 = \V_3$, there must exist an MD strategy that witnesses $\mu \not\equiv \nu$. To find this MD strategy, one can go through all MD strategies (potentially exponentially many).  In the following, by considering the vector spaces while restricting the word length, we show that a witness MD strategy can also be computed in polynomial time.

We define the following column-vector spaces. For each $j \in \nat$,
\begin{align*}
\V_{1}^{j} &= <M_{\alpha_{1}}(a_1)M_{\alpha_{2}}(a_2)\cdots M_{\alpha_k}(a_k)\one:  \alpha_i \text{ is a memoryless strategy}; a_i \in L; k \le j>  \text{ and} \\
\V_{2}^{j} &= <M_{\alpha}(w)\one:  \alpha  \text{ is a memoryless strategy}; w \in L^{\le j} > \text{ and} \\
\V_{3}^{j} &= <M_{\alpha}(w)\one:  \alpha  \text{ is an MD strategy}; w \in L^{\le j} >. 
\end{align*}


Let $\alpha$ be an MD strategy and $\m$ be an action available at state $i$. Recall that an MD strategy can be viewed as a function $\alpha: S \to \mathcal{A}$. We define $\alpha^{i \to \m}$ to be the MD strategy such that  $\alpha^{i \to \m}(i)=  \m$ and $\alpha^{i \to \m}(s) = \alpha(s)$ for all $s \in S\setminus\{i\}$. Let $\vecc_i \in \{0, 1\}^{S}$ be the column bit vector whose only non-zero entry is the $i$th one. For a set $B \subseteq \Reals^S$, we define $<B>$ to be the vector space spanned by $B$.

We call a column vector an \emph{MD vector} if it is of the form $M_\alpha(w) \one$ for an MD strategy $\alpha$ and $w \in L^*$. Let $P$ be a set of MD strategy and word pairs, i.e., $P = \{(\alpha_1, w_1), (\alpha_2, w_2), \cdots, (\alpha_m, w_m) \}$ where $\alpha_i$ is an MD strategy and $w_i \in L^{*}$. We define a function $\B$ transforming such a set $P$ to the set of corresponding MD vectors, i.e., $\B(P) = \{M_{\alpha_1}(w_1)\one, M_{\alpha_2}(w_2)\one, \cdots , M_{\alpha_n}(w_n)\one \}$. 

\begin{restatable}{lemma}{lemmaStrategyComposition}\label{lemma:two-strategy-composition-new-strategy-construction}
	Let $j \in \nat$. For all MD strategies $\alpha_1$ and $\alpha_2$, $a \in L$ and $w \in L^{\le j}$, we have $M_{\alpha_{1}}(a)M_{\alpha_{2}}(w)\one \in <\V_{1}^{j} \cup \B(\{(\alpha, aw)\})>$ where $\alpha$ is the MD strategy defined by 	\[
	\alpha(i) = \left \{
	\begin{array}{ll}
	\alpha_1(i) & \mbox{if 
		$\vecc_{i} \not\in \V_{1}^{j}$}\\
	\alpha_2(i) & \mbox{otherwise}
	\end{array}
	\right .
	\]
\end{restatable}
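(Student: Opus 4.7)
The plan is to prove the equivalent strengthened statement
$$\vecy \;:=\; M_{\alpha_1}(a)\,M_{\alpha_2}(w)\,\one \;-\; M_\alpha(aw)\,\one \;\in\; \V_1^j,$$
by induction on $|w|$; this yields the lemma because then $M_{\alpha_1}(a)M_{\alpha_2}(w)\one = \vecy + M_\alpha(aw)\one \in \langle \V_1^j \cup \B(\{(\alpha,aw)\})\rangle$. Throughout I set $A := \{i \in S : \vecc_i \in \V_1^j\}$ and $B := S \setminus A$, with diagonal projections $P_A,P_B$. A crucial preliminary observation is that $\mathbb{R}^A \subseteq \V_1^j$, so any $A$-supported vector is automatically in $\V_1^j$. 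The base case $|w|=0$ is immediate because $M_\beta(a)\one$ is the indicator of the label-$a$ states and hence independent of $\beta$, giving $\vecy = \zero$.

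For the inductive step, write $w = a_1 w'$ and split
$$\vecy \;=\; \bigl(M_{\alpha_1}(a) - M_\alpha(a)\bigr)\, M_\alpha(w)\,\one \;+\; M_{\alpha_1}(a)\,\vecd_w, \qquad \vecd_w := M_{\alpha_2}(w)\one - M_\alpha(w)\one.$$
Since $\alpha = \alpha_1$ on $B$ by construction, the rows of $M_{\alpha_1}(a) - M_\alpha(a)$ indexed by $B$ vanish, so the first summand is $A$-supported and lies in $\V_1^j$. It therefore remains to show $M_{\alpha_1}(a)\,\vecd_w \in \V_1^j$. Iterating the one-step identity $\vecd_u = M_\alpha(u_1)\vecd_{u'} + (M_{\alpha_2}(u_1) - M_\alpha(u_1))M_{\alpha_2}(u')\one$ yields the telescoping expansion
$$\vecd_w \;=\; \sum_{k=1}^{|w|} M_\alpha(a_1\cdots a_{k-1})\,\vece^{(k)}, \qquad \vece^{(k)} := \bigl(M_{\alpha_2}(a_k) - M_\alpha(a_k)\bigr)\,M_{\alpha_2}(a_{k+1}\cdots a_{|w|})\,\one,$$
in which each $\vece^{(k)}$ is supported on $B$ (because $\alpha = \alpha_2$ on $A$ makes the corresponding rows of the matrix difference zero) and is expressible as a difference of two vectors in $\V_1^{|w|-k+1}$.

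The principal obstacle is that multiplying these telescoping $B$-supported pieces by the leading $M_{\alpha_1}(a)$ naively increases the depth to $\V_1^{|w|+1}$, not $\V_1^j$. I overcome this by passing to the dual: $M_{\alpha_1}(a)\vecd_w \in \V_1^j$ iff $\pi\,M_{\alpha_1}(a)\,\vecd_w = 0$ for every row vector $\pi$ in the annihilator $U_j := \{\pi \in \Reals^S : \pi\vecv = 0 \text{ for all } \vecv \in \V_1^j\}$. Any such $\pi$ satisfies $\pi_i = 0$ for $i \in A$ (since $\vecc_i \in \V_1^j$), hence is $B$-supported; combined with $\alpha = \alpha_1$ on $B$ this yields the pivotal identity $\pi M_{\alpha_1}(a) = \pi M_\alpha(a)$, and furthermore $\pi M_\alpha(a) \in U_{j-1}$ because multiplying any vector of $\V_1^{j-1}$ by $M_\alpha(a)$ lands it in $\V_1^j$. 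Applying these two reductions to the telescoped form of $\vecd_w$, and invoking the inductive hypothesis on the suffixes of $w$, turns the required vanishing $\pi M_{\alpha_1}(a)\vecd_w = 0$ into a cascade of identities each of which holds because $\pi$ lies at the appropriate annihilator depth, thereby closing the induction.
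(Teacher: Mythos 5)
Your decomposition of $M_{\alpha_1}(a)M_{\alpha_2}(w)\one - M_\alpha(aw)\one$ into the $A$-supported piece $(M_{\alpha_1}(a)-M_\alpha(a))M_\alpha(w)\one$ plus $M_{\alpha_1}(a)\vecd_w$ is the same first step the paper takes, and your treatment of the first piece is correct. The divergence is in how you handle $\vecd_w = M_{\alpha_2}(w)\one - M_\alpha(w)\one$, and there your argument has a genuine gap.

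The paper does not try to show $M_{\alpha_1}(a)\vecd_w \in \V_1^j$ directly; it proves the stronger fact $\vecd_w = \zero$ via an auxiliary lemma (Lemma~\ref{lemma:vector-space-equal-flipping-bit}) that changes the strategy at \emph{one} state $i$ at a time: after a single flip the difference $M_{\alpha}(w)\one - M_{\alpha^{i\to\m}}(w)\one$ is a scalar multiple of a \emph{single} $\vecc_i$, it lies in $\V_1^j$ because $|w|\le j$, and since that one $\vecc_i \notin \V_1^j$ the scalar is forced to be zero. Iterating over all states in $B$ gives $M_{\alpha_2}(w)\one = M_\alpha(w)\one$. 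Your argument never establishes this, and it cannot be bypassed the way you propose. After your two dual reductions you are left needing $\pi M_\alpha(a\,a_1\cdots a_{k-1})\,\vece^{(k)} = 0$, where the row vector lies in $U_{j-k}$ and $\vece^{(k)} \in \V_1^{|w|-k+1}$. These combine to give vanishing only when $|w|-k+1 \le j-k$, i.e.\ $|w| \le j-1$; for $|w| = j$ you are off by one in every term of the telescope, and appealing to ``$\pi$ lies at the appropriate annihilator depth'' does not close this. The $B$-supportedness of $\vece^{(k)}$ does not rescue the argument either: a $B$-supported vector lying in $\V_1^j$ need not be zero, since a nontrivial linear combination $\sum_{i\in B}x_i\vecc_i$ can land in $\V_1^j$ even when no individual $\vecc_i$ does (e.g.\ $\vecc_1-\vecc_2 \in \V_1^j$ with $\vecc_1,\vecc_2\notin\V_1^j$). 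This is exactly why the one-state-at-a-time flip in the paper's auxiliary lemma is load-bearing, and why the shortcut you are attempting does not work.

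To repair the proof you should first establish the auxiliary claim $M_{\alpha_2}(w)\one = M_\alpha(w)\one$ for $w\in L^{\le j}$ by flipping one state of $B$ at a time (inducting on $|w|$ inside each flip), and then the rest of your write-up---the decomposition and the $A$-supported first summand---goes through immediately, with $M_{\alpha_1}(a)\vecd_w = \zero$ so the dual/telescoping machinery becomes unnecessary.
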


The next lemma shows that a basis for $\V_{1}^{j}$ for some $j \ls |S|$ consisting only of MD vectors can be computed in polynomial time. 

\begin{restatable}{lemma}{lemmaPolyTime}\label{lemma:V1j-basis-polynomial-time}
	Let $j \in \nat$ with $j \ls |S|$. One can compute in polynomial time a set $P_j = \{(\alpha_0, w_0), \cdots, (\alpha_k, w_k)\}$ in which all $\alpha_i$ are MD strategies and all $w_i$ are in $L^{\le j}$ such that $\B(P_j)$ is a basis of~$\V_{1}^{j}$.
\end{restatable}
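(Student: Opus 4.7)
The plan is to construct $P_0, P_1, \ldots, P_j$ inductively in $k$, maintaining the invariant that $\B(P_k)$ is a basis of $\V_1^k$ (in particular $|P_k| \le |S|$). As a base case, $\V_1^0 = \langle \one\rangle$ since the empty product of matrices is the identity, so I set $P_0 := \{(\alpha_0, \epsilon)\}$ for any fixed MD strategy $\alpha_0$; then $\B(P_0) = \{\one\}$ is a basis.

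For the inductive step, assume $\B(P_{k-1})$ is a basis of $\V_1^{k-1}$. Starting from
\[
\V_1^k \;=\; \V_1^{k-1} + \langle M_{\alpha_1}(a)\, \vecv : a \in L,\, \alpha_1 \text{ memoryless},\, \vecv \in \V_1^{k-1}\rangle,
\]
I perform three reductions. First, because $\alpha \mapsto M_\alpha(a)\vecv$ is multi-linear in the per-state distributions $\alpha(s)$, every memoryless $\alpha_1$ decomposes $M_{\alpha_1}(a)\vecv$ as a convex combination of $M_\beta(a)\vecv$ over MD strategies $\beta$; so the span is preserved when $\alpha_1$ is restricted to MD. Second, by linearity of $\vecv \mapsto M_{\alpha_1}(a)\vecv$ we may further restrict $\vecv$ to the basis $\B(P_{k-1})$, writing $\vecv = M_{\alpha_2}(w)\one$ with $(\alpha_2, w) \in P_{k-1}$. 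Third, fixing $\alpha_2$ as the reference and using that each row of $M_{\alpha_1}(a)\vecv$ is zero for $\ell(s)\neq a$ and depends on $\alpha_1$ only through $\alpha_1(s)$ otherwise, a row-by-row identity yields
\[
M_{\alpha_1}(a)\vecv \;=\; M_{\alpha_2}(a)\vecv + \sum_{\substack{s \in S :\, \ell(s)=a \\ \alpha_1(s) \neq \alpha_2(s)}} \bigl( M_{\alpha_2^{s \to \alpha_1(s)}}(a)\vecv - M_{\alpha_2}(a)\vecv\bigr)
\]
for every MD $\alpha_1$, so $\alpha_1$ may be restricted to the reduced set $R_{\alpha_2} := \{\alpha_2\} \cup \{\alpha_2^{s \to \m} : s \in S,\, \m \in \A(s)\}$ of cardinality $O(|S|\cdot|\A|)$.

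Now, for each triple $((\alpha_2, w), a, \alpha_1) \in P_{k-1} \times L \times R_{\alpha_2}$, invoke \cref{lemma:two-strategy-composition-new-strategy-construction} with ``$j$''$:= k-1$ (the tests $\vecc_i \in \V_1^{k-1}$ are computable from $\langle\B(P_{k-1})\rangle$ in polynomial time) to obtain an MD strategy $\alpha$ with $M_{\alpha_1}(a) M_{\alpha_2}(w)\one \in \V_1^{k-1} + \langle M_\alpha(aw)\one\rangle$. Initialize $P_k := P_{k-1}$ and iterate over the candidates $(\alpha, aw)$, appending each to $P_k$ whenever $M_\alpha(aw)\one$ is linearly independent of the vectors already in $\B(P_k)$ (tested by Gaussian elimination). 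Combining the three reductions with the lemma yields $\langle\B(P_k)\rangle = \V_1^k$; linear independence is preserved by the greedy construction, so $\B(P_k)$ is a basis of $\V_1^k$. Each level handles at most $|P_{k-1}|\cdot|L|\cdot|R_{\alpha_2}| = O(|S|^2|L||\A|)$ candidates with polynomial work per candidate, and there are at most $j < |S|$ levels, giving polynomial total time.

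The main obstacle is the third reduction: a naive enumeration of MD strategies for $\alpha_1$ is exponential, while memoryless $\alpha_1$ is a continuum. The row-decoupling identity exposes a polynomial-sized generating set of ``one-state-deviations''. Once this is in hand, \cref{lemma:two-strategy-composition-new-strategy-construction} packages each chosen $(\alpha_1, \alpha_2)$ into a single MD strategy $\alpha$ whose vector $M_\alpha(aw)\one$ captures, modulo $\V_1^{k-1}$, the new information that $M_{\alpha_1}(a)M_{\alpha_2}(w)\one$ contributes.
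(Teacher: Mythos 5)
Your proposal is correct and follows essentially the same strategy as the paper: reduce the continuum of memoryless strategies (and the exponentially many MD strategies) to a polynomial-size set of one-state-deviation strategies, then let Lemma~\ref{lemma:two-strategy-composition-new-strategy-construction} fold $(\alpha_1,\alpha_2)$ into a single MD witness $\alpha$ and run a greedy basis extension over the $O(|S|^2|L||\A|)$ candidates per level.

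The only differences are cosmetic. The paper writes the linearization in one shot, around the \emph{fixed} reference strategy $\alpha_0$: for any memoryless $\alpha'$,
\[
M_{\alpha'}(a) = M_{\alpha_0}(a) + \sum_{s\in S}\Bigl(-M_{\alpha_0}(a) + \sum_{\m\in\A(s)}\alpha'(s)(\m)\,M_{\alpha_0^{s\to\m}}(a)\Bigr),
\]
which puts every $M_{\alpha'}(a)\vecv$ in the span of $\{M_\beta(a)\vecv : \beta\in\Sigma\}$ with $\Sigma=\{\alpha_0\}\cup\{\alpha_0^{s\to\m}\}$ independent of the pair $(\alpha_2,w)$. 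You instead split the argument into (memoryless $\to$ MD via convexity over the product of simplices) followed by (MD $\to$ one-deviations via the row telescoping identity around $\alpha_2$), giving the pair-dependent set $R_{\alpha_2}$; both reductions are sound and give the same $O(|S||\A|)$ bound. Your greedy test (checking independence of $M_\alpha(aw)\one$ rather than of the product $M_{\alpha_1}(a)M_{\alpha_2}(w)\one$) is also a harmless variant: by Lemma~\ref{lemma:two-strategy-composition-new-strategy-construction} the product lies in $\V_1^{k-1}+\langle M_\alpha(aw)\one\rangle$, so either test yields a linearly independent $\B(P_k)$ with $\langle\B(P_k)\rangle=\V_1^k$ and $|P_k|\le|S|$.
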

\begin{proof}[Proof sketch]
	We prove this lemma by induction on $j$. The base case where $j = 0$ is vacuously true with $P_0 = \{(\alpha_0, w_0)\}$ where $\alpha_0$ is an arbitrary MD strategy, $w_0 = \varepsilon$ and $\B(P_0) =\{ \one \}$. For the induction step, assume that we can compute in polynomial time a set $P_j  = \{(\alpha_0, w_0), \cdots, (\alpha_k, w_k)\}$ where all the strategies are MD strategies and all the words are in $L^{\le j}$ such that $\B(P_j)$ is a basis for~$\V_{1}^{j}$. We show that the statement holds for $j+1$.  Define
	\[
	\Sigma = \{ \alpha_0 \} \cup \{ \alpha_0^{s \to \m}: s \in S, \ \m \in \A(s) \} \qquad \text{and} \qquad 	\mathbb{M} = \{ M_{\alpha}(a) \in \Reals^{S \times S}: \alpha \in \Sigma,\ a \in L\}. \]
	Next, we present Algorithm~\ref{alg:polynomial-basis} which computes a set $P_{j+1}$ in polynomial time such that 
	\begin{equation}\label{eqn:M-b-product-in-B-j+1}
	\text{for all } M \in \mathbb{M} \text{ and all } \vecb \in \B(P_j): M \cdot \vecb \in <\B(P_{j+1})>
	\end{equation}
	
	\begin{algorithm}[H]
		\DontPrintSemicolon
		$P_{j+1} := P_{j}$\; \label{alg:polynomial-basis-l1}
		\ForEach{$\alpha_1 \in \Sigma,\ a\in L \text{ and } (\alpha_2,w) \in P_j$}{
			\If{$M_{\alpha_1}(a)M_{\alpha_2}(w)\one \not\in <\B(P_{j+1})>$ \label{alg:polynomial-basis-if-condition}} 
			{
				add $(\alpha, aw)$ to $P_{j+1}$ where $\alpha$ is the MD strategy defined as \label{alg:polynomial-basis-line-add-pair}
				\[
				\alpha(i) = \left \{
				\begin{array}{ll}
				\alpha_1(i) & \mbox{if $\vecc_{i} \not\in \V_{1}^{j}$}\\
				\alpha_2(i) & \mbox{otherwise.}
				\end{array}
				\right .
				\]
			}
		}
		\caption{Polynomial-time algorithm computing $P_{j+1}$.}
		\label{alg:polynomial-basis}
	\end{algorithm}
	
	All the vectors in $\B(P_{j+1})$ are linearly independent, as we only add a pair if the corresponding vector is linearly independent to the existing vectors in $\B(P_{j+1})$ (lines~\ref{alg:polynomial-basis-if-condition}-\ref{alg:polynomial-basis-line-add-pair}). Since $\B(P_j)$ is a basis for $\V_{1}^{j}$, we can decide whether $\vecc_i \in \V_{1}^{j}$ for $i \in S$ in polynomial time, and thus compute a pair $(\alpha, aw)$ on line~\ref{alg:polynomial-basis-line-add-pair} in polynomial time. Since $|\Sigma|$ and $|L|$ are polynomial in the size of the MDP, $|P_j| \ls |S|$,  the number of iterations is polynomial in the size of the MDP. The construction of $P_{j+1}$ is then in polynomial time. It remains to show that after adding $(\alpha, aw)$ to $P_{j+1}$ (line~\ref{alg:polynomial-basis-line-add-pair}), we have $M \cdot \vecb = M_{\alpha_1}(a) M_{\alpha_2}(w) \one \in <\B(P_{j+1})>$ . Since the pair $(\alpha_2, w)$ is in $P_j$, we have $w \in L^{\le j}$. Then,
	\begin{eqnarray*}
		&&M \cdot \vecb \\
		&=&M_{\alpha_1}(a) M_{\alpha_2}(w) \one \\
		&\in& <\V_{1}^{j} \cup \B(\{(\alpha, aw)\})> \commenteq{Lemma~\ref{lemma:two-strategy-composition-new-strategy-construction}}\\
		&=& <\B(P_j) \cup \B(\{(\alpha, aw)\})> \commenteq{$\B(P_j)$ is a basis for $\V_{1}^{j}$ by induction hypothesis}\\
		&=& <\B\big(P_j \cup \{(\alpha, aw)\}\big)> 
	\end{eqnarray*}
	Since $P_j \subseteq P_{j+1}$ (line~\ref{alg:polynomial-basis-l1}), we have $\B(P_j) \subseteq \B(P_{j+1})$. By adding the pair $(\alpha, aw)$ to $P_{j+1}$, we have $<\B\big(P_j \cup \{(\alpha, aw)\}\big)> \subseteq <\B(P_{j+1})>$, and thus $M \cdot \vecb \in <\B(P_{j+1})>$.
	
	Finally, we show that the set $P_{j+1}$ satisfies $\V_{1}^{j+1} = <\B(P_{j+1})>$. We have
	\begin{align*}
	<\B(P_{j+1})>
	&\subseteq \V_{3}^{j+1} && \text{for all $(\alpha, w) \in P_{j+1}: \alpha$ is an MD strategy and $w \in L^{\le j+1}$}\\
	&\subseteq \V_{1}^{j+1} && \text{from the definitions}
	\end{align*}
	
	We prove the other direction $\V_{1}^{j+1} \subseteq <\B(P_{j+1})>$ in \cref{appendix:tvInequivalence}. 
\end{proof}

Combining  classical linear algebra arguments about equivalence checking (see, e.g., \cite{Tzeng}) with Lemma~\ref{lemma:V1j-basis-polynomial-time}, we obtain:
\begin{restatable}{lemma}{lemmaEqualityVSpaces}\label{lemma:equality-of-v-spaces}\ 
	\begin{enumerate}
		\item For all $j \ls |S|$ we have $\V_{1}^{j} = \V_{2}^{j} = \V_{3}^{j}$.
		\item We have $\V_1 = \V_2 = \V_3 = \V_{1}^{|S|-1} = \V_{2}^{|S|-1} = \V_{3}^{|S|-1}$.
	\end{enumerate}
\end{restatable}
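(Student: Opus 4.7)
The plan is to leverage Lemma~\ref{lemma:V1j-basis-polynomial-time} for part~(1) and then reduce part~(2) to the classical ascending-chain argument used in probabilistic automata equivalence checking (e.g.,~\cite{Tzeng}).

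For part~(1), the containments $\V_3^j \subseteq \V_2^j \subseteq \V_1^j$ will be immediate from the definitions (an MD strategy is a special memoryless strategy, and a single memoryless strategy applied along a word yields a product of the form defining $\V_1^j$). For the reverse containment I will invoke Lemma~\ref{lemma:V1j-basis-polynomial-time}: it exhibits a set $P_j$ whose image under $\B$ is a basis of $\V_1^j$ and which consists entirely of pairs $(\alpha, w)$ with $\alpha$ an MD strategy and $w \in L^{\le j}$. Since every basis vector of $\V_1^j$ therefore lies in $\V_3^j$, I conclude $\V_1^j \subseteq \V_3^j$, and all three spaces coincide.

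For part~(2), specialising part~(1) to $j = |S|-1$ already gives $\V_1^{|S|-1} = \V_2^{|S|-1} = \V_3^{|S|-1}$, and the trivial inclusions $\V_3^{|S|-1} \subseteq \V_3 \subseteq \V_2 \subseteq \V_1$ reduce everything to proving $\V_1 \subseteq \V_1^{|S|-1}$. To this end I would study the ascending chain $\V_1^0 \subseteq \V_1^1 \subseteq \cdots$ of subspaces of $\Reals^S$. Since $\V_1^0$ already contains the nonzero vector $\one$, the dimensions take values in $\{1, \ldots, |S|\}$, so the chain can strictly grow at most $|S|-1$ times; hence $\V_1^{j^\star} = \V_1^{j^\star+1}$ for some $j^\star \le |S|-1$.

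The step I expect to need the most care is showing that once the chain stabilises at some index $j^\star$, it stays stable forever, from which $\V_1 = \V_1^{j^\star} \subseteq \V_1^{|S|-1}$ will follow. The argument is: an arbitrary generator $M_{\alpha_1}(a_1)\cdots M_{\alpha_{j^\star+2}}(a_{j^\star+2})\one$ of $\V_1^{j^\star+2}$ can be rewritten as $M_{\alpha_1}(a_1)\cdot v$ with $v \in \V_1^{j^\star+1} = \V_1^{j^\star}$; expanding $v$ as a linear combination of products of length at most $j^\star$ and then distributing the leading factor $M_{\alpha_1}(a_1)$ yields an element of $\V_1^{j^\star+1}$, and a routine induction on length extends this to all of $\V_1$. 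I do not anticipate a genuine obstacle: the novel ingredient, Lemma~\ref{lemma:V1j-basis-polynomial-time}, is already in hand, and what remains is the standard ascending-chain argument lifted from single-strategy (LMC) equivalence to the memoryless-strategy setting. The only subtlety is that the stabilisation step must handle products built from \emph{different} memoryless strategies at each position, but since these are just matrices of the fixed size $|S|\times|S|$, the same linear-algebraic reasoning applies without modification.
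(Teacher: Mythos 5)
Your proposal is correct and follows essentially the same route as the paper: part~(1) is obtained exactly as you describe, by combining the trivial inclusions with the fact that Lemma~\ref{lemma:V1j-basis-polynomial-time} provides an MD-vector basis of $\V_1^j$, and part~(2) rests on the same ascending-chain/dimension argument, with the paper packaging your ``distribute the leading factor and induct'' step as the one-line recursion $\V_1^{j+1} = \langle \mathbf{v},\, M_\alpha(a)\mathbf{v} : \alpha \text{ memoryless},\ a \in L,\ \mathbf{v} \in \V_1^j\rangle$.
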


Thus we obtain:
\begin{proposition}\label{proposition:V1-basis-polynomial-time}
	One can compute in polynomial time a set $P = \{(\alpha_0, w_0), \cdots, (\alpha_k, w_k)\}$ of MD strategy and word pairs such that $\B(P)$ is a basis of $\V_{2}$.  
\end{proposition}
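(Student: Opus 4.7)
The plan is to obtain the claim as an immediate combination of the two preceding results. Concretely, I would invoke \cref{lemma:V1j-basis-polynomial-time} with $j := |S|-1$, which is allowed since the lemma only requires $j \ls |S|$. This produces, in polynomial time, a set $P := P_{|S|-1} = \{(\alpha_0, w_0), \dots, (\alpha_k, w_k)\}$ in which every $\alpha_i$ is an MD strategy and every $w_i \in L^{\le |S|-1}$, with the property that $\B(P)$ is a basis of $\V_{1}^{|S|-1}$.

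Next I would use \cref{lemma:equality-of-v-spaces}(2), which states that $\V_{2} = \V_{1}^{|S|-1}$. Combining this equality with the previous step, $\B(P)$ is also a basis of $\V_{2}$, which is exactly the conclusion of the proposition. The polynomial running time is inherited directly from \cref{lemma:V1j-basis-polynomial-time}, since $|S|-1$ is polynomial in the size of the input MDP.

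There is essentially no obstacle here: the hard work has already been done in \cref{lemma:V1j-basis-polynomial-time} (designing Algorithm~\ref{alg:polynomial-basis} and the use of \cref{lemma:two-strategy-composition-new-strategy-construction} to show that the constructed MD strategies suffice to span $\V_1^{j+1}$) and in \cref{lemma:equality-of-v-spaces} (collapsing the three hierarchies $\V_1^j, \V_2^j, \V_3^j$ and stabilising them at $j = |S|-1$ via the standard Tzeng-style dimension argument). The proof of the proposition is thus just a two-line application of these results, essentially recording that setting $j = |S|-1$ bridges the polynomial-time basis computation for $\V_1^j$ with the identification $\V_{1}^{|S|-1} = \V_{2}$.
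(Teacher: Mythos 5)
Your proposal is correct and matches the paper's proof exactly: invoke Lemma~\ref{lemma:V1j-basis-polynomial-time} with $j = |S|-1$ and combine with Lemma~\ref{lemma:equality-of-v-spaces}(2), which gives $\V_2 = \V_1^{|S|-1}$. Nothing to add.
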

\begin{proof}
	By Lemma~\ref{lemma:equality-of-v-spaces} it suffices to invoke Lemma~\ref{lemma:V1j-basis-polynomial-time} for $j = |S|-1$.
\end{proof}

Now we can prove the main theorem of this section.
\begin{theorem}\label{theorem:trace-inequivalence-polynomial}\
	The problem $\TVneqZERO$ is in {\sf P}.
	Further, for any positive instance of the problem $\TVneqZERO$, we can compute in polynomial time an MD strategy~$\alpha$ and a word~$w$ that witness $\mu \not\equiv \nu$, i.e., $\Pr_{\mu, \D(\alpha)} (\Run(w)) \not= \Pr_{\nu, \D(\alpha)} (\Run(w))$.
\end{theorem}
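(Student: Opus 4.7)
The plan is to reduce the decision problem $\TVneqZERO$ to a single linear-algebraic check against a polynomially computable basis of $\V_2$, and to extract the witnessing MD strategy and word directly from that basis. By \cref{proposition:TVneqZERO-and-vector-space}, the instance is positive if and only if there exists $\vecv \in \V_2$ with $\mu\vecv \ne \nu\vecv$. Since the condition $\mu\vecv \ne \nu\vecv$ is preserved under linear combination (in the sense that if $\mu\vecv = \nu\vecv$ for every $\vecv$ in a spanning set, then it holds for every $\vecv$ in the span), this is equivalent to the existence of a basis vector of $\V_2$ that separates $\mu$ and $\nu$.

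Next, I would invoke \cref{proposition:V1-basis-polynomial-time} to compute in polynomial time a set $P = \{(\alpha_0, w_0), \ldots, (\alpha_k, w_k)\}$ of MD-strategy and word pairs such that $\B(P) = \{M_{\alpha_0}(w_0)\one, \ldots, M_{\alpha_k}(w_k)\one\}$ is a basis of $\V_2$. Then, for each $i = 0, \ldots, k$, I would compute the scalars $\mu M_{\alpha_i}(w_i)\one$ and $\nu M_{\alpha_i}(w_i)\one$ and compare them; note that each of these is in fact $\Pr_{\mu, \D(\alpha_i)}(\Run(w_i))$ and $\Pr_{\nu, \D(\alpha_i)}(\Run(w_i))$, respectively, by the definition of the probability measure associated with an LMC.

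If no $i$ yields a separation, then $\mu\vecv = \nu\vecv$ on a basis of $\V_2$, hence on all of $\V_2$, so by \cref{proposition:TVneqZERO-and-vector-space} the instance is negative. If some index $i$ does yield $\mu M_{\alpha_i}(w_i)\one \ne \nu M_{\alpha_i}(w_i)\one$, then $(\alpha_i, w_i)$ is an MD strategy together with a word in $L^{\le |S|-1}$ that witnesses $\mu \not\equiv \nu$ in $\D(\alpha_i)$, as required by the second part of the statement. All steps are polynomial in the size of the MDP: computing $P$ by \cref{proposition:V1-basis-polynomial-time}, and performing $O(|S|)$ scalar comparisons with matrix-vector products already encoded in $\B(P)$.

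The only genuine difficulty lies in producing the basis of $\V_2$ with MD-witnesses, and that difficulty has already been discharged by \cref{lemma:V1j-basis-polynomial-time,lemma:equality-of-v-spaces,proposition:V1-basis-polynomial-time}; after them, the theorem reduces to an essentially routine linear-algebra observation combined with the identification $\mu M_{\alpha}(w)\one = \Pr_{\mu, \D(\alpha)}(\Run(w))$. Thus the proof will be short, amounting to citing the preceding propositions and spelling out the separation argument on a basis.
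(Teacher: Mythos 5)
Your proposal is correct and follows essentially the same route as the paper: it combines \cref{proposition:TVneqZERO-and-vector-space} with \cref{proposition:V1-basis-polynomial-time}, computes the MD-vector basis $\B(P)$ of $\V_2$, and checks separation of $\mu$ and $\nu$ against each basis vector to decide the instance and extract the witnessing pair $(\alpha,w)$. The extra spelled-out observations (that separation on a spanning set suffices, and that $\mu M_\alpha(w)\one = \Pr_{\mu,\D(\alpha)}(\Run(w))$) match what the paper leaves implicit.
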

\begin{proof}
	A polynomial algorithm follows naturally from \cref{proposition:V1-basis-polynomial-time} and  \cref{proposition:TVneqZERO-and-vector-space}.  We first compute a set $P$ of MD strategy and word pairs such that $\B(P)$ is a basis for $\V_2$. For each $\vecb \in \B(P)$, we check whether $\mu \vecb \not= \nu\vecb$ and output ``yes'' indicating a positive instance if the inequality holds. Otherwise, we have $\mu \vecb = \nu\vecb$ for all $\vecb \in \B(P)$, and the algorithm outputs ``no'' indicating that $\mu \equiv \nu$ holds for all memoryless strategies.
	
	If the instance is positive, there exists a vector $\vecb \in \B(P)$ such that $\mu\vecb \not= \nu\vecb$. Since $\vecb$ is an MD vector which corresponds to a pair $(\alpha, w) \in P$, we have $\mu M_{\alpha}(w)\one \not= \nu M_{\alpha}(w)\one$, equivalently $\Pr_{\mu, \D(\alpha)} (\Run(w)) \not= \Pr_{\nu, \D(\alpha)} (\Run(w))$. 
\end{proof} 

\section{Probabilistic Bisimulation Inequivalence} \label{section:pbInequivalence}
In this section we show that one can decide in polynomial time whether there exists a memoryless strategy~$\alpha$ so that $s \not\sim t$ in $\mathcal{D}(\alpha)$, i.e., we show that $\PBneqZERO$ is in {\sf P}.

\begin{figure}[t]
	\centering
	\tikzstyle{BoxStyle} = [draw, circle, fill=black, scale=0.4,minimum width = 1pt, minimum height = 1pt]
	
	\begin{tikzpicture}[xscale=.6,>=latex',shorten >=1pt,node distance=3cm,on grid,auto]
	
	
	\node[state] (s) at (-1,0) {$s$};
	\node[BoxStyle](bs) at (0.5, 0){};

	\node[state] (t) at (-1, -2.8) {$t$};
	\node[BoxStyle] (bt) at (0.5,-2.8){};
	
	\node[state] (sa) at (4,0.7) {$s_a$};
	\node[BoxStyle] (bsa) at (5.5, 0.08){};
	\node[state] (sb) at (4,-0.7) {$s_b$};
	\node[BoxStyle] (bsb) at (5.5, -0.7){};
	
	\node[state] (ta) at (4,-2.1) {$t_a$};
	\node[BoxStyle] (bta) at (5.5, -2.1){};
	\node[state] (tb) at (4,-3.5) {$t_b$};
	\node[BoxStyle] (btb) at (5.5, -3.5){};
	
	\node[state] (q1) at (9,0.7) {$q_1$};
	\node[BoxStyle]  at (10.5, 0.7){};
	\node[state] (q2) at (9,-1.4) {$q_2$};
	\node[BoxStyle] at (10.5, -0.6){};
	\node at (10.8, -0.9){$\m_1$};
	\node[BoxStyle] at (10.5, -2.2){};
	\node at (10.8, -1.9){$\m_2$};
	
	\node[state] (q3) at (9,-3.5) {$q_3$};
	\node[BoxStyle]  at (10.5, -3.5){};
	
	\node[state] (u) at (13,0.7) {$u$};
	\node[BoxStyle]  at (14.25, 0.7){};
	\node[state, fill=green] (v) at (13,-3.5) {$v$};
	\node[BoxStyle]  at (14.25, -3.5){};
	
	\path[-] (s) edge node [midway, right] {} (bs);
	\path[->] (bs) edge node [near start, above] {} (sa);
	\path[->] (bs) edge node [near start, below] {} (sb);
	
	\path[-] (t) edge node [midway, right] {} (bt);
	\path[->] (bt) edge node [near start, above] {} (ta);
	\path[->] (bt) edge node [near start, below] {} (tb);
	
	\path[->] (sa) edge node [near start, right] {} (q2); 
	\path[-] (sb) edge node [midway, right] {} (bsb);
	\path[-] (ta) edge node [midway, right] {} (bta);
	\path[-] (tb) edge node [midway, right] {} (btb);
	
	\path[->] (bsb) edge node [near start, above] {} (q1); 
	\path[->] (bsb) edge node [near start, below] {} (q3); 
	
	\path[->] (bta) edge node [near start, above] {} (q1); 
	\path[->] (bta) edge node [near start, below] {} (q2); 
	
	\path[->] (btb) edge node [near start, above] {} (q2); 
	\path[->] (btb) edge node [near start, below] {} (q3); 
	
	\path[->] (q1) edge node [near start, above] {} (u); 
	\path[->] (q2) edge node [midway, right] {} (u);
	\path[->] (q2) edge node [midway, right] {} (v);
	\path[->] (q3) edge node [near start, above] {} (v); 
	\path[->] (u) edge [out=15,in=-15,looseness=8] node [midway, right] {} (u);
	\path[->] (v) edge [out=15,in=-15,looseness=8] node [midway, right] {} (v);
	
	\end{tikzpicture}
	\caption{In this MDP no MD strategy witnesses $s \not\sim t$. All states have the same label except state $v$.
		By default the transition probabilities out of each action are uniformly distributed.
	}
	\label{fig:noMDstrategyPBneqZero} 
\end{figure}
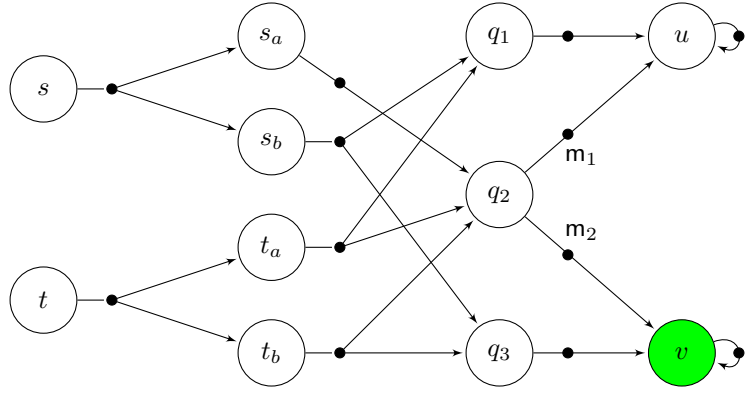

For some MDPs, there might be memoryless strategies such that $s \not\sim t$ in the induced LMC but no such strategy is MD. The MDP in \cref{fig:noMDstrategyPBneqZero} is such an example. Similar to the $or$-gate construction of \cite[Theorem~2]{CBW2012}, we have $s \sim t$ if and only if $q_1 \sim q_2$ or $q_2 \sim q_3$. We have $q_2 \sim q_1$ if the MD strategy maps $q_2$ to the action that goes to state $u$, otherwise $q_2 \sim q_3$ if the MD strategy maps $q_2$ to the action that goes to state $v$. This rules out the algorithm that goes through all the MD strategies.


We define an equivalence relation and run the classical polynomial-time partition refinement as shown in Algorithm~\ref{alg:polynomial-optimistic-partition-refinement}, with an equivalence relation $\mathord{\equiv_X}$ defined below. At the beginning, all states are in the same equivalence class. In a refinement step, a pair of states is split if there \emph{could} exist a memoryless strategy that makes them not probabilistic bisimilar.
Two states $s,t$ remain in the same equivalence class until the end if and only if they are probabilistic bisimilar under all memoryless strategies.

\begin{algorithm}[t]
	\DontPrintSemicolon
	$i = 0; X_0 := \{S\}$\; 
	\Repeat{$X_i = X_{i-1}$}{
		$i := i+1$\;
		$X_{i} := S/\mathord{\equiv_{X_{i-1}}}$ \;
	}
	\caption{\mbox{Partition Refinement}}
	\label{alg:polynomial-optimistic-partition-refinement}
\end{algorithm}
\begin{table}[b]
	\begin{tabular}{l}
		\hlinewd{0.5pt}
		$X_0 = \{S\}$ \\
		$X_1 = \big\{  \{v\} , S\setminus\{v\} \big\}$\\
		$ X_2= \big\{ \{v\}, \{q_2\} , \{q_3\},  S\setminus\{v, q_2, q_3\} \big\}$\\
		$X_3= \big\{ \{v\}, \{q_2\} , \{q_3\}, \{s_a\} , \{s_b\}, \{t_a\} , \{t_b\},\{s, t, q_1, u\} \big\}$\\
		$X_4= \big\{ \{v\}, \{q_2\} , \{q_3\}, \{s_a\} , \{s_b\}, \{t_a\} , \{t_b\}, \{s\}, \{t\} , \{q_1, u\}$ \big\}\\
		\hlinewd{0.5pt}
	\end{tabular}
	\caption{Example of running Algorithm~\ref{alg:polynomial-optimistic-partition-refinement} on the MDP in \cref{fig:noMDstrategyPBneqZero}. } \label{tab:example-partition-refinement}
\end{table}

The correctness of this approach is not obvious, as some splits that occurred in different iterations of the algorithm may have been due to different, potentially contradictory, memoryless strategies.
Furthermore, the algorithm does not compute a memoryless strategy that witnesses $s \not\sim t$.
The key to solving both problems will be \cref{lemma:partial-strategy-construction}.

A partition of the states $S$ is a set $X$ consisting of pairwise disjoint subsets $E$ of $S$ with $\bigcup_{E \in X} = S$. Recall that $\varphi(s, \m)(s')$ is the transition probability from $s$ to $s'$ when choosing action $\m$. Similarly, $\varphi(s, \m)(E)$ is the transition probability from $s$ to $E \subseteq S$ when choosing action $\m$. We write $\varphi(s, \m)(X)$ to denote the vector (probability distribution) $(\varphi(s, \m)(E))_{E \in X}$. We define $\varphi(s)(X) = \{\varphi(s, \m)(X): \m \in \A(s)\}$, which is a set of probabilistic distributions over the partition $X$ when choosing all available actions of $s$. Each partition is associated with an equivalence relation $\mathord{\equiv_{X}}$ on $S$: $s \equiv_{X} s'$ if and only if 
\begin{itemize}
	\item[-]$\ell(s) = \ell(s')$;
	\item[-] $s \not= s' \implies |\varphi(s)(X)| = 1$ and $\varphi(s)(X)= \varphi(s')(X)$.
\end{itemize} 

Let $S / \mathord{\equiv_{X}}$ denote the set of equivalence classes with respect to $\mathord{\equiv_{X}}$, which forms a partition of $S$. We present in \cref{tab:example-partition-refinement} the partitions of running the algorithm on the MDP in \cref{fig:noMDstrategyPBneqZero}. Notice that states $s$ and $t$ are no longer in the same equivalence class at the end. 
The following lemma is standard, and claims that the partition gets finer.
\begin{restatable}{lemma}{lemmaPartitionInvariant}\label{lemma:partition-gets-finer}
	For all $i \in \nat$, we have $\mathord{\equiv_{X_{i+1}}} \subseteq \mathord{\equiv_{X_{i}}}$.
\end{restatable}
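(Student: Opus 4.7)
The plan is to prove the lemma by induction on $i$. The key tool I will use is a general \emph{projection observation}: whenever a partition $X'$ of $S$ refines another partition $X$, one has $\mathord{\equiv_{X'}} \subseteq \mathord{\equiv_{X}}$. Once this is established, the induction is short.

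To establish the projection observation, I would argue as follows. When $X'$ refines $X$, every class $E \in X$ is a disjoint union of some classes of $X'$, so for any $s \in S$ and $\m \in \A(s)$ we have $\varphi(s, \m)(E) = \sum_{E' \in X',\, E' \subseteq E} \varphi(s, \m)(E')$. Hence the distribution $\varphi(s, \m)(X)$ is a deterministic linear function of $\varphi(s, \m)(X')$. Consequently $|\varphi(s)(X')| = 1$ implies $|\varphi(s)(X)| = 1$, and $\varphi(s)(X') = \varphi(s')(X')$ implies $\varphi(s)(X) = \varphi(s')(X)$. Since the label condition $\ell(s) = \ell(s')$ appears identically in the definition of both equivalence relations, this yields $\mathord{\equiv_{X'}} \subseteq \mathord{\equiv_{X}}$.

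For the induction itself, I would handle the base case $i = 0$ separately. Since $X_0 = \{S\}$, we have $\varphi(s)(X_0) = \{\delta_S\}$ for every state $s$, so $\mathord{\equiv_{X_0}}$ collapses to pure label equality; as $\mathord{\equiv_{X_1}}$ also requires $\ell(s) = \ell(s')$, the inclusion $\mathord{\equiv_{X_1}} \subseteq \mathord{\equiv_{X_0}}$ is immediate. For the inductive step, assume $\mathord{\equiv_{X_{i+1}}} \subseteq \mathord{\equiv_{X_i}}$. By the defining equation $X_{i+2} = S/\mathord{\equiv_{X_{i+1}}}$, every class of $X_{i+2}$ is an $\mathord{\equiv_{X_{i+1}}}$-class, which by the induction hypothesis lies inside an $\mathord{\equiv_{X_i}}$-class, hence inside a class of $X_{i+1}$. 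So $X_{i+2}$ refines $X_{i+1}$, and the projection observation delivers $\mathord{\equiv_{X_{i+2}}} \subseteq \mathord{\equiv_{X_{i+1}}}$.

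The main obstacle I anticipate is the asymmetry at the base: $X_0$ does not respect labels even though all later $X_i$ do, so the projection observation cannot be invoked uniformly at $i = 0$. I sidestep this by treating $i = 0$ by hand using the collapse of $\mathord{\equiv_{X_0}}$ to label equality. The only other point worth checking is that $\mathord{\equiv_X}$ really is an equivalence relation (transitivity is the non-trivial bit), which follows by splitting into the cases where the three states involved are distinct or not.
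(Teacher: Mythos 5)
Your proof is correct and takes essentially the same approach as the paper: an induction on $i$ where the inductive step reduces to the observation that when $X_{i+2}$ refines $X_{i+1}$, transition probabilities into $X_{i+1}$-blocks are sums of probabilities into $X_{i+2}$-blocks, so agreement over $X_{i+2}$ forces agreement over $X_{i+1}$; the paper just performs this ``projection'' argument inline rather than isolating it as a lemma. One small remark: the asymmetry you worried about at the base case is not actually present, since the label condition is built into $\mathord{\equiv_X}$ uniformly for every partition $X$ and $X_1$ trivially refines $X_0 = \{S\}$, so your projection observation would have applied at $i=0$ as well; your hand treatment of the base case is harmless but unnecessary.
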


If the loop in Algorithm~\ref{alg:polynomial-optimistic-partition-refinement} is performed $|S|-1$ times then $X_{|S|-1}$ consists of $|S|$ one-element sets. Hence at most after $|S|-1$ refinement steps the partition $X_i$ cannot be refined.
We aim at proving that $s \equiv_{X_{|S|-1}} t$ if and only if $s \sim_{\D(\alpha)} t$ for all memoryless strategies $\alpha$. In the following lemma we show the forward direction:

\begin{restatable}{lemma}{lemmaXBisimulation} \label{lemma:X-is-a-bisimulation}
	Let $X$ be a partition and $X = S/\mathord{\equiv_{X}}$. We have $\mathord{\equiv_{X}} \subseteq \mathord{\sim_{\D(\alpha)}}$ for all memoryless strategies $\alpha$. 
\end{restatable}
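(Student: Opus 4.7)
The plan is to show directly that $\equiv_X$ is itself a probabilistic bisimulation in the LMC $\D(\alpha)$, for every memoryless strategy $\alpha$. Since $\sim_{\D(\alpha)}$ is defined as the \emph{largest} probabilistic bisimulation on $\D(\alpha)$, this will immediately give $\mathord{\equiv_X} \subseteq \mathord{\sim_{\D(\alpha)}}$.

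To carry this out, I would first observe that the hypothesis $X = S/\mathord{\equiv_X}$ means that the equivalence classes of $\equiv_X$ are exactly the blocks $E \in X$. In particular, checking that $\equiv_X$ is a bisimulation in $\D(\alpha)$ reduces to verifying, for every pair $s \equiv_X t$ and every block $E \in X$, that $\ell(s) = \ell(t)$ and $\tau(s)(E) = \tau(t)(E)$, where $\tau$ is the transition function of $\D(\alpha)$.

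The label condition is immediate from the first clause in the definition of $\equiv_X$. For the transition probabilities, the case $s = t$ is trivial, so assume $s \neq t$. By definition of $\equiv_X$, the set $\varphi(s)(X) = \{\varphi(s,\m)(X) : \m \in \A(s)\}$ is a singleton, and it equals the (also singleton) set $\varphi(t)(X)$. Write $d \in \Dist(X)$ for the unique common distribution. Then, unfolding the definition of the induced transition function and pulling the constant $d(E)$ out of the sum, one gets
\[
\tau(s)(E) \;=\; \sum_{\m \in \A(s)} \alpha(s)(\m)\cdot \varphi(s,\m)(E) \;=\; d(E)\sum_{\m \in \A(s)} \alpha(s)(\m) \;=\; d(E),
\]
and symmetrically $\tau(t)(E) = d(E)$, so the two agree. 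Hence $\equiv_X$ is a probabilistic bisimulation in $\D(\alpha)$, and the claim follows.

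There is no real obstacle here; the only subtle point is recognising that the assumption $X = S/\mathord{\equiv_X}$ is precisely what makes the blocks of $\equiv_X$ coincide with the blocks $X$ that appear in the definition of $\equiv_X$, so that the bisimulation condition (equality of probabilities on each $\equiv_X$-class) becomes literally the equality of the distributions $\varphi(s)(X)$ and $\varphi(t)(X)$. The fact that the strategy $\alpha$ plays no role in the argument, because any convex combination of a singleton set of distributions is that distribution, is what allows the single relation $\equiv_X$ to work uniformly for all memoryless strategies.
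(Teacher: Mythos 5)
Your proposal is correct and follows essentially the same route as the paper's proof: show that $\equiv_X$ is itself a probabilistic bisimulation in $\D(\alpha)$ by checking labels and by observing that when $s \neq t$ the common distribution $\varphi(s)(X) = \varphi(t)(X)$ over blocks makes the induced transition probabilities agree regardless of the strategy $\alpha$. The computation $\tau(s)(E) = d(E)\sum_{\m}\alpha(s)(\m) = d(E)$ is exactly the step the paper performs.
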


For the converse, to guarantee $\equiv_{X_{|S|-1}} $ is not too fine, it suffices to show that there exists a memoryless strategy $\alpha'$ such that $\mathord{\sim_{\D(\alpha')}} \subseteq \mathord{\equiv_{X}}$ where $X = S/\mathord{\equiv_{X}}$. To do that, we define the equivalence relations $\mathord{\sim_{\D(\alpha)}^{i}}$ with $0 \le i \le |S|$ for all memoryless strategies $\alpha$. 

Let $\alpha$ be a memoryless strategy. Let $\tau$ be the transition function for the LMC $\D(\alpha)$. Define the equivalence relation $\mathord{\sim_{\D(\alpha)}^{i}}$ with $0 \le i \le |S|$ on $S$: $s \sim_{\D(\alpha)}^{i} s'$ if and only if 
\begin{itemize}
	\item[-]$\ell(s) = \ell(s')$;
	\item[-] $i \gr 0 \implies \tau(s)(E)= \tau(s')(E)$ for all $E \in S / \mathord{\sim_{\D(\alpha)}^{i-1}}$.
\end{itemize}

Note that for the LMC $\D(\alpha)$, we have $\mathord{\sim_{\D(\alpha)}^{i+1}} \subseteq \mathord{\sim_{\D(\alpha)}^{i}}$ for all $i \in \nat$ and $\mathord{\sim_{\D(\alpha)}^{|S|-1}}$ is the probabilistic bisimilarity for the LMC $\D(\alpha)$ (see, e.g., \cite{Baier1996}).

Since the witness strategy might not be MD, we compute a set of prime numbers that can be used to form the weights of the actions. The prime numbers are used to rule out certain ``accidental'' bisimulations. We denote by $\size(\D)$ the size of the representation of an object $\D$. We represent
rational numbers as quotients of integers written in binary.


For $u \in S$, $\m \in \A(u)$ and $E \subseteq S$, we express $\varphi(u, \m)(E) $ as an irreducible fraction $\frac{a_{u, \m, E}}{b_{u, \m, E}}$ where $a_{u, \m, E}$ and $b_{u, \m, E}$ are coprime integers. Similarly, for $u \in S$, $\m_1, \m_2 \in \A(u)$ and $E \subseteq S$,  $\varphi(u, \m_1)(E)-\varphi(u, \m_2)(E)$ is expressed as an irreducible fraction $\frac{c_{u,\m_1,\m_2,E}}{d_{u,\m_1,\m_2,E}}$ that $c_{u,\m_1,\m_2,E}$ and $d_{u,\m_1,\m_2,E}$ are coprime integers. Let $N \subseteq \mathbb{N}$ be the following set: 
\begin{eqnarray*}N &= &\{b_{u, \m, E}: u \in S,\ \m \in \A(u) \text{ and } E \in \textstyle\bigcup_{i} X_i\} \cup \\
	&& \{c_{u,\m_1,\m_2,E}: u \in S,\ \m_1,\m_2 \in \A(u),\ E \in \textstyle\bigcup_{i} X_i \text{ and }  c_{u,\m_1,\m_2,E} \gr 0\}.
\end{eqnarray*}
%
We denote by $\theta(x)$ the number of different prime factors of a positive integer $x$, and by $\theta(N)$ the number of different prime factors in~$N$ where $N$ is a set of positive integers.

\begin{restatable}{lemma}{lemmaPolyPrimeFactorsN}\label{lemma:prime-factors-in-N-polynomial}
	$\theta(N)$ is polynomial in $\size(\D)$.
\end{restatable}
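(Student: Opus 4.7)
The plan is to bound $\theta(N)$ by the simple estimate $\theta(N) \le \sum_{x \in N} \theta(x)$, and then to show separately that (i) the cardinality of $N$ is polynomial in $\size(\D)$, and (ii) every element $x \in N$ has bit length polynomial in $\size(\D)$, so that $\theta(x) \le \log_2 x$ is polynomial as well.

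For (i), the crucial input is a bound on $|\bigcup_i X_i|$. Since Lemma~\ref{lemma:partition-gets-finer} says the partitions produced by Algorithm~\ref{alg:polynomial-optimistic-partition-refinement} only get finer, and the finest partition consists of $|S|$ singletons, the algorithm performs at most $|S|-1$ refinements. Each partition has at most $|S|$ classes, so $|\bigcup_i X_i| \le |S|^2$. Combined with $|u| \le |S|$, $|\m|,|\m_1|,|\m_2| \le |\A|$, this yields $|N| \le |S|^3 |\A| + |S|^3 |\A|^2$, which is polynomial in $\size(\D)$.

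For (ii), the input probabilities $\varphi(u,\m)(s')$ are rationals $p_{s'}/q_{s'}$ with $\log p_{s'}, \log q_{s'} \le \size(\D)$. The quantity $\varphi(u,\m)(E) = \sum_{s' \in E} p_{s'}/q_{s'}$ is a sum of at most $|S|$ such rationals, so when placed over the common denominator $\prod_{s' \in E} q_{s'}$, both numerator and denominator have bit length at most $|S| \cdot \size(\D)$. Reducing to lowest terms can only decrease the absolute values, so $\log b_{u,\m,E}$ is polynomial in $\size(\D)$; the same bound applies to the numerator $c_{u,\m_1,\m_2,E}$ of the difference $\varphi(u,\m_1)(E)-\varphi(u,\m_2)(E)$, since the unreduced numerator $p_1 q_2 - p_2 q_1$ (using the two common-denominator expressions) still has polynomial bit length. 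Since a positive integer $x$ has at most $\log_2 x$ distinct prime factors, $\theta(x)$ is polynomial in $\size(\D)$ for every $x \in N$.

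Multiplying (i) and (ii) gives the result. I do not expect any serious obstacle: the statement is really a bookkeeping exercise, and the only point worth stating carefully is that every operation performed to define elements of $N$ (summing at most $|S|$ rationals, taking a difference, reducing a fraction) blows up bit length by at most a polynomial factor, and that only polynomially many partitions $X_i$ are ever encountered.
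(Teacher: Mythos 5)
Your proof is correct and follows essentially the same route as the paper: bound $|N|$ via $|\bigcup_i X_i|\le|S|^2$, bound $\theta(b)\le\log_2 b$, and multiply. The one place you go beyond the paper's writeup is in spelling out why each $b\in N$ (and each $c_{u,\m_1,\m_2,E}$) has bit length polynomial in $\size(\D)$ --- the paper asserts that $\log b$ is polynomial in $\size(\D)$ essentially without argument, while you justify it by bounding the common-denominator representation of $\varphi(u,\m)(E)$ as a sum of at most $|S|$ input rationals; this is a real, if minor, gap in the paper's exposition that your version closes.
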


Using the prime number theorem, we obtain the following lemma which guarantees that one can find $|S|$ extra different prime numbers other than the prime factors in $N$ in time polynomial in $\size(\D)$. 

\begin{restatable}{lemma}{lemmaPolyPrimeFactors}\label{lemma:extra-prime-factors-polynomial}
	One can find $|S|$ different prime numbers in time polynomial in $\size(\D)$ such that any of them is coprime to all numbers in the set $N$. 
\end{restatable}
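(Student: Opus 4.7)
The plan is to enumerate enough small primes via a classical sieve and then discard those that happen to occur as prime factors of elements of~$N$. Concretely, set $K := |S| + \theta(N)$. By \cref{lemma:prime-factors-in-N-polynomial}, $\theta(N)$ is polynomial in $\size(\D)$, and since $|S| \le \size(\D)$, the number $K$ itself is polynomial in $\size(\D)$.

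By the prime number theorem (already the elementary Chebyshev lower bound $\pi(x) = \Omega(x/\log x)$ suffices), the $K$-th prime $p_K$ satisfies $p_K = O(K \log K)$ and is therefore bounded by some polynomial in $\size(\D)$. Running the sieve of Eratosthenes up to this bound produces the list $p_1 < p_2 < \cdots < p_K$ in time polynomial in $\size(\D)$. For each pair $(p_i, n) \in \{p_1, \ldots, p_K\} \times N$, I would then test whether $p_i \mid n$; since both $|N|$ and the binary length of each $n \in N$ are polynomial in $\size(\D)$, all of these divisibility tests together cost only polynomial time.

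A prime $p_i$ fails to be coprime to every element of $N$ precisely when it appears as a prime factor of some $n \in N$, and there are at most $\theta(N)$ such ``bad'' primes among $p_1, \ldots, p_K$. Hence at least $K - \theta(N) = |S|$ of these primes pass every test, and one returns any $|S|$ of them. The only nontrivial ingredient is the bound $p_K = O(K \log K)$ coming from the prime number theorem; everything else is an elementary counting argument combined with standard algorithmics, so I do not foresee any real obstacle.
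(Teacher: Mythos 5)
Your proof is correct and follows essentially the same approach as the paper: both bound the number of primes needed by $|S| + \theta(N)$, argue via a prime-counting estimate (you use the Chebyshev/PNT bound, the paper uses Rosser's explicit bound $\pi(x) > x/(\log x + 2)$ for $x > 55$) that the first $|S| + \theta(N)$ primes all lie below a polynomial bound, enumerate them in polynomial time (you via the sieve of Eratosthenes, the paper via AKS primality testing), and conclude by the same counting argument that at most $\theta(N)$ of them can fail the coprimality test.
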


To each $u \in S$, we assign a different prime number $p_u$ that is coprime with all $b \in N$. This can be done in polynomial time by \cref{lemma:extra-prime-factors-polynomial}. We have 
\begin{equation}\label{eqn:prime-number-coprime-with-N}
p_u \nmid b \text{ for all } b\in N \quad \text{ and } \quad u\not=v \implies p_u \not= p_v \text{ for all } u, v \in S 
\end{equation} 

We define a partial memoryless strategy for $\D$ to be a partial function $\alpha': S \pfun \Dist(\A)$ that, given a state $s \in S$, returns $\alpha'(s) \in \Dist(\A(s))$ if $\alpha'(s)$ is defined. A memoryless strategy $\alpha$ is compatible with a partial memoryless strategy $\alpha'$, written as $\alpha \sqsupseteq \alpha'$, if and only if $\alpha(s) = \alpha'(s)$ for all $s$ such that $\alpha'(s)$ is defined. We construct the partial memoryless strategy iteratively.

\begin{restatable}{lemma}{lemmaPartialStrategyConstruction}\label{lemma:partial-strategy-construction}
	Let $i \in \nat$ with $i \le |S|$. One can compute in polynomial time a partial strategy $\alpha'_i$ such that $\mathord{\sim^{i}_{\D(\alpha)}} \subseteq \mathord{\equiv_{X_i}}$ for all $\alpha \sqsupseteq \alpha'_i$.
\end{restatable}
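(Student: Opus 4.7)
The proof proceeds by induction on $i$. For the base case $i = 0$ we have $X_0 = \{S\}$, so $\varphi(s,\m)(X_0) = 1$ for every available action $\m$; the side conditions $|\varphi(s)(X_0)|=1$ and $\varphi(s)(X_0) = \varphi(s')(X_0)$ are therefore automatic, and both $\equiv_{X_0}$ and $\sim^0_{\D(\alpha)}$ reduce to label-equality on $S$. Taking $\alpha'_0$ to be the nowhere-defined partial strategy then gives the desired containment trivially, in constant time.

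For the inductive step, suppose $\alpha'_i$ has been constructed. I would obtain $\alpha'_{i+1}$ by extending $\alpha'_i$ at every state $u$ where $\alpha'_i(u)$ is undefined, assigning $\alpha'_{i+1}(u)$ a specific rational distribution over $\A(u) = \{\m_1,\ldots,\m_{k_u}\}$ whose weights are built from the prime tag $p_u$ supplied by \cref{eqn:prime-number-coprime-with-N}; concretely one may take $\alpha'_{i+1}(u)(\m_j) = p_u^{j-1}/D_u$ with $D_u = 1 + p_u + \cdots + p_u^{k_u-1}$. Given any $\alpha \sqsupseteq \alpha'_{i+1}$ (hence $\alpha \sqsupseteq \alpha'_i$) and any pair $s \sim^{i+1}_{\D(\alpha)} s'$ with $s \neq s'$, the induction hypothesis $\mathord{\sim^i_{\D(\alpha)}} \subseteq \mathord{\equiv_{X_i}}$ implies that each $E \in X_{i+1} = S/\mathord{\equiv_{X_i}}$ is a union of $\sim^i_{\D(\alpha)}$-classes; combining this with the definition of $\sim^{i+1}_{\D(\alpha)}$ yields $\ell(s) = \ell(s')$ and $\tau(s)(E) = \tau(s')(E)$ for every $E \in X_{i+1}$.

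It remains to deduce $|\varphi(s)(X_{i+1})|=1$ and $\varphi(s)(X_{i+1}) = \varphi(s')(X_{i+1})$, and this is exactly the role of the prime tags. Plugging the prescribed $\alpha(s), \alpha(s')$ into each equality $\tau(s)(E) = \tau(s')(E)$ and clearing denominators produces an integer identity whose two sides are polynomial-like expressions in $p_s$ and in $p_{s'}$ with coefficients built from the rationals $\varphi(s,\m_j)(E) = a_{s,\m_j,E}/b_{s,\m_j,E}$ and $\varphi(s',\m'_j)(E)$. Using \cref{eqn:prime-number-coprime-with-N} — $p_s \neq p_{s'}$ are distinct primes, each coprime to every $b_{u,\m,E}$ and to every nonzero $c_{u,\m_1,\m_2,E}$ in $N$ — a comparison of $p_s$-adic and $p_{s'}$-adic valuations on both sides forces all cancellations to be structural: every $\varphi(s,\m_j)(E)$ must collapse to a single common value, which must then also coincide with every $\varphi(s',\m'_j)(E)$, giving $s \equiv_{X_{i+1}} s'$. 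Polynomial-time computability of $\alpha'_{i+1}$ is immediate from that of $\alpha'_i$ together with the polynomial size of the $p_u$ from \cref{lemma:extra-prime-factors-polynomial}, and iterating up to $i = |S|$ keeps the total work polynomial.

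The principal obstacle is the valuation step of the third paragraph: precisely justifying why the specific prime weights force both equalities simultaneously. The coprimality with the $b_{u,\m,E}$ keeps the denominators "clean" of $p_s$ and $p_{s'}$, while the coprimality with the nonzero $c_{u,\m_1,\m_2,E}$ (the second component of $N$) is what rules out the most subtle form of accident, namely two distinct action-level distributions at a single state producing the same $\tau$-value and thus mimicking the required condition $|\varphi(u)(X_{i+1})|=1$.
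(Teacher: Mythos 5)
Your base case and the inductive framework (using the IH to reduce $X_{i+1}$-classes to unions of $\sim^i_{\D(\alpha)}$-classes, so that $s \sim^{i+1}_{\D(\alpha)} s'$ yields $\tau(s)(E)=\tau(s')(E)$ for every $E\in X_{i+1}$) match the paper, but the heart of the argument — the prime-weighting scheme — does not work as you have set it up, and the step you flag as the ``principal obstacle'' is a genuine gap, not just a loose end.

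The weights $\alpha(u)(\m_j)=p_u^{j-1}/D_u$ put the prime $p_u$ in the \emph{numerator} of every weight, and $D_u=1+p_u+\cdots+p_u^{k_u-1}\equiv 1\pmod{p_u}$ is coprime to $p_u$. Consequently $\tau(u)(E)=\frac{1}{D_u}\sum_j p_u^{j-1}\varphi(u,\m_j)(E)$ always has $p_u$-adic valuation $\ge 0$: $p_u$ \emph{never} divides the reduced denominator of $\tau(u)(E)$, regardless of whether the $\varphi(u,\m_j)(E)$ agree or disagree. So the ``mark'' you want the prime to leave is simply not there, and no comparison of $p_s$- and $p_{s'}$-adic valuations on $\tau(s)(E)=\tau(s')(E)$ can separate the two states. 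Worse, $D_u$ itself is an uncontrolled composite: the coprimality guarantees of \eqref{eqn:prime-number-coprime-with-N} concern only the primes $p_u$ themselves, so nothing stops $D_s$ from sharing factors with $p_{s'}$, with $D_{s'}$, or with the denominators $b_{u,\m,E}$, which makes the intended cancellation analysis impossible to carry out.

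The paper avoids exactly this by putting the prime in the \emph{denominator}: it picks (only at the states $u$ that split at step $i+1$, i.e.\ $|\varphi(u)(X_i)|=1$ and $|\varphi(u)(X_{i+1})|\ne 1$) two actions $\m_1,\m_2$ and a class $E\in X_{i+1}$ with $\varphi(u,\m_1)(E)>\varphi(u,\m_2)(E)$, and sets $\alpha'_{i+1}(u)(\m_1)=1/p_u$, $\alpha'_{i+1}(u)(\m_2)=1-1/p_u$. Then
$\tau(u)(E)=\varphi(u,\m_2)(E)+\tfrac{1}{p_u}\bigl(\varphi(u,\m_1)(E)-\varphi(u,\m_2)(E)\bigr)$,
and because $p_u$ is coprime to the denominators $b_{u,\m,E}$ and to the nonzero difference-numerator $c_{u,\m_1,\m_2,E}$, the prime $p_u$ genuinely divides the reduced denominator of $\tau(u)(E)$. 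This is what distinguishes $\tau(s)(E)$ from $\tau(s')(E)$: the latter is either a fraction with denominator $p_{s'}b\,b'$ or a raw $\varphi$-value, in either case with denominator coprime to $p_s$. To repair your argument you would need to (a) move the prime into the denominator of the weights (e.g.\ restrict to two actions with weights $1/p_u$ and $1-1/p_u$ as the paper does, or more generally ensure the reduced denominator of $\tau(u)(E)$ picks up $p_u$), and (b) be precise about \emph{which} class $E$ exhibits the disagreement — the paper fixes that $E$ when it defines the partial strategy, whereas you have no handle on it.
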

\begin{proof}[Proof sketch]
	We prove the statement by induction on $i$. Let $s, t \in S$. The base case is $i=0$. By definition, we have if $s \not\equiv_{X_0} t$ then $\ell(s) \not= \ell(t)$. We also have if $\ell(s) \not= \ell(t)$, then $s \not\sim^{0}_{\D(\alpha)} t$ in $\D(\alpha)$ for all memoryless strategy $\alpha$. We simply let $\alpha_0'$ be the empty partial function such that $\alpha \sqsupseteq \alpha'_0$  holds for any memoryless strategy $\alpha$.
	
	For the induction step, assume that we can compute in polynomial time a partial strategy $\alpha_i'$ such that $\mathord{\sim^{i}_{\D(\alpha)}} \subseteq \mathord{\equiv_{X_i}}$ for all $\alpha \sqsupseteq \alpha'_i$, i.e., if $s \not\equiv_{X_i} t$ then $s \not\sim^{i}_{\D(\alpha)} t$ in $\D(\alpha)$. We show the statement holds for $i+1$. 
	
	\begin{algorithm}[H]
		\DontPrintSemicolon
		$\alpha_{i+1}' := \alpha_{i}'$\; \label{alg:polynomial-partial-strategy-basis}
		\ForEach{$u \in S$ such that $|\varphi(u)(X_i)| = 1$ and $|\varphi(u)(X_{i+1})| \not= 1$ \label{alg:partial-strategy-for-condition}}{
			pick $\m_1, \m_2 \in \A(u)$ such that for a set $E \in X_{i+1}: \varphi(u, \m_1)(E) \gr \varphi(u, \m_2)(E)$ \;
			$\alpha_{i+1}'(u)(\m_1): = \frac{1}{p_u}$\;
			$\alpha_{i+1}'(u)(\m_2): = 1- \frac{1}{p_u}$\;
			\label{alg:polynomial-partial-strategy-prime-number-selection}
		}
		\caption{Polynomial-time algorithm constructing $\alpha_{i+1}'$.}
		\label{alg:polynomial-construction-partial-strategy}
	\end{algorithm}

	Algorithm~\ref{alg:polynomial-construction-partial-strategy} computes the partial memoryless strategy $\alpha_{i+1}'$ in polynomial time. We show that $\alpha'_{j}$ does not overwrite $\alpha'_k$ for all $k \ls j$. It follows that for any $\alpha \sqsupseteq \alpha_{i+1}'$, it satisfies $\alpha \sqsupseteq \alpha_{i}'$. Let $\alpha \sqsupseteq \alpha_{i+1}'$. Assume $s \not\equiv_{X_{i+1}} t$. We distinguish the two cases: $s \not\sim^{i}_{\D(\alpha)} t$ and $s \sim^{i}_{\D(\alpha)} t$. For both cases we can derive $s \not\sim^{i+1}_{\D(\alpha)} t$, i.e., $\mathord{\sim^{i+1}_{\D(\alpha)}} \subseteq \mathord{\equiv_{X_{i+1}}}$ as desired. The details can be found in \cref{appendix:pbInequivalence}.
\end{proof}
For example, let $p_{q_2}$, the prime number assigned to state $q_2$ in \cref{fig:noMDstrategyPBneqZero}, be $3$ which is coprime with numbers in $N=\{1,2\}$.\footnote{We have $2 \in N$ since $\varphi(s, \m_s)(\{s_a\}) = \frac{1}{2}$ where $\m_s$ is the only available action at state $s$.} We show how the partial strategy $\alpha_1'$ is constructed. On line~1 of Algorithm~\ref{alg:polynomial-construction-partial-strategy}, $\alpha_1'$ is equal to $\alpha_0'$, the empty partial function. Since $|\varphi(q_2)(X_0)| = 1$ and $|\varphi(q_2)(X_{1})| = 2$, we enter the for loop. We can pick $\m_1, \m_2 \in \A(q_2)$ and $E=S\setminus\{v\} \in X_1$ on line~3, since $\varphi(q_2,\m_1)(E)= 1 \gr 0 = \varphi(q_2,\m_2)(E)$. We then define the strategy for $q_2$ (line~4): $\alpha_{1}'(q_2)(\m_1)= \frac{1}{3} $ and $\alpha_{1}'(q_2)(\m_2) = \frac{2}{3}$. We have completed the construction of $\alpha_1'$ as $|\varphi(u)(X_0)| =|\varphi(u)(X_1)| = 1$ for all other state $u$.

\begin{theorem}\label{theorem:polynomial-time-compute-strategy-PBneqZero}
	One can compute in polynomial time a memoryless strategy $\beta$ such that $\mathord{\sim_{\D(\beta)}} \subseteq \mathord{\sim_{\D(\alpha)}}$ for all memoryless strategies~$\alpha$. 
\end{theorem}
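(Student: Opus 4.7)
The plan is to obtain $\beta$ as any completion of the partial memoryless strategy produced by \cref{lemma:partial-strategy-construction} at index $i = |S|-1$. Concretely, I would first invoke \cref{lemma:partial-strategy-construction} with $i=|S|-1$ to compute, in polynomial time, a partial memoryless strategy $\alpha'_{|S|-1}$ satisfying $\mathord{\sim^{|S|-1}_{\D(\alpha)}} \subseteq \mathord{\equiv_{X_{|S|-1}}}$ for every memoryless $\alpha \sqsupseteq \alpha'_{|S|-1}$. Next, I would extend $\alpha'_{|S|-1}$ to a total memoryless strategy $\beta$ by choosing, for every state $s$ where $\alpha'_{|S|-1}(s)$ is undefined, an arbitrary available action (e.g.\ the Dirac distribution on some $\m \in \A(s)$). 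This completion is trivially computable in polynomial time, and by construction $\beta \sqsupseteq \alpha'_{|S|-1}$.

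The correctness argument then has two halves, each supplied by an earlier lemma. First, since the sequence $X_0, X_1, \ldots$ can be refined at most $|S|-1$ times before stabilising, we may assume the algorithm has reached its fixed point, so that $X_{|S|-1} = S/\mathord{\equiv_{X_{|S|-1}}}$. By \cref{lemma:X-is-a-bisimulation}, for every memoryless strategy $\alpha$ we have $\mathord{\equiv_{X_{|S|-1}}} \subseteq \mathord{\sim_{\D(\alpha)}}$. Second, applying \cref{lemma:partial-strategy-construction} to $\beta$ itself (legal because $\beta \sqsupseteq \alpha'_{|S|-1}$) yields $\mathord{\sim^{|S|-1}_{\D(\beta)}} \subseteq \mathord{\equiv_{X_{|S|-1}}}$. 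Since in any LMC on state space $S$ the $(|S|-1)$-step relation $\mathord{\sim^{|S|-1}}$ coincides with probabilistic bisimilarity, this gives $\mathord{\sim_{\D(\beta)}} \subseteq \mathord{\equiv_{X_{|S|-1}}}$.

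Chaining the two inclusions yields, for every memoryless strategy $\alpha$,
\[
\mathord{\sim_{\D(\beta)}} \;\subseteq\; \mathord{\equiv_{X_{|S|-1}}} \;\subseteq\; \mathord{\sim_{\D(\alpha)}},
\]
which is the desired conclusion. Polynomial-time computability of $\beta$ follows because \cref{lemma:partial-strategy-construction} runs in polynomial time, because the fixed point of the partition refinement is reached within $|S|-1$ iterations of an obviously polynomial-time loop, and because the completion step only inspects each state once.

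I do not expect a genuine obstacle here: the theorem is essentially a corollary that packages together the partition refinement (\cref{alg:polynomial-optimistic-partition-refinement}), the soundness direction (\cref{lemma:X-is-a-bisimulation}), and the completeness direction realised by a concrete witnessing partial strategy (\cref{lemma:partial-strategy-construction}). The one point that warrants a brief sentence in the write-up is the observation that $\mathord{\sim^{|S|-1}_{\D(\beta)}}$ equals probabilistic bisimilarity in $\D(\beta)$, which is a standard fact about the $|S|$-step stabilisation of the bisimilarity refinement on an LMC with state space $S$ and is already used implicitly in the paper's definition of $\mathord{\sim^{i}_{\D(\alpha)}}$.
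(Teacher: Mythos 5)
Your proof is correct and follows exactly the same route as the paper's: invoke \cref{lemma:partial-strategy-construction} at $i=|S|-1$, extend the resulting partial strategy arbitrarily to a total $\beta$, observe $\mathord{\sim^{|S|-1}_{\D(\beta)}}=\mathord{\sim_{\D(\beta)}}$, and chain with \cref{lemma:X-is-a-bisimulation} using the fixed point $X_{|S|-1}=S/\mathord{\equiv_{X_{|S|-1}}}$. No discrepancy worth noting.
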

\begin{proof}	
	By invoking \cref{lemma:partial-strategy-construction} for $i = |S|-1$, a partial strategy $\alpha'_{|S|-1}$ can be computed in polynomial time such that $\mathord{\sim^{|S| - 1}_{\D(\alpha)}} \subseteq \mathord{\equiv_{X_{|S| - 1}}}$ for all $\alpha \sqsupseteq \alpha'_{|S| - 1}$. Since $\mathord{\sim^{|S| - 1}_{\D(\alpha)}} = \mathord{\sim_{\D(\alpha)}} $, we have $\mathord{\sim_{\D(\alpha)}} \subseteq \mathord{\equiv_{X_{|S| - 1}}}$ for all $\alpha \sqsupseteq \alpha'_{|S| - 1}$. Let $\beta$ be a memoryless strategy defined by
	\[
	\beta(u) = \left \{
	\begin{array}{ll}
	\alpha'_{|S| - 1}(u) & \mbox{if $\alpha'_{|S| - 1}(u)$ is defined}\\
	\delta_{\m_u} \text{ where } \m_u \in \A(u)& \mbox{otherwise}
	\end{array}
	\right .
	\]
	By definition the memoryless strategy $\beta$ is compatible with $\alpha'_{|S| - 1}$. We have:
	\begin{align*}
	\mathord{\sim_{\D(\beta)}} 
	&\subseteq \mathord{\equiv_{X_{|S| - 1}}} && \beta \sqsupseteq \alpha'_{|S| - 1}\\
	&\subseteq \mathord{\sim_{\D(\alpha)}} \text{ for all strategy } \alpha &&  X_{|S| - 1} = S/\mathord{\equiv_{X_{|S| - 1}}} \text{ and \cref{lemma:X-is-a-bisimulation}}\qedhere
	\end{align*}
	
\end{proof}

\begin{corollary}\label{corollary:PBneqZero-in-P}
	The problem $\PBneqZERO$ is in $\sf P$. Further, for any positive instance of the problem $\PBneqZERO$, we can compute in polynomial time a memoryless strategy that witnesses $s \not\sim t$.
\end{corollary}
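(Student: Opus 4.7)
The plan is to derive the corollary as an almost immediate consequence of Theorem~\ref{theorem:polynomial-time-compute-strategy-PBneqZero}. The strategy $\beta$ produced by that theorem has the property $\mathord{\sim_{\D(\beta)}} \subseteq \mathord{\sim_{\D(\alpha)}}$ for every memoryless $\alpha$. Contrapositively, this means that whenever some memoryless strategy $\alpha$ makes $s \not\sim t$, the specific strategy $\beta$ already makes $s \not\sim t$. So $\beta$ is a universal witness: $\PBneqZERO$ holds if and only if $s \not\sim_{\D(\beta)} t$.

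Given this observation, the algorithm I would use is: first invoke Theorem~\ref{theorem:polynomial-time-compute-strategy-PBneqZero} to compute $\beta$ in polynomial time; then build the induced LMC $\D(\beta)$ (whose transition probabilities are polynomial-size rationals since $\beta$ only uses weights of the form $1/p_u$ and $1-1/p_u$ for the polynomially-bounded primes $p_u$ from \cref{lemma:extra-prime-factors-polynomial}); then run any of the classical polynomial-time probabilistic-bisimilarity algorithms for LMCs~\cite{Baier1996,DerisaviHS03,ValmariF10} on $\D(\beta)$ to decide whether $s \sim_{\D(\beta)} t$. Output ``yes'' (together with the witness $\beta$) iff $s \not\sim_{\D(\beta)} t$.

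For correctness: if the algorithm outputs ``yes'', then $\beta$ itself witnesses a positive instance. Conversely, suppose the instance is positive, i.e., there is some memoryless $\alpha$ with $s \not\sim_{\D(\alpha)} t$. By Theorem~\ref{theorem:polynomial-time-compute-strategy-PBneqZero} we have $\mathord{\sim_{\D(\beta)}} \subseteq \mathord{\sim_{\D(\alpha)}}$; since $(s,t) \notin \mathord{\sim_{\D(\alpha)}}$, also $(s,t) \notin \mathord{\sim_{\D(\beta)}}$, so the bisimilarity check rejects and the algorithm outputs ``yes'' with $\beta$ as the witness.

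There is no real obstacle here; the only thing to check carefully is that the representation size of $\beta$ remains polynomial (which holds because the primes $p_u$ are polynomially bounded by \cref{lemma:extra-prime-factors-polynomial}) so that the downstream bisimilarity algorithm runs in polynomial time in the combined input size. Everything else is a direct composition of the previously established polynomial-time routines.
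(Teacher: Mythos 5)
Your proposal is correct and matches the intended derivation: the corollary is stated in the paper as an immediate consequence of Theorem~\ref{theorem:polynomial-time-compute-strategy-PBneqZero} with no separate proof, and your argument (compute $\beta$, decide $s \sim_{\D(\beta)} t$ with a classical polynomial-time LMC bisimilarity algorithm, and observe that $\mathord{\sim_{\D(\beta)}} \subseteq \mathord{\sim_{\D(\alpha)}}$ makes $\beta$ a universal witness) is precisely the reasoning the corollary relies on. Your extra remark about the bit-size of $\beta$ being polynomially bounded via \cref{lemma:extra-prime-factors-polynomial} is a sensible sanity check that the paper leaves implicit.
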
 

\section{The Distance One Problems} \label{section:DistanceOne}
In this section, we summarise the results for the two distance one problems, namely $\TVeqONE$ and $\PBeqONE$.
The \emph{existential theory of the reals, {\sf ETR},} is the set of valid formulas of the form
$$\exists x_1 \dots \exists x_n~R(x_1, \dots, x_n),$$ 
where $R$ is a boolean combination of comparisons of the form
$p(x_1, \dots, x_n) \sim 0$, in which $p(x_1, \dots, x_n)$ is a multivariate polynomial (with rational coefficients) and
$\mathord{\sim} \in \{ \ls, \gr, \mathord{\le}, \mathord{\ge}, \mathord{=}, \mathord{\ne} \}$.
The complexity class $\ETR$~\cite{SchaeferS17} consists of those problems that are many-one reducible to {\sf ETR} in polynomial time.
Since {\sf ETR} is {\sf NP}-hard and in {\sf PSPACE}~\cite{Can88,Renegar92}, we have ${\sf NP} \subseteq \ETR \subseteq {\sf PSPACE}$.

For some MDPs there exist memoryless strategies that make $d_{\tv}(\delta_s,\delta_t) = 1$ but no such strategy is MD. For example, consider the MDP in \cref{fig:noMDstrategyPBeqOne} which has two MD strategies. We have $d_{\tv}(\delta_s,\delta_t) = \frac{1}{2}$ which is less than $1$ in the LMC induced by any of the two MD strategies, and $d_{\tv}(\delta_s,\delta_t) = 1$ in the LMC induced by any other strategy. Thus, we cannot simply guess an MD strategy. We show that the problem $\TVeqONE$ is in~$\ETR$, using the characterization from~\cite[Theorem~21]{CK2014} of total variation distance~$1$ in LMCs and some reasoning on convex polyhedra:
\begin{restatable}{theorem}{theoryTVDistanceOneUB} \label{theorem:tvdistance-eq-one-reduce-to-ETR}
	The problem $\TVeqONE$ is in~$\ETR$.
\end{restatable}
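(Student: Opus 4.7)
The plan is to reduce $\TVeqONE$ to satisfiability in the existential theory of the reals in polynomial time. I would encode a memoryless strategy by introducing, for each state--action pair, a real variable $x_{s,\m}$ together with the constraints $x_{s,\m} \ge 0$ and $\sum_{\m \in \A(s)} x_{s,\m} = 1$; with these, every induced transition probability $\tau(s)(t) = \sum_{\m} x_{s,\m}\cdot \varphi(s,\m)(t)$ of $\D(\alpha)$ becomes a linear polynomial in the $x_{s,\m}$. The reduction is then reduced to the task of expressing, as a polynomial-size quantifier-free formula over the $x_{s,\m}$, that the induced LMC $\D(\alpha)$ satisfies $d_{\tv}(\mu,\nu) = 1$.

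For this I would invoke the combinatorial/structural characterization of $d_{\tv}=1$ for LMCs from \cite[Theorem~21]{CK2014}, which equates $d_{\tv}(\mu,\nu) = 1$ with the existence of a suitable ``witness'': a partition-like assignment of states (together with reachability data) separating the trace distributions of $\mu$ and $\nu$. Existence of such a witness can be captured in $\ETR$ by further existentially quantified variables: Boolean membership indicators $y_s$ (enforced via $y_s(1-y_s)=0$) encoding the required sets, and auxiliary variables $r_s$ representing reach probabilities into these sets. The structural conditions of the characterization turn into Boolean combinations of polynomial (in)equalities over the $x_{s,\m}$ and $y_s$; the reach probabilities satisfy a linear system $r_s = \sum_t \tau(s)(t)\,r_t$ with boundary conditions determined by the $y_s$, which is polyhedral in the $x_{s,\m}$ and thus directly expressible. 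A final assertion such as $\sum_s \mu(s)\, r_s = 1$ links the witness to the initial distribution. This is the step where the ``reasoning on convex polyhedra'' comes in: the polyhedron of valid reach-probability vectors has a description of polynomial size.

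The chief obstacle is that the support of~$\tau$ depends on the guessed strategy, whereas the CK2014 characterization is naturally phrased over this support. I would handle this by rewriting every clause of the form ``for all $t$ with $\tau(s)(t) > 0$, $\Psi(t)$'' as the $\ETR$ formula $\bigwedge_{t}\bigl(\tau(s)(t) \le 0\ \vee\ \Psi(t)\bigr)$; this is sound because the quantifier-free fragment of the first-order theory of the reals admits both strict and non-strict polynomial comparisons. A routine bookkeeping argument then establishes that the overall formula has size polynomial in $\size(\D) + \size(\mu) + \size(\nu)$, placing $\TVeqONE$ in $\ETR$.
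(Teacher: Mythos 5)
The core of your plan does not work because the cited characterization goes the wrong way. Proposition~\ref{proposition:tvdistance-one-theorem} (from \cite[Theorem~21]{CK2014}) states that $d_{\tv}(\mu,\nu) \ls 1$ holds if and only if there exist $r_1 \in S$ and subdistributions $\mu_1,\mu_2$ with $\mu_1 \equiv \mu_2$, $r_1 \in \support(\mu_1)$, and $\support(\mu_2) \subseteq R^{\mu,\nu}_{r_1}$. There is no analogous existential witness for $d_{\tv}(\mu,\nu)=1$; that condition is precisely the \emph{non}-existence of such a triple. Consequently $\TVeqONE$ asks for the existence of a memoryless strategy $\alpha$ such that \emph{for all} $(r_1,\mu_1,\mu_2)$ the conditions fail, which is an $\exists\forall$ statement over the reals and not directly an $\ETR$ instance. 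Your sketch never confronts this quantifier alternation: it posits Boolean indicator variables and reach probabilities as an existential witness for distance one, but no such characterization is in the cited source, and the linear system $r_s = \sum_t \tau(s)(t)\,r_t$ you propose does not correspond to the trace-equivalence constraint $\mu_1 \equiv \mu_2$, which is a vector-space condition rather than a reachability-probability one.

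The missing idea is a dualization step. The paper eliminates the inner universal quantifier via convex geometry: it introduces a matrix $B_\alpha$ whose columns span $<M_\alpha(w)\one: w\in L^*>$, an orthonormal basis $H_\alpha$ of the orthogonal complement of $<B_\alpha>$, and for each $s\in S$ a convex polyhedron $\mathcal{P}_s$ whose spanning vectors encode $\support(\mu_2) \subseteq R^{\mu,\nu}_s$. It shows that $d_{\tv}(\mu,\nu) \ls 1$ in $\D(\alpha)$ iff $<H_\alpha>$ meets $\mathcal{P}_s$ for some $s$, hence $d_{\tv}(\mu,\nu) = 1$ iff the two are disjoint for every $s$. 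Disjointness of the linear subspace $<H_\alpha>$ and $\mathcal{P}_s$ is then certified \emph{existentially} by a strictly separating hyperplane: a row vector $v_s$ and $b_s \gr 0$ with $v_s$ orthogonal to every column of $H_\alpha$ and $v_s\cdot\vecx \ge b_s$ for all $\vecx\in\mathcal{P}_s$. Full column rank of $B_\alpha$ is certified by an existential reduced $QR$-factorization, the defining words of $B_\alpha$ and the dimension $r'$ of $H_\alpha$ are guessed nondeterministically, and membership follows because $\ETR$ is closed under {\sf NP}-reductions. Without a Farkas/separating-hyperplane dualization of this sort, the universal quantification over subdistributions cannot be removed, so the reduction you describe does not land in $\ETR$.
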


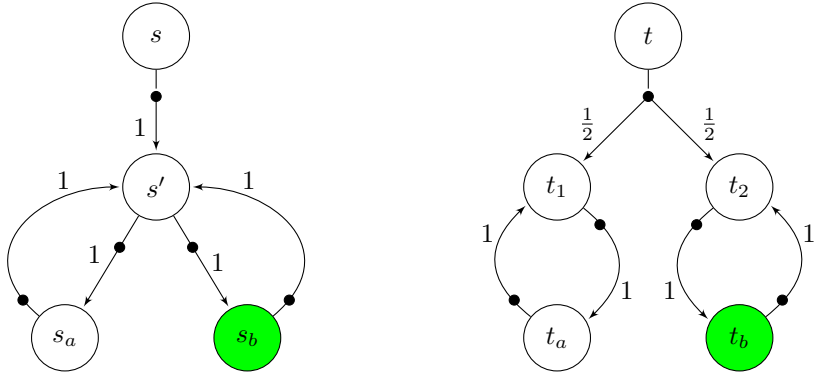
\begin{figure}[t]			
	\begin{subfigure}{0.45\textwidth}
		\centering
		\tikzstyle{BoxStyle} = [draw, circle, fill=black, scale=0.4,minimum width = 1pt, minimum height = 1pt]
		
		\begin{tikzpicture}[xscale=.6,>=latex',shorten >=1pt,node distance=3cm,on grid,auto]
		
		\node[state] (t) at (2,4) {$s$};
		\node[BoxStyle] (tt) at (2,3.2){};
		
		\node[state] (t1) at (2,2) {$s'$};
		
		\node[BoxStyle] (tat) at (1.2,1.2){};
		\node[BoxStyle] at (2.8,1.2){};
		
		\node[state] (qb) at (0,0) {$s_a$};
		\node[state, fill=green] (qc) at (4,0) {$s_b$};
		
		\node[BoxStyle]at (-.92,0.5){};
		\node[BoxStyle] at (4.92,0.5){};
		
		\path[-] (t) edge node [midway, right] {} (tt);
		\path[->] (tt) edge node [midway, left] {$1$} (t1);
		
		\path[->] (t1) edge node [midway, left, yshift=0.1cm] {$1$} (qb);
		
		\path[->] (t1) edge node [midway, right] {$1$} (qc);
		
		\draw [->,postaction={decorate}] (qb) to [out=150, in=180, looseness=1.8] node [near end, above] {$1$} (t1);
		\draw [->,postaction={decorate}] (qc) to [out=30, in=0, looseness=1.8] node [near end, above] {$1$} (t1);
		
		\end{tikzpicture}
		\hfill
	\end{subfigure}
	\begin{subfigure}{0.45\textwidth}
		\centering
		\tikzstyle{BoxStyle} = [draw, circle, fill=black, scale=0.4,minimum width = 1pt, minimum height = 1pt]
		
		\begin{tikzpicture}[xscale=.6,>=latex',shorten >=1pt,node distance=3cm,on grid,auto]
		
		\node[state] (t) at (2,4) {$t$};
		\node[BoxStyle] (tt) at (2,3.2){};
		
		\node[state] (t1) at (0,2) {$t_1$};
		\node[state] (t2) at (4,2) {$t_2$};
		
		\node[BoxStyle] (tat) at (0.94,1.5){};
		\node[BoxStyle] (tbt)at (3.06,1.5){};
		
		\node[state] (qb) at (0,0) {$t_a$};
		\node[state, fill=green] (qc) at (4,0) {$t_b$};
		
		\node[BoxStyle] at (-.94,0.5){};
		\node[BoxStyle] at (4.94,0.5){};
		
		\path[-] (t) edge node [midway, right] {} (tt);
		\path[->] (tt) edge node [midway, left,  xshift=-0.1cm, yshift=0.1cm] {$\frac{1}{2}$} (t1);
		\path[->] (tt) edge node [midway, right,  xshift=0.1cm, yshift=0.1cm] {$\frac{1}{2}$} (t2);
		\draw [->,postaction={decorate}] (t1) to [out=330, in=30, looseness=1.8] node [near end, right] {$1$} (qb);
		
		\draw [->,postaction={decorate}] (t2) to [out=210, in=150, looseness=1.8] node [near end, left] {$1$} (qc);
		
		
		\draw [->,postaction={decorate}] (qb) to [out=150, in=210, looseness=1.8] node [near end, left] {$1$} (t1);
		\draw [->,postaction={decorate}] (qc) to [out=30, in=-30, looseness=1.8] node [near end, right] {$1$} (t2);
		
		\end{tikzpicture}
	\end{subfigure}
	\caption{In this MDP, no MD strategy witnesses $d_{\tv}(\delta_s, \delta_t)=1$ (nor $d_{\pb}(s,t) = 1$). States $s_b$ and $t_b$ have label $b$ while all other states have label~$a$.}\label{fig:noMDstrategyPBeqOne} 
\end{figure}

The problem $\TVeqONE$ is {\sf NP}-hard, and $\PBeqONE$ is {\sf NP}-complete.
The hardness results for both problems are by reductions from the Set Splitting problem.
Given a finite set $S$ and a collection $\C$ of subsets of $S$, Set Splitting asks whether there is a partition of $S$ into disjoint sets $S_1$ and $S_2$ such that no set in $\C$ is a subset of $S_1$ or $S_2$. 

\begin{figure}[t]
	\centering
	
	\tikzstyle{BoxStyle} = [draw, circle, fill=black, scale=0.3,minimum width = 1pt, minimum height = 1pt]
	\begin{tikzpicture}[xscale=.6,>=latex',shorten >=1pt,node distance=3cm,on grid,auto]
	
	\node[state] (s) at (3,3.5) {$s$};
	\node[BoxStyle] (sdot) at (3,2.8){};
	
	\node[state] (t) at (11,3.5) {$t$};
	\node[BoxStyle] (tdot) at (11,2.8){};
	
	\node[state] (c1) at (1,2) {$C_{1}$};
	\node[state] (cm) at (5,2) {$C_{2}$};
	
	\node[state] (c1p) at (9,2) {$C_{1}'$};
	\node[state] (cmp) at (13,2) {$C_{2}'$};
	
	\node[BoxStyle] at (1.8,1.2){};
	\node[BoxStyle] at (3.3,1.2){};
	\node[BoxStyle] at (5.75,1.2){};
	\node[BoxStyle] at (6.65,1.2){};
	\node[BoxStyle] at (7.35,1.2){};
	\node[BoxStyle] at (8.15,1.2){};
	\node[BoxStyle] at (10.6,1.2){};
	\node[BoxStyle] at (12.2,1.2){};

	\node[state] (s1) at (3,0) {$e_{1}$};
	\node[state] (s2) at (6.9,0) {$e_{2}$};
	\node[state] (sn) at (11,0) {$e_{3}$};
	
	\node[BoxStyle] at (3,-0.9){};
	\node[BoxStyle] at (5.1,-0.9){};
	\node[BoxStyle] at (6.5,-0.9){};
	\node[BoxStyle] at (7.4,-0.9){};
	\node[BoxStyle] at (8.7,-0.9){};
	\node[BoxStyle] at (11,-0.9){};
	
	\node[state] (u) at (3,-2) {$u$};
	\node[state, fill=green] (v) at (11,-2) {$v$};
	
	\node[BoxStyle] at (1.75,-2){};
	\node[BoxStyle] at (12.25,-2){};
	
	\path[-] (s) edge node  [near end,left] {} (sdot);
	\path[->] (sdot) edge node  [xshift=-2mm, near start, left] {$\frac{1}{2}$} (c1);
	\path[->] (sdot) edge node  [xshift=2mm, near start,right] {$\frac{1}{2}$} (cm);

	\path[-] (t) edge node  [near end,left] {} (tdot);
	\path[->] (tdot) edge node  [xshift=-2mm, near start, left] {$\frac{1}{2}$} (c1p);
	\path[->] (tdot) edge node  [xshift=2mm, near start, right] {$\frac{1}{2}$} (cmp);
	
	\path[->] (v) edge  [out=15,in=-15,looseness=8] node [right] {$1$} (v);
	\path[->] (u) edge [out=165,in=195,looseness=8] node [left]{$1$} (u);
	
	\path[->, draw=red, thick] (c1) edge node  [xshift= -1mm,very near end,left] {} (s1);
	\path[->] (c1p) edge node  [xshift= 1mm,very near end,right] {} (s1);
	\path[->] (c1) edge node  [xshift= -1mm,very near end,left] {} (s2);
	\path[->, draw=red, thick] (c1p) edge node  [xshift= 1mm,very near end,right] {} (s2);
	\path[->] (cm) edge node [xshift= -1mm,very near end,left] {} (s2);
	\path[->, draw=red, thick] (cmp) edge node [xshift= 1mm,very near end,right] {} (s2);
	\path[->, draw=red, thick] (cm) edge node [xshift= -1mm,very near end,left] {} (sn);
	\path[->] (cmp) edge node [xshift= 1mm,very near end,right] {} (sn);
	
	\path[->, draw=red, thick] (s1) edge node  [very near end,left] {} (u);
	\path[->] (s1) edge node  [xshift=-2mm, very near end,left] {} (v);
	
	\path[->] (s2) edge node  [very near end,left] {} (u);
	\path[->, draw=red, thick] (s2) edge node  [xshift=-2mm, very near end,left] {} (v);
	
	\path[->, draw=red, thick] (sn) edge node  [xshift=2mm, very near end,right] {} (u);
	\path[->] (sn) edge node  [very near end,right] {} (v);
	\end{tikzpicture}
	\caption{The MDP in the reduction from Set Splitting for {\sf NP}-hardness of $\TVeqONE$ (or $\PBeqONE$). }
	\label{fig:reductionfromSetSplit}
\end{figure}

Let $<S, \C>$ be an instance of Set Splitting where $S = \{e_1, \cdots, e_n\}$ and $\C = \{C_1, \cdots, C_m\}$ is a collection of subsets of $S$. We construct an MDP $\D$ consisting of the following states: two states $s$ and $t$, a state $e_i$ for each element in $S$, twin states $C_j$ and $C_j'$ for each element in $\C$, two sink states $u$ and $v$. State $v$ has label $b$ while all other states have label $a$. State $s$ ($t$) has a single action which goes with uniform probability $\frac{1}{m}$ to states $C_i$ ($C_i'$) for $1 \le i \le m$.  For each $e_i \in C_j$, there is an action from state $C_j$ and $C_j'$ leading to state $e_i$ with probability one. Each state $e_i$ has two actions going to the sink states $u$ and $v$ with probability one, respectively. We have: $<S, \C> \in {\mbox{Set Splitting}} \iff \exists \, \mbox{memoryless strategy $\alpha$ such that} \; d_{\tv}(\delta_s, \delta_t) = 1 \, \mbox{in} \; \D(\alpha).$ 

For example, let $S = \{e_1, e_2, e_3\}$ and $\C = \{C_1, C_2\}$ with $C_1 = \{e_1, e_2\}$ and $C_2 = \{e_2, e_3\}$. \figurename~\ref{fig:reductionfromSetSplit} shows the corresponding MDP. The MD strategy highlighted, corresponding to the partition of $S_1 = \{e_1, e_3\}$ and $S_2 = \{e_2\}$, witnesses $d_{\tv}(\delta_s, \delta_t) = 1$.

\begin{restatable}{theorem}{theoryTVDistanceOneLB}\label{theorem:tvdistance-one-NP-hardness}
	The Set Splitting problem is polynomial-time many-one reducible to $\TVeqONE$, hence $\TVeqONE$ is {\sf NP}-hard. 
\end{restatable}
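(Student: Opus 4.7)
My plan is to verify the displayed biconditional, since the polynomial-time computability is immediate ($\D$ has $O(|S|+|\C|)$ states and transitions). The first step will be a structural observation that reduces the total variation distance to a reachability comparison: $v$ is the only $b$-labelled state, both $u$ and $v$ are absorbing, and every run from $s$ or from $t$ visits exactly three non-sink states before being absorbed in $\{u,v\}$. Hence, under any memoryless strategy $\alpha$, both induced trace distributions are supported on just the two infinite words $aaa\cdot a^\omega$ and $aaa\cdot b^\omega$, and a direct computation yields
\[
d_{\tv}(\delta_s,\delta_t) \;=\; \bigl|\Pr_{\D(\alpha),\delta_s}(\text{reach }v) - \Pr_{\D(\alpha),\delta_t}(\text{reach }v)\bigr|,
\]
so $d_{\tv}(\delta_s,\delta_t)=1$ if and only if one of $s,t$ reaches $u$ almost surely while the other reaches $v$ almost surely.

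For the $(\Rightarrow)$ direction I would start from a splitting $S=S_1\uplus S_2$ and define the MD strategy $\alpha$ as follows: at each $e_i$ play the action to $u$ iff $e_i\in S_1$; at each $C_j$ play an action leading to some element of $C_j\cap S_1$ (nonempty because $C_j\not\subseteq S_2$); at each twin $C_j'$ play an action leading to some element of $C_j\cap S_2$. Every run from $s$ then ends in $u$ almost surely and every run from $t$ ends in $v$ almost surely, so $d_{\tv}(\delta_s,\delta_t)=1$.

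For the $(\Leftarrow)$ direction, given a memoryless $\alpha$ with $d_{\tv}(\delta_s,\delta_t)=1$, after possibly swapping the roles of $u$ and $v$ I may assume $s$ reaches $u$ and $t$ reaches $v$ almost surely. Writing $\m_u,\m_v$ for the two actions at each $e_i$, I will set $S_1:=\{e_i:\alpha(e_i)(\m_u)=1\}$ and $S_2:=\{e_i:\alpha(e_i)(\m_v)=1\}$, which are disjoint subsets of $S$. Almost-sure reachability of $u$ from $s$ forces the support of each $\alpha(C_j)$ to consist only of actions into elements of $C_j\cap S_1$, so $C_j\cap S_1\neq\emptyset$; symmetrically $C_j\cap S_2\neq\emptyset$ for every $j$. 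The partition $(S_1,S\setminus S_1)$ is then a valid splitting, since $S_2\subseteq S\setminus S_1$.

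The place needing the most care --- and the reason the twin states $C_j'$ appear in the reduction --- is the converse direction under general (not merely MD) memoryless strategies: a randomised $\alpha(e_i)$ sends positive probability to both sinks and therefore cannot lie in the support of any $\alpha(C_j)$ once we insist on almost-sure reachability of $u$ from $s$. Defining $S_1$ and $S_2$ via probability-$1$ actions captures exactly this and forces the \emph{same} deterministic decisions at the shared $e_i$-states to simultaneously route $s$-paths and $t$-paths to opposite sinks, which is precisely the Set Splitting requirement.
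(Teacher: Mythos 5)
Your proof is correct and uses the same reduction (the MDP of \cref{fig:reductionfromSetSplit}) and the same MD strategy for the forward direction, but the argument establishing the biconditional is genuinely different from the paper's. The paper invokes the general characterisation of $d_{\tv}=1$ from \cite{CK2014} (restated here as \cref{proposition:tvdistance-one}), reasoning about the nonexistence of trace-equivalent subdistributions $\mu_1 \le \mu M(w)$, $\mu_2 \le \nu M(w)$; for the converse it takes $S_1 = \bigcup_{C_i}\support(\tau(C_i))$ and $S_2' = \bigcup_{C_i'}\support(\tau(C_i'))$ and shows $S_1 \cap S_2' = \emptyset$, finding a common reachable state $e'$ as the witness contradiction otherwise. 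You instead make the \emph{structural} observation that in this particular MDP both trace distributions live on only two infinite words, so $d_{\tv}(\delta_s,\delta_t) = |\Pr_s(\text{reach }v) - \Pr_t(\text{reach }v)|$; this collapses the problem to qualitative reachability, makes the forward direction a one-line check, and in the converse lets you define $S_1,S_2$ directly from the deterministic choices at the $e_i$-states forced by almost-sure reachability of opposite sinks. Your route is more elementary and self-contained (no appeal to the CK2014 characterisation), at the cost of being an ad~hoc computation for this one MDP rather than an application of a reusable criterion; the paper's route is heavier machinery but fits a uniform template shared with the other hardness proofs in the paper. Both are valid; I would only flag that where you wrote ``after possibly swapping the roles of $u$ and $v$'' you really mean swapping which of $s,t$ hits which sink (equivalently, swapping the names $S_1,S_2$), since the MDP itself is not symmetric in $u$ and $v$ (they carry different labels).
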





The problem $\PBeqONE$ is {\sf NP}-complete. The MDP in \cref{fig:noMDstrategyPBeqOne} is also an example of no MD strategy witnessing  $d_{\pb}(s,t) = 1$, which rules out the algorithm of simply guessing an MD strategy. By \cite{TvB2018}, deciding whether $d_{\pb}(s,t) = 1$ in an LMC can be formulated as a reachability problem on a directed graph induced by the LMC. One can nondeterministically guess the graph induced by the LMC and use Algorithm~\ref{alg:polynomial-construction-partial-strategy}  to construct a memoryless strategy that witnesses $d_{\pb}(s,t) = 1$. 

\begin{restatable}{theorem}{theoremPBDistanceOneUBLB}
	The problem $\PBeqONE$ is {\sf NP}-complete.
\end{restatable}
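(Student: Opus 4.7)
I would prove the theorem in two parts: NP-hardness and NP-membership.

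For NP-hardness, I would reduce from Set Splitting, re-using the very MDP of \cref{fig:reductionfromSetSplit} from the proof of \cref{theorem:tvdistance-one-NP-hardness}. Since $d_{\pb} \ge d_{\tv}$ always \cite{CBW2012}, any memoryless strategy that witnesses $d_{\tv}(\delta_s,\delta_t) = 1$ in this MDP automatically witnesses $d_{\pb}(s,t) = 1$, so the ``Set Splitting has a solution'' direction is free. For the converse, I would exploit the acyclic three-layer structure of the construction---only the sink $v$ carries label $b$ and every other state carries label $a$---and use the graph-theoretic characterization of \cite{TvB2018} restricted to this MDP (or, equivalently, a short bottom-up fixed-point computation) to show that $d_{\pb}(s,t) = 1$ forces the reachability distributions over the two labelled sinks to be disjoint, which in turn yields a valid set-splitting partition.

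For membership in NP, I would invoke the key observation from \cite{TvB2018} that, in any LMC, whether $d_{\pb}(s,t) = 1$ holds depends only on the labels and on the supports of the transition distributions: it is equivalent to reachability of a distinguished target set in a finite directed graph built solely from this support data. Accordingly, a memoryless strategy $\alpha$ influences the predicate ``$d_{\pb}(s,t) = 1$'' only through the supports $A_u := \support(\alpha(u)) \subseteq \A(u)$. The NP algorithm therefore nondeterministically guesses the family $(A_u)_{u \in S}$, constructs the induced graph in polynomial time, and checks the reachability condition of \cite{TvB2018}.

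It remains to exhibit an actual memoryless strategy $\alpha$ with these prescribed supports that genuinely achieves $d_{\pb}(s,t) = 1$; for this I would adapt Algorithm~\ref{alg:polynomial-construction-partial-strategy}, assigning to each state $u$ a distinct prime $p_u$ coprime with all denominators in $\D$ (via \cref{lemma:extra-prime-factors-polynomial}) and defining $\alpha(u)$ as a distribution on $A_u$ whose probabilities are built from these primes. The main obstacle is showing that such a generic numerical assignment cannot create accidental bisimilarities that would drop $d_{\pb}$ strictly below~$1$; this is precisely the type of ``accidental coincidence'' that the prime-number argument in \cref{lemma:partial-strategy-construction} is designed to rule out, and I expect an analogous argument will close this final step.
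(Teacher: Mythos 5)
Your NP-hardness direction coincides with the paper's: same Set Splitting construction, same free forward implication via $d_{\pb} \geq d_{\tv}$, same support-disjointness conclusion in the converse.

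For NP-membership there is a genuine gap. Your opening claim---that whether $d_{\pb}(s,t) = 1$ holds "depends only on the labels and on the supports of the transition distributions"---is false, and the rest of the argument inherits this confusion. The characterization from \cite{TvB2018} (\cref{proposition:pbdistance-neq-one-graph-reachability}) says $d_{\pb}(s,t) < 1$ iff $(s,t)$ is reachable in the support graph from some pair $(u,v)$ \emph{with $u \sim v$}; the graph itself is determined by the supports, but the set of target vertices $\{(u,v) : u \sim v\}$ depends on the actual transition probabilities, not merely on their supports. Consequently, the step "constructs the induced graph in polynomial time, and checks the reachability condition" is not executable from the guess $(A_u)$ alone: you do not know from which vertices to test reachability until a concrete LMC is fixed. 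You do notice that something more is needed ("it remains to exhibit an actual memoryless strategy"), but this is not an afterthought---it is the whole content of the membership proof, and the order of steps must be: guess the support structure, \emph{then} construct the prime-based strategy $\alpha$, \emph{then} decide $d_{\pb}(s,t) = 1$ deterministically in the concrete LMC $\D(\alpha)$. Moreover, insisting that $\alpha$ have supports exactly $(A_u)$ departs from Algorithm~\ref{alg:polynomial-construction-partial-strategy}, which assigns positive probability to at most two actions per state, so your adaptation would require a new prime-distribution scheme and a re-proof of Lemma~\ref{lemma:partial-strategy-construction} for it. The paper sidesteps both problems: it guesses the support graph $G$, restricts the MDP to a sub-MDP $\D'$ whose actions conform to $G$ (so the original witness $\beta$ is still a strategy of $\D'$), invokes \cref{theorem:polynomial-time-compute-strategy-PBneqZero} as a black box to obtain a strategy $\alpha'$ with $\mathord{\sim_{\D'(\alpha')}} \subseteq \mathord{\sim_{\D'(\beta)}}$, and then uses the observation that the support graph of $\D'(\alpha')$ is a \emph{subgraph} of $G$ to conclude reachability-to-$(s,t)$ only shrinks, so $d_{\pb}(s,t) = 1$ still holds. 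This avoids any need for the constructed strategy to have full support on $A_u$.
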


\section{Making Distances Small} \label{section:summaryDistanceZeroAndNeqOne}
In this section, we summarise the results for the remaining problems, which are all about making the distance small (equal to~$0$ or less than~$1$).

We show that $\TVeqZERO$ and $\TVneqONE$ are $\ETR$-complete. The proof for the membership of $\TVeqZERO$ in $\ETR$ is similar to \cite[Theorem~4.3]{FKS2020}. For both hardness results we provide reductions from the \emph{Nonnegative Matrix Factorization (NMF)} problem, which asks, given a nonnegative matrix $J \in \mathbb{Q}^{n \times m}$ and a number $r \in \nat$,
whether there exists a factorization $J = A \cdot W$ with nonnegative matrices $A \in \mathbb{R}^{n \times r}$ and $W \in \mathbb{R}^{r \times m}$.
The NMF problem is $\ETR$-complete by \cite[Theorem~2]{Shitov2016}, see also \cite{CohenR93,Vavasis09,AroraGKM12} for more details on the NMF problem.  The reduction is similar to \cite[Theorem~4.5]{FKS2020}.


\begin{restatable}{theorem}{theoremTVDistanceZeroUBLB}
	The problem $\TVeqZERO$ is $\ETR$-complete.
\end{restatable}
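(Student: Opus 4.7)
The plan is to handle membership in $\ETR$ and $\ETR$-hardness separately.

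For membership in $\ETR$, I would write a polynomial-size $\ETR$ formula equivalent to $\TVeqZERO$. Introduce real variables $x_{s,\m}$ for $s \in S, \m \in \A(s)$, constrained by $x_{s,\m} \ge 0$ and $\sum_{\m \in \A(s)} x_{s,\m} = 1$; these encode an arbitrary memoryless strategy $\alpha$, and every entry of $M_\alpha(a)$ becomes affine in the $x$'s. Trace equivalence in $\D(\alpha)$ will then be expressed through the invariant-subspace characterization: $\mu \equiv \nu$ in $\D(\alpha)$ iff there is a subspace $V \subseteq \Reals^S$ containing $\one$, closed under each $M_\alpha(a)$, with $(\mu - \nu)\vecv = 0$ for every $\vecv \in V$. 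I would existentially quantify $|S|$ auxiliary column vectors $\vecv_1, \dots, \vecv_{|S|} \in \Reals^S$ together with the coefficient vectors needed to express $\one$ as a linear combination of the $\vecv_i$ and each $M_\alpha(a)\vecv_i$ as a linear combination of the $\vecv_j$, and finally assert $\mu \vecv_i = \nu \vecv_i$ for every $i$. All resulting constraints are polynomial of degree at most two, and the total size is polynomial in $\size(\D)$, giving $\TVeqZERO \in \ETR$.

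For $\ETR$-hardness I would reduce from NMF. Given a nonnegative matrix $J \in \mathbb{Q}^{n \times m}$ and target rank $r$, a standard polynomial-time preprocessing reduces to the stochastic variant, so that without loss of generality $J$ is column-stochastic and the sought factors $A \in \Reals^{n \times r}$ and $W \in \Reals^{r \times m}$ may also be assumed column-stochastic. The MDP $\D$ is then constructed with two initial distributions $\mu, \nu$ launching parallel computations over a common label alphabet. The $\mu$-branch reads off the entries of $J$ directly through a short label sequence encoding a column index $j$ followed by a row index $i$. The $\nu$-branch instead passes through a gadget with $r$ intermediate states (one per factor) and then reaches $n$ row-labelled sink states, so that the trace probability $\Pr_\nu(\Run(w))$ for the word $w$ encoding $(j,i)$ evaluates to $(AW)_{i,j}$, where $A$ and $W$ are nonnegative matrices encoded by the memoryless strategy. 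Arranging the MDP so that memoryless strategies correspond bijectively to pairs $(A,W)$ of column-stochastic nonnegative matrices of the prescribed dimensions, trace equivalence of $\mu$ and $\nu$ in $\D(\alpha)$ becomes equivalent to $J = AW$ for the matrices read off from $\alpha$, establishing the reduction.

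The main obstacle in the hardness half is engineering the gadget so that trace equivalence captures precisely the identity $J = AW$ with no spurious constraints. In particular, ``rank inflation'' must be avoided: the intermediate layer must have exactly $r$ states, and no mixed strategy should effectively simulate a larger inner dimension; and the observations on the two branches must be aligned word-by-word so that trace equivalence imposes no obligations on $\alpha$ beyond the $nm$ polynomial equations encoding $J = AW$. Since the analogous construction was carried out in \cite[Theorem~4.5]{FKS2020}, the remaining work is to adapt the gadget to target trace equivalence and to verify the correspondence by an elementary expansion of the trace probabilities along the two branches.
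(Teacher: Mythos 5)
Your proposal is correct and follows essentially the same two-part strategy as the paper: membership in $\ETR$ via a linear-algebraic characterization of trace equivalence encoded as a polynomial existential system, and $\ETR$-hardness via a polynomial-time reduction from NMF using exactly the gadget you sketch (the paper adapts the construction of~\cite[Theorem~4.5]{FKS2020}). The only cosmetic difference is that for the upper bound the paper encodes the witnessing subspace dually, as the row space of a matrix $F$ with first row $\mu-\nu$ satisfying $F\one=\zero$ and $F M_\alpha(a)=B(a)F$, whereas you use a span of column vectors containing $\one$ and closed under left-multiplication by each $M_\alpha(a)$; these formulations are interchangeable.
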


\begin{restatable}{theorem}{theoremTVDistanceNeqOneUBLB}
	The problem $\TVneqONE$ is $\ETR$-complete.
\end{restatable}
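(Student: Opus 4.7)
For this theorem I would prove $\ETR$-completeness by establishing containment and hardness separately.

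For containment in $\ETR$, the plan is to mirror the approach used in Theorem~\ref{theorem:tvdistance-eq-one-reduce-to-ETR}. Introduce real variables $x_{s,\m}$ for each state $s \in S$ and action $\m \in \A(s)$, encoding a memoryless strategy by the polynomial (in)equalities $x_{s,\m} \geq 0$ and $\sum_{\m \in \A(s)} x_{s,\m} = 1$. The entries of the induced transition matrix $M_\alpha$ are then polynomials in these variables. Applying the characterization of total variation distance~$1$ for LMCs from \cite[Theorem~21]{CK2014}, combined with the convex-polyhedral reasoning already deployed in the $\TVeqONE$ proof, produces a quantifier-free boolean combination $\Phi$ of polynomial comparisons whose satisfiability (over strategies $\alpha$) is equivalent to $d_{\tv}(\mu,\nu) = 1$ in $\D(\alpha)$. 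Since $\ETR$ is closed under negation at the level of polynomial comparisons, the formula $\neg\Phi$ is still $\ETR$; prepending the existential quantifiers over the $x_{s,\m}$ yields a polynomial-size $\ETR$ formula witnessing $\TVneqONE$.

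For $\ETR$-hardness, the plan is to reduce from NMF along the lines of \cite[Theorem~4.5]{FKS2020}, much as is also used for $\TVeqZERO$. Given an NMF instance $(J \in \mathbb{Q}^{n \times m},\, r \in \nat)$, I would construct an MDP with $n$ row-states, $m$ column-states, and $r$ intermediate factor-states, together with a pair of initial distributions $\mu,\nu$. Memoryless strategies at the row-states choose probability distributions on the factor-states; combined with fixed transitions from the factor-states to the column-states, the trace-probability contributions encode products of the form $A \cdot W$, where $A$ is read off the strategy and $W$ is baked into the MDP (or vice versa). I would arrange the construction so that, for every memoryless strategy, the induced LMC satisfies either $d_{\tv}(\mu,\nu) = 0$ (exactly when the strategy encodes a valid nonnegative factorization $J = AW$) or $d_{\tv}(\mu,\nu) = 1$ (otherwise); this dichotomy can typically be forced by making the two initial distributions agree on a bottom component while disagreeing on a sink of different label. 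Consequently there exists a memoryless $\alpha$ with $d_{\tv}(\mu,\nu) < 1$ in $\D(\alpha)$ iff the NMF instance is positive.

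The main obstacle will be the membership step: verifying that the $d_{\tv} = 1$ characterization of \cite[Theorem~21]{CK2014}, once instantiated with the strategy variables as unknowns, really is expressible as a quantifier-free boolean combination of polynomial comparisons of polynomial size, rather than requiring a universal quantifier over, e.g., the BSCCs of the strategy-dependent graph. If the existential witnesses used in the $\TVeqONE$ proof amount to only polynomially many additional real-valued parameters (separating convex combinations, witnessing subdistributions, and so on), they can simply be added to the existential prefix of the negated formula; if instead some witness ranges over a combinatorial object whose size depends on the strategy's support, a refined argument (showing a polynomial-size witness always suffices, or enumerating over the polynomially many relevant sub-MDP topologies) will be needed to keep $\neg\Phi$ inside $\ETR$.
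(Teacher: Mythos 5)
Your hardness reduction from NMF is essentially the paper's: the MDP from \figurename~\ref{fig:reductionfromNMFMDP} is built so that every memoryless strategy either achieves $d_{\tv}(\mu,\nu)=0$ (iff it encodes a valid nonnegative factorization) or forces $d_{\tv}(\mu,\nu)=1$, so $\TVneqONE$ answers yes exactly when the NMF instance is positive. That part is sound.

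The membership argument has a genuine gap, which you flag but do not resolve correctly. The formula built in the $\TVeqONE$ proof (\cref{theorem:tvdistance-eq-one-reduce-to-ETR}) is not a quantifier-free boolean combination in the strategy variables $\bar{x}$ alone: it existentially quantifies auxiliary real variables (the basis matrix $B_\alpha$, its $QR$ factorization, the orthonormal complement $H_\alpha$, the separating hyperplanes $v_s, b_s$, and the multipliers $F(a)$), on top of nondeterministic guesses of words and dimensions. So you have $\Phi(\bar{x}) = \exists \bar{y}\, R(\bar{x},\bar{y})$, and hence $\neg\Phi(\bar{x}) = \forall \bar{y}\, \neg R(\bar{x},\bar{y})$: the auxiliary existentials become universals under negation. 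Your claim that these witnesses ``can simply be added to the existential prefix of the negated formula'' is false; the resulting $\exists\bar{x}\,\forall\bar{y}$ formula is not an $\ETR$ instance. The paper avoids this by never negating anything: it uses the characterization of $d_{\tv}(\mu,\nu)<1$ from \cref{proposition:tvdistance-one-theorem} (\cite[Theorem~21]{CK2014}) \emph{directly}, which is already existential (there exist $r_1$ and subdistributions $\mu_1,\mu_2$ with $\mu_1 \equiv \mu_2$, $r_1 \in \support(\mu_1)$, $\support(\mu_2) \subseteq R^{\mu,\nu}_{r_1}$). The trace-equivalence constraint $\mu_1 \equiv \mu_2$ is then encoded existentially via \cref{prop:language-equivalence} (a matrix $F$ with first row $\mu_1-\mu_2$, $F\one=\zero$, and $F M_\alpha(a) = B(a) F$ for all $a$), and the finite guesses of $r_1$ and $\support(\mu_2)$ are absorbed by closure of $\ETR$ under {\sf NP}-reductions. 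You should replace your negation plan with this direct existential route.
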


\begin{figure}[t]
			
			\centering
			\tikzstyle{BoxStyle} = [draw, circle, fill=black, scale=0.4,minimum width = 1pt, minimum height = 1pt]
			
			\begin{tikzpicture}[xscale=.6,>=latex',shorten >=1pt,node distance=3cm,on grid,auto]
			
			
			\node[state] (s) at (0,0) {$s$};
			\node[BoxStyle] at (-0.75, -0.9){};
			\node at (-1.4, -0.9){$\m_1$};
			\node[BoxStyle] at (0.75, -0.9){};
			\node at (1.4, -0.9){$\m_2$};
			
			\node[state] (sa) at (-2, -2.5) {$s_a$};
			\node[BoxStyle] at (-2.04, -3.35){};
			\node[state, fill=green] (sb) at (2,-2.5) {$s_b$};
			\node[BoxStyle] at (1.96, -3.35){};
			
			\node[state] (t) at (8,0) {$t$}; 
			\node[BoxStyle] (bt) at (8, -0.9){};
			
			\node[state] (ta) at (6,-2.5) {$t_a$};
			\node[BoxStyle] at (5.96,-3.35){};
			\node[state, fill=green] (tb) at (10,-2.5) {$t_b$};
			\node[BoxStyle] at (9.96,-3.35){};
			
			\path[->] (s) edge node [near end, left] {$1$} (sa);
			\path[->] (s) edge node [near end, right] {$1$} (sb);
			
			\path[-] (t) edge node [midway, right] {} (bt);
			\path[->] (bt) edge node [midway, left] {$\frac{1}{2}$} (ta);
			\path[->] (bt) edge node [midway, right] {$\frac{1}{2}$} (tb);
			
			\path[->] (sa) edge [out=-75,in=-105,looseness=4] node [midway, below] {} (sa);
			\path[->] (sb) edge [out=-75,in=-105,looseness=4] node [midway, right] {} (sb);
			\path[->] (ta) edge [out=-75,in=-105,looseness=4] node [midway, left] {} (ta);
			\path[->] (tb) edge [out=-75,in=-105,looseness=4] node [midway, left] {} (tb);
			
			\end{tikzpicture}
			
		\caption{In this MDP, no MD strategy witnesses $d_{\pb}(s,t) = 0$. States $s_b$ and $t_b$ have label $b$ while all other states have label $a$.}
		\label{fig:noMDstrategyPBeqZero}
\end{figure}
\begin{figure}[t]
			\centering
			\tikzstyle{BoxStyle} = [draw, circle, fill=black, scale=0.4,minimum width = 1pt, minimum height = 1pt]
			
			\begin{tikzpicture}[xscale=.6,>=latex',shorten >=1pt,node distance=3cm,on grid,auto]
			
			\node[state] (s) at (0,0) {$s$};
			\node[BoxStyle] at (-0.75, -0.9){};
			\node at (-1.4, -0.9){$\m_1$};
			\node[BoxStyle] at (0.75, -0.9){};
			\node at (1.4, -0.9){$\m_2$};
			
			\node[state] (sa) at (-2,-2.4) {$s_a$};
			\node[BoxStyle] at (-2.75, -1.7){};
			\node[state, fill=green] (sb) at (2,-2.4) {$s_b$};
			\node[BoxStyle] at (2.75, -1.7){};
			
			\node[state] (t) at (8,0) {$t$};
			\node[BoxStyle] (bt) at (8,-0.7){};
			
			\node[state] (ta) at (6,-2.4) {$t_a$};
			\node[BoxStyle] at (5.25, -1.7){};
			\node[state, fill=green] (tb) at (10,-2.4) {$t_b$};
			\node[BoxStyle] at (10.75, -1.7){};
			
			\path[->] (s) edge node [near end, left] {$1$} (sa);
			\path[->] (s) edge node [near end, right] {$1$} (sb);
			
			\path[-] (t) edge node [midway, right] {} (bt);
			\path[->] (bt) edge node [near start, left] {$\frac{1}{2}$} (ta);
			\path[->] (bt) edge node [near start, right] {$\frac{1}{2}$} (tb);
			
			\path[->] (sa) edge [out=135, in=180, looseness=1.4] node [near end, above] {$1$} (s);
			\path[->] (sb) edge [out=45,in=0,looseness=1.4] node [near end, above] {$1$} (s);
			\path[->] (ta) edge  [out=135, in=180, looseness=1.4] node [near end, above] {$1$} (t);
			\path[->] (tb) edge[out=45,in=0,looseness=1.4] node [near end, above] {$1$} (t);
			
			\end{tikzpicture}
		
	\caption{In this MDP, no MD strategy witnesses $d_{\pb}(s,t) \ls 1$. States $s_b$ and $t_b$ have label $b$ while all other states have label $a$.}\label{fig:noMDstrategyPBneqOne}
\end{figure}

Finally, we show that $\PBeqZERO$ and  $\PBneqONE$ are $\sf NP$-complete. For some MDPs there exist memoryless strategies that make $d_{\pb}(s,t) = 0$ (resp.~$d_{\pb}(s,t) \ls 1$) but no such strategy is MD. Indeed, for the MDP in \cref{fig:noMDstrategyPBeqZero} (resp.\ \cref{fig:noMDstrategyPBneqOne}), it is easy to check that the only strategy $\alpha$ which makes $d_{\pb}(s,t) = 0$ (resp.~$d_{\pb}(s,t) \ls 1$),  requires randomness, that is, $\alpha(s)(\m_1)=\alpha(s)(\m_2) = \frac{1}{2}$, where $\m_1$ and $\m_2$ are the two available actions of state $s$. Thus, to show the {\sf NP} upper bound, we cannot simply guess an MD strategy. Instead, one can nondeterministically guess a partition of the states and check in polynomial time if the partition is a probabilistic bisimulation.

The hardness results for both problems are by reductions from the Subset Sum problem. The reduction is similar to \cite[Theorem~4.1]{FKS2020}. 

\begin{restatable}{theorem}{theoremPBDistanceZeroUBLB}\label{theorem:pbdistance-zero-NP-complete}
	The problem $\PBeqZERO$ is $\sf NP$-complete.
\end{restatable}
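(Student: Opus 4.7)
The plan is to handle NP membership and NP-hardness separately. For the upper bound, the key observation is that a memoryless strategy $\alpha$ with $s \sim_{\D(\alpha)} t$ exists if and only if there is a partition $X$ of $S$ placing $s$ and $t$ in a common class, such that every class is label-uniform and some memoryless $\alpha$ makes the equivalence relation induced by $X$ a probabilistic bisimulation in $\D(\alpha)$. The ``only if'' direction takes $X = S / {\sim_{\D(\alpha)}}$; ``if'' follows because any probabilistic bisimulation containing $(s, t)$ certifies $s \sim t$. The NP algorithm therefore guesses $X$ (a polynomial-size certificate), verifies label-uniformity in polynomial time, and decides the existence of a suitable $\alpha$ as a linear feasibility problem. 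Introducing variables $x_{u,\m} \ge 0$ for each $u \in S$ and $\m \in \A(u)$, the constraints are
\begin{align*}
\sum_{\m \in \A(u)} x_{u,\m} &= 1 \quad \text{for each } u \in S, \\
\sum_{\m \in \A(u)} x_{u,\m}\,\varphi(u,\m)(C') &= \sum_{\m \in \A(u')} x_{u',\m}\,\varphi(u',\m)(C') \quad \text{for all } C \in X,\ u,u' \in C,\ C' \in X.
\end{align*}
This system has polynomial size and is decidable in polynomial time via linear programming; any feasible $x$ yields a witness strategy, and conversely any witness $\alpha$ provides a feasible assignment.

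For NP-hardness I would reduce from Subset Sum, following the template of \cite[Theorem~4.1]{FKS2020}. Given weights $w_1,\dots,w_n$ with total $W$ and target $T$, the MDP has designated states $s$ and $t$, and $n$ ``item'' gadgets reachable from $s$, each offering two actions encoding whether the corresponding weight is assigned to ``bucket $a$'' or ``bucket $b$''; a companion sub-MDP reachable from $t$ produces fixed transition probabilities into the two buckets proportional to $T/W$ and $(W-T)/W$. The gadgets are arranged so that a partition with $s$ and $t$ in the same class can be made a probabilistic bisimulation by a memoryless strategy exactly when the item-gadget choices encode a subset of $\{w_1,\dots,w_n\}$ that sums to $T$, yielding the desired reduction.

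The main obstacle lies in the NP-hardness reduction: since memoryless strategies may randomise, one must design the gadgets so that the bisimulation equalities at $(s, t)$ admit no spurious fractional solutions, i.e., every feasible strategy effectively makes discrete $\{0,1\}$-valued choices at each item gadget. This is typically secured by careful placement of labels and of the sink states reached by the two actions of each gadget, forcing the linear constraints imposed by bisimilarity to be rigid, as in \cite{FKS2020}. By contrast, the upper bound is comparatively routine once the ``guess partition, solve an LP'' characterisation is in hand, and membership follows directly from the polynomial-time solvability of linear feasibility.
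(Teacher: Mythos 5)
Your proposal follows the paper's proof essentially verbatim: NP membership is established by guessing a partition with $s,t$ in a common class and testing feasibility of a polynomial-size linear system in the strategy variables, and NP-hardness is shown by a Subset Sum reduction with two-action item gadgets feeding differently labelled sinks against a fixed $N/T$ split on the companion side. The fractional-choice concern you flag at the end is resolved automatically by the paper's construction rather than needing extra rigging: since $t$ has outgoing probability only into the classes of $t_1$ and $t_2$, any item state $s_i$ landing outside those two classes would give $s$ positive mass into a class $t$ reaches with probability zero (contradicting $s \sim t$), and membership of $s_i$ in $[t_1]$ or $[t_2]$ then forces the induced transition probability at $s_i$ to be $0$ or $1$.
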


By \cite{TvB2018}, deciding whether $d_{\pb}(s,t) \ls 1$ in an LMC can be formulated as a reachability problem on a directed graph induced by the LMC. In addition to a partition, our {\sf NP} algorithm also guesses the graph induced by the LMC.

\begin{restatable}{theorem}{theoremPBDistanceNeqOneUBLB}\label{theorem:pbdistance-neq-one-NP-complete}
	The problem $\PBneqONE$ is $\sf NP$-complete.
\end{restatable}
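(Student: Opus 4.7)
The plan is to establish the theorem in two parts: membership in $\sf NP$ via a guess-and-check algorithm that exploits the graph-theoretic characterization of probabilistic bisimilarity distance in LMCs from \cite{TvB2018}, and $\sf NP$-hardness via a polynomial-time reduction from an $\sf NP$-complete problem. As noted in the example of \cref{fig:noMDstrategyPBneqOne}, MD strategies do not suffice, so we cannot simply guess an MD strategy and verify; the randomisation is handled by noting that the relevant characterisation is sensitive only to which transitions are present, not to their precise weights.

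For the upper bound, I would rely on the fact that in an LMC, the predicate $d_{\pb}(s,t) \ls 1$ reduces to a reachability question on a directed graph whose vertices are pairs of states and whose edges encode the existence of couplings supported on the available transitions. This graph depends on the LMC only through the \emph{supports} of the transition distributions, not the probabilities themselves. Consequently, for the MDP $\D$, the induced graph depends on $\alpha$ only through the sets $\support(\alpha(s)) \subseteq \A(s)$. The algorithm therefore nondeterministically guesses, for each state $s$, a nonempty subset $A_s \subseteq \A(s)$ of active actions, together with a partition of $S$ playing the role of a candidate probabilistic bisimulation quotient on which the reachability characterisation is phrased. Both guesses have polynomial size. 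It then constructs the induced graph deterministically and verifies in polynomial time the reachability condition from \cite{TvB2018} corresponding to $d_{\pb}(s,t) \ls 1$, together with consistency of the partition (e.g., label-respecting and closed under the transition structure). Any concrete memoryless strategy with $\support(\alpha(s)) = A_s$ (for instance, uniform on $A_s$) induces exactly the guessed graph and thus witnesses $\PBneqONE$.

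For $\sf NP$-hardness, I would reduce from the Subset Sum problem, in the style of the reduction for $\PBeqZERO$ (cf.\ \cite[Theorem~4.1]{FKS2020}). Given an instance $(a_1,\dots,a_n;T)$, one constructs an MDP with two initial states $s,t$: at $s$, the controller chooses among $n$ actions whose transitions encode the coefficients $a_i$, while $t$ has a fixed distribution encoding~$T$. The successor gadget is arranged with sink states of differing labels so that if the induced distribution at $s$ matches that at $t$ (i.e., the chosen subset sums to $T$), then $s$ and $t$ couple on matching-label successors and $d_{\pb}(s,t) \ls 1$; otherwise, the unmatched probability mass is forced into label-mismatched sinks, yielding $d_{\pb}(s,t)=1$. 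Only memoryless \emph{randomised} strategies realise the required subset sums as probability vectors, which is why the example of \cref{fig:noMDstrategyPBneqOne} is consistent with hardness: the reduction naturally produces instances where randomisation is essential.

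The main obstacle, and the place where the argument must be carried out carefully, is the upper bound. One must justify that the relevant part of the characterisation from \cite{TvB2018} indeed depends on the strategy only through its action supports, so that guessing $(A_s)_{s\in S}$ together with the bisimulation partition is sufficient. This requires checking that the ``coupling graph'' on pairs of states, and the reachability predicate one tests in it, are functions purely of the induced LMC's transition graph and labelling; and conversely, that any guessed support that passes the check can be realised by a concrete memoryless strategy, for which uniform weights suffice. Once this is in place, both $\sf NP$-membership and the matching hardness reduction follow by combining standard arguments.
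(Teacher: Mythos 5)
Your decomposition (guess the support sets to pin down the coupling graph; separately guess a candidate bisimulation partition; check reachability and consistency) is structurally the same as the paper's NP-membership argument, and your observation that the graph of Definition~\ref{definition:pb-distance-graph} depends on $\alpha$ only through $\support(\alpha(s))$ is correct and is precisely why guessing supports (rather than the graph itself) is a clean presentation. The Subset Sum reduction you sketch for hardness also matches the paper's (same gadget as for $\PBeqZERO$, Figure~\ref{fig:reductionfromSubsset}).

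However, there is a genuine gap in your upper bound: you claim that ``any concrete memoryless strategy with $\support(\alpha(s)) = A_s$ (for instance, uniform on $A_s$) \ldots witnesses $\PBneqONE$,'' and more generally that the reachability predicate from \cite{TvB2018} is ``a function purely of the induced LMC's transition graph and labelling.'' That is false. The characterisation in Proposition~\ref{proposition:pbdistance-neq-one-graph-reachability} asks for reachability of $(s,t)$ from some $(u,v)$ with $u \sim v$, and probabilistic bisimilarity $\sim$ is sensitive to the actual transition probabilities, not just their supports. Your guessed partition plays the role of a witness for $u \sim v$, but a uniform strategy on $A_s$ will in general \emph{not} make that partition a probabilistic bisimulation --- indeed the whole point of the Subset Sum reduction is that the required probability mass must be chosen to hit a prescribed sum, which uniform weights will not do. The fix is exactly what the paper does: after guessing supports and the partition, verify by a linear-programming feasibility test that there exists a strategy $\bar{x}$ (with $x_{s,\m}=0$ for $\m \notin A_s$) under which the partition is a probabilistic bisimulation, rather than committing to uniform weights. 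With that change the two directions of the reduction go through; without it the verifier may accept an instance for which no single strategy simultaneously induces the guessed graph and realises the guessed bisimulation.
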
 

\section{Conclusions} \label{section:conclusion}
We have studied the computational complexity of qualitative comparison problems in labelled MDPs.
Motivated by the connection between obliviousness/anonymity and equivalence, we have devised polynomial-time algorithms to decide the existence of strategies for trace and bisimulation \emph{in}equivalence.
In case of trace inequivalence, there always exists an MD witness strategy, and our algorithm computes it.
The trace inequivalence algorithm is based on linear-algebra arguments that are considerably more subtle than in the LMC case.
For bisimulation inequivalence, MD strategies may not exist, but we have devised a polynomial-time algorithm to compute a memoryless strategy witnessing inequivalence; here the randomization is based on prime numbers to rule out certain ``accidental'' bisimulations.
The other 6 problems do not have polynomial complexity (unless ${\sf P} = {\sf NP}$), and we have established completeness results for all of them except $\TVeqONE$, where a complexity gap between {\sf NP} and $\ETR$ remains.

Concerning the relationship to interval Markov chains and parametric Markov chains mentioned in the introduction, the lower complexity bounds that we have derived in this paper carry over to corresponding problems in these models.
Transferring the upper bounds requires additional work, as, e.g., even the consistency problem for IMCs (i.e., whether there \emph{exists} a Markov chain conforming to an IMC) is not obvious to solve.
Nevertheless, the algorithmic insights of this paper will be needed.

\bibliography{ref}

\newpage\appendix\label{section:appendix}
\section{Proofs of \cref{section:tvInequivalence}}\label{appendix:tvInequivalence}
The following lemma is technical and is only used in the proof of ~\cref{lemma:two-strategy-composition-new-strategy-construction}.
\begin{lemma}\label{lemma:vector-space-equal-flipping-bit}
	Let $j \in \nat$ and $\alpha$ be an MD strategy. Let $i \in S$. If $\vecc_i \not\in \V_{1}^{j}$, then $M_{\alpha}(w)\one = M_{\alpha^{i \to \m}}(w)\one$  for all $\m \in \A(i)$ and $w \in L^{\le j}$.
\end{lemma}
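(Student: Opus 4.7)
The plan is to proceed by induction on the length of $w$, keeping $j$, $\alpha$, $i$ (with $\vecc_i \notin \V_1^j$), and $\m \in \A(i)$ all fixed throughout. The base case $|w| = 0$ is immediate: $M_{\alpha}(\varepsilon)\one = \one = M_{\alpha^{i \to \m}}(\varepsilon)\one$.

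For the inductive step, write $w = aw'$ with $a \in L$ and $|w'| = |w| - 1 \le j - 1$. By the induction hypothesis applied to $w'$ (which also has length $\le j$), we have $M_{\alpha}(w')\one = M_{\alpha^{i \to \m}}(w')\one$; call this common vector $\vecb$. Then
\begin{equation*}
M_{\alpha}(w)\one - M_{\alpha^{i \to \m}}(w)\one \;=\; \bigl[M_{\alpha}(a) - M_{\alpha^{i \to \m}}(a)\bigr] \vecb.
\end{equation*}

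The key structural observation is that $\alpha$ and $\alpha^{i \to \m}$ agree on $S \setminus \{i\}$, so the induced transition matrices $M_{\alpha}(a)$ and $M_{\alpha^{i \to \m}}(a)$ coincide on every row except (possibly) row~$i$. Consequently the vector on the right-hand side above has all coordinates zero except possibly coordinate~$i$; that is, it equals $\lambda \vecc_i$ for some scalar $\lambda \in \Reals$.

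Finally I rule out the case $\lambda \ne 0$. Since $|w| \le j$ and both $\alpha$ and $\alpha^{i \to \m}$ are MD (hence memoryless) strategies, each of $M_{\alpha}(w)\one$ and $M_{\alpha^{i \to \m}}(w)\one$ lies in $\V_1^j$, so their difference $\lambda \vecc_i$ also lies in $\V_1^j$. If $\lambda \ne 0$ we could divide to conclude $\vecc_i \in \V_1^j$, contradicting the hypothesis; hence $\lambda = 0$ and the two vectors agree. The only bookkeeping point is ensuring that the induction hypothesis is invoked with the \emph{same} parameter $j$ (not a smaller one), which is what makes the final ``division'' step land inside the forbidden space $\V_1^j$; this is the main conceptual step but is not really an obstacle once the row-sparsity of $M_{\alpha}(a) - M_{\alpha^{i \to \m}}(a)$ is noted.
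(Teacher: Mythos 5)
Your proof is correct and follows essentially the same route as the paper's: induction on $|w|$, observing that $M_{\alpha}(a)-M_{\alpha^{i\to\m}}(a)$ is supported only on row~$i$ so the difference of the two vectors is a scalar multiple of $\vecc_i$, and then using $\vecc_i\notin\V_1^j$ together with $M_{\alpha}(w)\one,\ M_{\alpha^{i\to\m}}(w)\one\in\V_1^j$ to force that scalar to vanish. Your closing remark about invoking the induction hypothesis ``with the same parameter $j$'' is slightly misdirected---what actually licenses the final step is the hypothesis $|w|\le j$ (so the current word's vectors live in $\V_1^j$), not the parameter carried by the IH---but this is a presentational quibble and the argument is sound.
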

\begin{proof}
	Let $j \in \nat$ and $\alpha$ be an MD strategy. Let $i \in S$ and $\m \in \A(i)$. Assume $\vecc_i \not\in \V_{1}^{j}$. We prove this lemma by induction on the length of the trace $w$. 
	The base case where $|w| = 0$ is vacuously true. For the induction step, assume $M_{\alpha}(w)\one = M_{\alpha^{i \to \m}}(w)\one$ holds for all~$w$ of length less than~$j$. Let $a \in L$. 
	\begin{eqnarray*}
		&&M_{\alpha}(aw)\one\\
		&=&M_{\alpha}(a)M_{\alpha}(w)\one\\
		&=& M_{\alpha}(a)M_{\alpha^{i \to \m}}(w)\one \commenteq{$M_{\alpha}(w)\one = M_{\alpha^{i \to \m}}(w)\one$ by induction hypothesis}\\
		&=& ( M_{\alpha}(a) - M_{\alpha^{i \to \m}}(a) ) M_{\alpha^{i \to \m}}(w)\one + M_{\alpha^{i \to \m}}(aw) \one\\
		&=& x \cdot \vecc_i+ M_{\alpha^{i \to \m}}(aw) \one \commenteq{for some $x \in \Reals$},
	\end{eqnarray*}
	where the last equality follows from the fact that since the two matrices $M_{\alpha}(a)$ and $M_{\alpha^{i \to \m}}(a)$ differ only in the $i$th row, $M_{\alpha}(a) - M_{\alpha^{i \to \m}}(a)$ is a matrix all whose rows except possibly the $i$th row are zero vectors.  The product of such a matrix with a column vector 
	is then a multiple of $\vecc_i$.
	
	Since both $M_{\alpha}(aw)\one$ and $M_{\alpha^{i \to \m}}(aw)\one$ are in $\V_{1}^{j}$, the difference of them, which is $x \cdot \vecc_i$, is in $\V_{1}^{j}$ as well. However, as $\vecc_i \not\in \V_{1}^{j}$ by assumption, we have $x = 0$. Hence $M_{\alpha}(aw)\one = M_{\alpha^{i \to \m}}(aw)\one$.
\end{proof}

\lemmaStrategyComposition*
\begin{proof}
	Let $j \in \nat$. Let $\alpha_1$ and $\alpha_2$ be two MD strategies, $a \in L$ and $w \in L^{\le j}$. 
	
	Since $|w| \le j$ and for all $i$ with $\alpha_2(i) \not= \alpha(i)$ we have $\vecc_i \not\in \V_{1}^{j}$, by Lemma~\ref{lemma:vector-space-equal-flipping-bit}, we have \begin{equation}\label{eqn:switch-actions-alpha2}
	M_{\alpha_2}(w)\one = M_{\alpha}(w)\one
	\end{equation} 
	Then, 
	\begin{eqnarray*}
		&&M_{\alpha_{1}}(a)M_{\alpha_{2}}(w)\one \\
		&=&M_{\alpha_{1}}(a)M_{\alpha}(w)\one \commenteq{\eqref{eqn:switch-actions-alpha2}}\\
		&=& ( M_{\alpha_1}(a) - M_{\alpha}(a) ) M_{\alpha}(w)\one + M_{\alpha}(aw) \one.
	\end{eqnarray*}
	The first summand in the previous line is in the vector space $\V_{1}^{j}$, since $ ( M_{\alpha_1}(a) - M_{\alpha}(a) ) M_{\alpha}(w)\one  \in <\vecc_i:  \vecc_i \in  \V_{1}^{j}> \subseteq \V_{1}^{j}$. Thus, $M_{\alpha_{1}}(a)M_{\alpha_{2}}(w)\one \in <\V_{1}^{j} \cup \{M_{\alpha}(aw)\one\}> = <\V_{1}^{j} \cup \B (\{(\alpha, aw)\} )> $.
\end{proof}

The next lemma shows that a basis for $\V_{1}^{j}$ for some $j \ls |S|$ consisting only of MD vectors can be computed in polynomial time. 

\lemmaPolyTime*
\begin{proof}
	We have shown $<\B(P_{j+1})> \subseteq \V_{1}^{j+1} $ in the proof sketch.	To show the other direction, $\V_{1}^{j+1} \subseteq <\B(P_{j+1})>$, it suffices to show for all memoryless strategies $\alpha'$, $a \in L$ and $\vecb \in <\B(P_{j})>$, we have $M_{\alpha'}(a) \cdot \vecb \in <\B(P_{j + 1})>$. Let $\alpha'$ be an arbitrary memoryless strategy and $a \in L$. The matrix $M_{\alpha'}(a)$ can be expressed as a linear combination of the matrices from $\mathbb{M}$: 
	\[
	M_{\alpha'}(a) = M_{\alpha_0}(a) 
	+ \sum_{s \in S} 
	\left( - M_{\alpha_0}(a) + 
	\sum_{\m \in \A(s)}
	\alpha'(s)(\m) \cdot M_{\alpha_0^{s \to \m}}(a)
	\right)
	\]
	That is, there are $y_\alpha \in \Reals$ for all $\alpha \in \Sigma$ such that
	\begin{equation}\label{eqn:linear-combination-sm-dm-strategy}
	M_{\alpha'}(a) = \sum_{ \alpha \in \Sigma} y_{\alpha} \cdot M_{\alpha}(a)\,.
	\end{equation}
	Let $\vecb \in \B(P_{j})$. Then,
	\begin{eqnarray*}
		M_{\alpha'}(a)  \cdot \vecb &=&  \sum_{ \alpha \in \Sigma} y_{\alpha} \cdot  M_{\alpha}(a) \cdot \vecb \commenteq{\eqref{eqn:linear-combination-sm-dm-strategy}}
	\end{eqnarray*}
	
	Since each term of the summation, $y_{\alpha} \cdot  M_{\alpha}(a) \cdot \vecb$, is in $<\B(P_{j+1})>$ by \eqref{eqn:M-b-product-in-B-j+1},  $M_{\alpha'}(a)  \cdot \vecb $ is also in $<\B(P_{j+1})>$.
\end{proof}

For the proof of the following lemma we combine classical linear algebra arguments about equivalence checking (see, e.g., \cite{Tzeng}) with Lemma~\ref{lemma:V1j-basis-polynomial-time}.
\lemmaEqualityVSpaces*
\begin{proof}
	We prove the items in turn.
	\begin{enumerate}
		\item
		Let $j \ls |S|$.
		From Lemma~\ref{lemma:V1j-basis-polynomial-time} it follows that $\V_{1}^{j} \subseteq \V_{3}^{j}$.
		From the definitions of $\V_{1}^{j}, \V_{2}^{j}, \V_{3}^{j}$ we have $\V_{3}^{j} \subseteq \V_{2}^{j} \subseteq \V_{1}^{j}$.
		\item
		We have $\V_{1}^{j} \subseteq \V_{1}^{j+1}$ for all $j \in \nat$.
		Further we have for all $j \in \nat$:
		\begin{equation}
		\V_{1}^{j+1} = <\mathbf{v},\ M_\alpha(a) \mathbf{v} : \alpha \text{ is a memoryless strategy, } a \in L,\ \mathbf{v} \in \V_{1}^{j}> \label{eq:equality-of-v-spaces}
		\end{equation}
		It follows that if $\V_{1}^{j} = \V_{1}^{j+1}$ then $\V_{1}^{j} = \V_{1}^{k}$ holds for all $k \ge j$.
		Since $\dim \V_{1}^{j} \le |S|$ for all $j \in \nat$, it follows that $\V_{1}^{|S|-1} = \V_{1}^{k}$ holds for all $k \ge |S|-1$.
		By \eqref{eq:equality-of-v-spaces} we see that $\V_{1}^{|S|-1}$ contains~$\one$ and is closed under pre-multiplication with $M_\alpha(a)$ for any memoryless strategy~$\alpha$ and for any $a \in L$.
		But from the definition of~$\V_1$ we can derive that $\V_1$ is the smallest vector space that contains~$\one$ and has that closure property.
		Thus $\V_1 \subseteq \V_{1}^{|S|-1}$.
		We have:
		\begin{align*}
			\V_1
			&\subseteq \V_{1}^{|S|-1} && \text{as just shown}\\
			&= \V_{2}^{|S|-1} = \V_{3}^{|S|-1} && \text{by item~1}\\
			&\subseteq \V_3 \subseteq \V_2 \subseteq \V_1 && \text{from the definitions}
		\end{align*}
		Hence these vector spaces are all equal.\qedhere
	\end{enumerate} 
\end{proof}

\section{Proofs of \cref{section:pbInequivalence}}\label{appendix:pbInequivalence}
The following lemma is quite standard, it shows that the partition gets finer after each iteration of the partition refinement algorithm.

\lemmaPartitionInvariant*
\begin{proof}
	We prove the statement by induction on $i$. The base case where $i=0$ is vacuously true. For the induction step, assume $\mathord{\equiv_{X_{i+1}}} \subseteq \mathord{\equiv_{X_{i}}}$. Then, for each $\mathord{\equiv_{X_{i}}}$-equivalence class $E$, we have 	\begin{equation}\label{eqn:partition-finer-IH}
	E = \textstyle\bigcup_j E_j \text{ where } E_j \in S/ \mathord{\equiv_{X_{i+1}}}.
	\end{equation}
	
	Next, we show $\mathord{\equiv_{X_{i+2}}} \subseteq \mathord{\equiv_{X_{i+1}}}$. 
	
	Let $s, t \in S$ and $s \equiv_{X_{i+2}} t$. If $s = t$, then $s \equiv_{X_{i+1}} t$. Otherwise, assume $s \not= t$. By the definition of $\mathord{\equiv_{X_{i+2}}}$, we have $\ell(s) = \ell(t)$,  $|\varphi(s)(X_{i+2})| = 1$ and $\varphi(s)(X_{i+2})= \varphi(t)(X_{i+2})$. For all $\m_1, \m_2 \in \A(s)$, $\m_1', \m_2' \in \A(t)$ and $E \in X_{i+2}$, we have
	$$\varphi(s, \m_1)(E) = \varphi(s, \m_2)(E)= \varphi(t, \m_1')(E) = \varphi(t, \m_2')(E).$$
	Since $X_{i+2} = S/ \mathord{\equiv_{X_{i+1}}}$ and $X_{i+1} = S/ \mathord{\equiv_{X_{i}}}$, by \eqref{eqn:partition-finer-IH}, we have for all $\m_1, \m_2 \in \A(s)$, $\m_1', \m_2' \in \A(t)$ and all $E' \in X_{i+1}$, 
	$$\varphi(s, \m_1)(E') = \varphi(s, \m_2)(E')= \varphi(t, \m_1')(E') = \varphi(t, \m_2')(E').$$
	
	Thus, $|\varphi(s)(X_{i+1})| =  |\varphi(t)(X_{i+1})| = 1$ and $\varphi(s)(X_{i+1})= \varphi(t)(X_{i+1})$. By the definition of $\mathord{\equiv_{X_{i+1}}}$, we have $s \equiv_{X_{i+1}} t$.
\end{proof}

The next lemma shows that if $s \equiv_{X_{|S|-1}} t$ then $s \sim_{\D(\alpha)} t$ for all memoryless strategy $\alpha$. 
\lemmaXBisimulation*
\begin{proof}
	Let $X$ be a partition and $X = S/\mathord{\equiv_{X}}$. Let $\alpha$ be a memoryless strategy. We show that $\mathord{\equiv_{X}}$ is a probabilistic bisimulation in the induced LMC $\D(\alpha)$. By the definition of probabilistic bisimulation, it suffices to show that for all $(u, v) \in \mathord{\equiv_{X}}$, we have $\ell(u) = \ell(v)$ and $\tau(u)(E) = \tau(v)(E)$ for each $\mathord{\equiv_{X}}$-equivalence class $E$.
	
	Since $X = S/\mathord{\equiv_{X}}$, each element $E \in X$ is an $\mathord{\equiv_{X}}$-equivalence class. Let $u \equiv_{X} v$. We distinguish the following two cases: $u = v$ and $u \not= v$. If $u = v$, then $u \sim_{\D(\alpha)} v$ is vacuously true. Assume $u \not= v$. Let $\pi$ be a probability distribution over $X$ and $\pi = \varphi(u, \m_u)(X)$ for some $\m_u \in \A(u)$. By definition of $\equiv_{X}$, we have $\ell(u) = \ell(v)$ and for all $\m_u \in \A(u)$ and $\m_v \in A(v)$, $\varphi(u, \m_u)(X) = \varphi(v, \m_v)(X) = \pi$. 
	
	In the LMC $\D(\alpha)$, the transition probability from $u$ to $E \in X$ is 
	\begin{eqnarray*}
		\tau(u)(E) &=& \sum_{\m_u \in \A(u)} \alpha(u)(\m_u) \cdot \varphi(u, \m_u)(E) \\
		&=& \sum_{\m_u \in \A(u)} \alpha(u)(\m_u) \cdot \pi(E) \commenteq{$\pi(E) = \varphi(u, \m_u)(E)$}\\
		&= & \pi(E) \commenteq{$\alpha(u)$ is a probability distribution over $\A(u)$}\\
	\end{eqnarray*}
	
	Similarly, the transition probability from $v$ to $E \in X$, $\tau(v)(E)$, is also equal to $\pi(E)$. Thus, for all $\mathord{\equiv_{X}}$-equivalence class $E$, we have $\tau(u)(E) = \tau(v)(E)$. This completes the proof.
\end{proof}

\lemmaPolyPrimeFactorsN*
\begin{proof}
	Since $X_i$ for some $i$ is a partition of $S$, we have $|X_i| \le S$. Together with the fact that Algorithm~\ref{alg:polynomial-optimistic-partition-refinement} runs for at most $|S|$ iterations, $|\bigcup_{i} X_i|$ is polynomial in $|S|$. Thus, $|N|$ is polynomial in $\size(\D)$.
	
	Let $b \in N$. Since the smallest prime number is $2$, we have $\theta(b) \ls \log{b}$. Furthermore, since $\log{b}$ is the bit size of $b$, $\theta(b)$ is then polynomial in $\size(\D)$.  
	
	Finally, since $\theta(N) \le  |N| \cdot \max_{b \in N} \theta(b)$, $\theta(N)$ is polynomial in $\size(\D)$.
\end{proof}

\lemmaPolyPrimeFactors*
\begin{proof}
	We denote by $p(x)$ the number of primes less than or equal to a positive integer $x$. We show that there exists $x \in \nat$ such that $p(x) \ge |S| + \theta(N)$ and $x$ is polynomial in $\size(\D)$.
	
	Let $x \gr 55$ and $x  \ge (|S| + \theta(N))^2$. Then, 
	
	\begin{eqnarray*}
		|S| + \theta(N) &\le& \sqrt{x}   \\
		&=& \frac{x}{\sqrt{x}}\\
		&\ls& \frac{x}{\log{x} +2} \commenteq{$\sqrt{x} \gr \log{x} +2$}\\
		&\ls& p(x) \commenteq{$\frac{x}{\log{x} +2} \ls p(x)$ for $x \gr 55$ by \cite{Rosser1941}} 
	\end{eqnarray*}
	
	It follows that $x$ is polynomial in $\size(\D)$, as $\theta(N)$ is polynomial in $\size(\D)$ by \cref{lemma:prime-factors-in-N-polynomial}. For each positive integer $ i \le x$, we can check whether it is prime using the algorithm in \cite{AKS2004} and is coprime to all number in $N$. Each check can be done in polynomial time as shown in \cite{AKS2004} and that $|N|$ is polynomial in $\size(\D)$.
\end{proof}

\lemmaPartialStrategyConstruction*
\begin{proof}
	Following the proof sketch, we show the rest of the proof in detail.
	
	In Algorithm~\ref{alg:polynomial-optimistic-partition-refinement}, once a state $u$ satisfies $|\varphi(u)(X_i)| \not= 1$ for some partition $X_i$, it satisfies $|\varphi(u)(X_j)| \not= 1$ for all $j \ge i$ since $\mathord{\equiv_{X_{j}}} \subseteq \mathord{\equiv_{X_{i}}}$ for all $j \ge i$ by \cref{lemma:partition-gets-finer}. A state $u$ is only added to the domain of the partial strategy once (in Algorithm~\ref{alg:polynomial-construction-partial-strategy} line~\ref{alg:partial-strategy-for-condition}), which guarantees that $\alpha'_{i+1}$ does not overwrite $\alpha'_i$. It follows that for any $\alpha \sqsupseteq \alpha_{i+1}'$, it satisfies $\alpha \sqsupseteq \alpha_{i}'$.
	
	Let $s \not\equiv_{X_{i+1}} t$. Let $\alpha \sqsupseteq \alpha_{i+1}'$. If $s \not\sim^{i}_{\D(\alpha)} t$, then $s \not\sim^{i+1}_{\D(\alpha)} t$, since $\mathord{\sim_{\D(\alpha)}^{i+1}} \subseteq \mathord{\sim_{\D(\alpha)}^{i}}$ for all $i \in \nat$. Otherwise, assume $s \sim^{i}_{\D(\alpha)} t$. From $\alpha \sqsupseteq \alpha_{i}'$ and the induction hypothesis, it follows that $s \equiv_{X_{i}} t$. Since $s \not\equiv_{X_{i+1}} t$, by the definition of $\mathord{\equiv_{X_i}}$ and $\mathord{\equiv_{X_{i+1}}}$, we have $\ell(s) = \ell(t)$, $s \not= t$ and $|\varphi(s)(X_{i})| = |\varphi(t)(X_{i})| = 1$. 
	
	Towards a contradiction, assume $s \sim^{i+1}_{\D(\alpha)} t$. We show that under this assumption, for all $E\in X_{i+1}$, $\tau(s)(E) = \tau(t)(E)$ should hold. Let $E \in X_{i+1}$. Since  $X_{i+1} = S / \mathord{\equiv_{X_{i}}}$, $E$ is an equivalence class with respect to $\mathord{\equiv_{X_i}}$. By the induction hypothesis, we have
	\begin{equation}\label{eqn:equiv-Xi-coarser-than-sim-i}
	E = \textstyle\bigcup_j E_j \text{ where } E_j \in S/ \mathord{\sim^{i}_{\D(\alpha)}}.
	\end{equation}
	
	Then, 
	\begin{eqnarray*}
		\tau(s)(E) &=&\textstyle \sum_{E_j} \tau(s)(E_j) \commenteq{\eqref{eqn:equiv-Xi-coarser-than-sim-i}}\\
		&=& \textstyle\sum_{E_j} \tau(t)(E_j) \commenteq{$s \sim^{i+1}_{\D(\alpha)} t$}\\
		&=& \tau(t)(E)
	\end{eqnarray*}	 
	
	The two different prime numbers $p_s$ and $p_t$ are associated with state $s$ and $t$, respectively. If $|\varphi(s)(X_{i+1})| \not= 1$, $\alpha_{i+1}'(s)$ is defined using $p_s$ on line~4. It follows that $\tau(s)(E) \not= \tau(t)(E)$, since $p_s$ can divide the denominator of $\tau(s)(E)$ but not the denominator of $\tau(t)(E)$. The case when $|\varphi(t)(X_{i+1})| \not= 1$ is symmetrical. Otherwise, $|\varphi(s)(X_{i+1})| = |\varphi(t)(X_{i+1})|  = 1$. By the definition of $\equiv_{X_{i+1}}$, there must exist a set $E$ that $\tau(s)(E) \not= \tau(t)(E)$. 
	
	We now show in detail that for all the following cases, we have the contradiction that $\tau(s)(E) \not=\tau(t)(E)$ for some $E \in X_{i+1}$.	
	
	\begin{itemize}
		\item[-] Assume $|\varphi(s)(X_{i+1})| \not= 1$.
		From the construction of $\alpha_{i+1}'$ on line~3, we pick $\m_1, \m_2 \in \A(s)$ and $E \in X_{i+1}$ such that $\varphi(s, \m_1)(E) \gr \varphi(s, \m_2)(E)$.  In the LMC $\D(\alpha)$, the probability from $s$ to $E$ is 
		\begin{eqnarray*}
			\tau(s)(E) &=& \sum_{\m \in \A(s)}\alpha(s)(\m) \cdot \varphi(s, \m)(E)\\
			&=& \sum_{\m \in \A(s)}\alpha_{i+1}'(s)(\m) \cdot \varphi(s, \m)(E) \commenteq{$\alpha \sqsupseteq \alpha_{i+1}'$}\\
			&=& \frac{1}{p_s}\cdot\varphi(s, \m_1)(E) + (1 - \frac{1}{p_s})\cdot\varphi(s, \m_2)(E) \\
			&=& \varphi(s, \m_2)(E) + \frac{1}{p_s}\cdot \big(\varphi(s, \m_1)(E)-\varphi(s, \m_2)(E)\big)  \\
			&=& \frac{a_{s, \m_2, E}}{b_{s, \m_2, E}} + \frac{1}{p_s}\cdot \frac{c_{s, \m_1, \m_2, E}}{d_{s, \m_1, \m_2, E}} \\
			&=& \frac{a_{s, \m_2, E}  d_{s, \m_1, \m_2, E} p_s + c_{s, \m_1, \m_2, E} b_{s, \m_2, E}}{b_{s, \m_2, E} d_{s, \m_1, \m_2, E} p_s},
		\end{eqnarray*}
		where $a_{s, \m_2, E}, b_{s, \m_2, E}, c_{s, \m_1, \m_2, E}$ and $d_{s, \m_1, \m_2, E}$ are defined before \cref{lemma:prime-factors-in-N-polynomial}.
		The first summand of the numerator in the previous line can be divided by $p_s$. By \eqref{eqn:prime-number-coprime-with-N}, we have $p_s \nmid b_{s, \m_2, E}$ and $p_s \nmid c_{s, \m_1, \m_2, E}$. Thus, $p_s$ can not divide the second summand, and hence, not the numerator. For $u \in  S$ and $E \subseteq S$, we express $\tau(u)(E)$ as an irreducible fraction $\frac{x_{\tau(u)(E)}}{y_{\tau(u)(E)}}$ where $x_{\tau(u)(E)}$ and $y_{\tau(u)(E)}$ are coprime integers. It follows that $p_s \mid y_{\tau(s)(E)}$.
		
		
		For state $t$, we have either $|\varphi(t)(X_{i+1})| \not= 1$ or $|\varphi(t)(X_{i+1})| = 1$. Assume $|\varphi(t)(X_{i+1})| \not= 1$. Similar to $s$, we have $\m_1', \m_2' \in \A(t)$ and $E' \in X_{i+1}$ such that $\varphi(t, \m_1')(E') \gr \varphi(t, \m_2')(E')$. In the LMC $\D(\alpha)$, the probability from $t$ to $E$ is 
		\begin{eqnarray*}
			\tau(t)(E) &=& \sum_{\m \in \A(t)}\alpha(t)(\m) \cdot \varphi(t, \m)(E)\\
			&=& \sum_{\m \in \A(t)}\alpha_{i+1}'(t)(\m) \cdot \varphi(t, \m)(E) \commenteq{$\alpha \sqsupseteq \alpha_{i+1}'$}\\\
			&=& \frac{1}{p_t}\cdot\varphi(t, \m_1')(E) + (1 - \frac{1}{p_t})\cdot\varphi(t, \m_2')(E) \\
			&=& \frac{1}{p_t}\cdot\frac{a_{t, \m_1', E}}{b_{t, \m_1', E}} + \frac{p_t-1}{p_t}\cdot\frac{a_{t, \m_2', E}}{b_{t, \m_2', E}} \\
			&=& \frac{a_{t, \m_1', E} b_{t, \m_2', E} + (p_t -1 ) \cdot a_{t, \m_2', E} b_{t, \m_1', E}}{p_t b_{t, \m_1', E} b_{t, \m_2', E}} 
		\end{eqnarray*}
		By \eqref{eqn:prime-number-coprime-with-N},  the two prime numbers $p_s$ and $p_t$ are different and $p_s \nmid b_{t, \m_1', E}, b_{t, \m_2', E}$. It follows that $p_s \nmid y_{\tau(t)(E)}$, and thus $\tau(s)(E) \not= \tau(t)(E)$.
		
		We consider the other case where $|\varphi(t)(X_{i+1})| = 1$. It follows that $\varphi(t, \m)(E)$ is the same for all $\m \in \A(t)$. Let $\m' \in \A(t)$. In the LMC $\D(\alpha)$, the probability from $t$ to $E$ is 
		\begin{eqnarray*}
			\tau(t)(E) &=& \sum_{\m \in \A(t)}\alpha(t)(\m) \cdot \varphi(t, \m)(E)\\
			&=& \varphi(t, \m')(E)\\
			&=& \frac{a_{t, \m', E} }{b_{t, \m', E}} 
		\end{eqnarray*}
		
		By \eqref{eqn:prime-number-coprime-with-N}, $p_s \nmid b_{t, \m', E}$. It follows that $p_s \nmid y_{\tau(t)(E)}$, and thus $\tau(s)(E) \not= \tau(t)(E)$.
		
		\item[-] Assume $|\varphi(s)(X_{i+1})| = 1$ and $|\varphi(t)(X_{i+1})| \not= 1$.
		To avoid redundancy, we do not show the proof as this case is similar to the case $|\varphi(s)(X_{i+1})| \not= 1$ and $|\varphi(t)(X_{i+1})| = 1$.
		\item[-] Assume $|\varphi(s)(X_{i+1})| = |\varphi(t)(X_{i+1})| = 1$. Since $s \not\equiv_{X_{i+1}} t$, by definition of $\equiv_{X_{i+1}}$, we have $\varphi(s)(X_{i+1}) \not= \varphi(t)(X_{i+1})$. Let $\m_s \in \A(s)$ and $\m_t \in \A(t)$. There exists a set $E \in X_{i+1}$ such that $\varphi(s, \m_s)(E) \not= \varphi(t, \m_t)(E)$. In the LMC $\D(\alpha)$, we have 
		\begin{eqnarray*}
			\tau(s)(E) &=&\sum_{\m \in \A(s)}\alpha(s)(\m) \cdot \varphi(s, \m)(E)\\ 
			&=& \varphi(s, \m_s)(E)\\
			&\not=& \varphi(t, \m_t)(E)\\
			&=&\sum_{\m \in (t)}\alpha(t)(\m) \cdot \varphi(t, \m)(E)\\ 
			&=& \tau(t)(E).
		\end{eqnarray*}
	\end{itemize}
	This completes the proof.
\end{proof}

\section{Proofs of \cref{section:DistanceOne}}\label{appendix:DistanceOne}
\subsection{Proofs of $\TVeqONE$}\label{appendix:tvDistanceOne}
In this section, we show that the problem $\TVeqONE$ is in $\ETR$ and is {\sf NP}-hard. Recall that $\TVeqONE$ is the problem asking whether there is a memoryless strategy $\alpha$ for $\D$ such that the total variation distance of the two initial distributions is one in the induced labelled Markov chain $\D(\alpha)$, i.e., $d_{\tv}(\mu, \nu) = 1$.

Define the set 
$
R^{\mu, \nu} := \{(r_1,r_2) \in S \times S: \exists w \in L^*:  r_1 \in \support(\mu M(w)) \text{ and }  r_2 \in \support(\nu M(w)) \},
$
which can be computed in polynomial time as shown in \cite[Lemma~20]{CK2014}. For each $r_1 \in S$, define the projection $R^{\mu, \nu}_{r_1} := \{r_2 \in S: (r_1, r_2) \in R^{\mu, \nu}\}$. According to \cite[Theorem~21]{CK2014}, the following proposition holds.

\begin{proposition}\label{proposition:tvdistance-one-theorem}
	We have $d_{\tv}(\mu , \nu) \ls 1$ if and only if there are $r_1 \in S$ and subdistributions $\mu_1$ and $\mu_2$ such that 
	\begin{equation}
	\label{equation:conditions-vardistance-less-than-one}
	\mu_1 \equiv \mu_2\quad \text{ and } \quad r_1 \in \support(\mu_1) \quad \text{ and }\quad \support(\mu_2)  \subseteq  R^{\mu, \nu}_{r_1}
	\end{equation}   
\end{proposition}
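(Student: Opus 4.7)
The plan is to prove the two directions of the biconditional separately, working throughout with the trace measures $\mathrm{Pr}_{\mu}$ and $\mathrm{Pr}_{\nu}$ on $L^{\omega}$ and exploiting the standard fact that $d_{\tv}(\mu,\nu) \ls 1$ is equivalent to $\mathrm{Pr}_{\mu}$ and $\mathrm{Pr}_{\nu}$ not being mutually singular, i.e., admitting a nonzero common sub-probability lower bound.

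For the $(\Leftarrow)$ direction, suppose the witnesses $r_1$, $\mu_1$, $\mu_2$ are given. Trace equivalence $\mu_1 \equiv \mu_2$ yields $|\mu_1 M(v)| = |\mu_2 M(v)|$ for every $v \in L^{*}$, and in particular $|\mu_1| = |\mu_2| =: c$, with $c \gr 0$ because $r_1 \in \support(\mu_1)$. For each $r_2 \in \support(\mu_2)$ the inclusion $\support(\mu_2) \subseteq R^{\mu,\nu}_{r_1}$ furnishes a word $w_{r_2} \in L^{*}$ with $r_1 \in \support(\mu M(w_{r_2}))$ and $r_2 \in \support(\nu M(w_{r_2}))$. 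Fixing one such $r_2$, I would construct a subprobability measure $\rho$ on $L^{\omega}$ that is pointwise dominated by both $\mathrm{Pr}_{\mu}$ and $\mathrm{Pr}_{\nu}$: on the cylinders $\Run(w_{r_2} \cdot v)$ put $\rho(\Run(w_{r_2} \cdot v))$ proportional to $|\mu_1 M(v)|$ (equivalently to $|\mu_2 M(v)|$), scaled by a factor chosen small enough that both $\mu M(w_{r_2} v)$ and $\nu M(w_{r_2} v)$ dominate it coordinatewise. Consistency of this assignment across cylinders follows from $\mu_1 \equiv \mu_2$, and the Carath\'eodory extension gives a nonzero measure below both $\mathrm{Pr}_{\mu}$ and $\mathrm{Pr}_{\nu}$; hence $d_{\tv}(\mu,\nu) \ls 1$.

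For the $(\Rightarrow)$ direction, assume $d_{\tv}(\mu,\nu) \ls 1$, so a nonzero finite measure $\rho$ with $\rho \le \mathrm{Pr}_{\mu}$ and $\rho \le \mathrm{Pr}_{\nu}$ exists. The task is to synthesize state-side witnesses. I would use the finiteness of $S$ together with a pigeonhole/K\"onig-style argument on pairs: for every prefix $w$ in the support of $\rho$, consider the pair of subdistributions $(\mu_{w}, \nu_{w}) := (\mu M(w), \nu M(w))$ conditioned appropriately on the common mass of $\rho$ on $\Run(w)$. Since the joint-support pattern lives in a finite set and nesting over longer prefixes is monotone, there is a prefix $w^{*}$ after which the pair stabilizes, yielding a pair of subdistributions $(\mu_1, \mu_2)$ whose supports lie in the states still carrying common mass. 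Any $r_1 \in \support(\mu_1)$ witnesses the support condition, and $\support(\mu_2) \subseteq R^{\mu,\nu}_{r_1}$ holds by construction of $w^{*}$. The equivalence $\mu_1 \equiv \mu_2$ is exactly the statement that the cylinder masses agree, which is ensured because both are identified with $\rho(\Run(w^{*} \cdot \, \cdot)) / \rho(\Run(w^{*}))$.

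The main obstacle is the $(\Rightarrow)$ direction: the naive attempt to lift the trace-level common component $\rho$ back to state-level subdistributions is subtle because a given trace prefix can be produced by many state-histories. One needs to combine the Markov property with the finiteness of $S$ (and of the set of reachable joint-support patterns $R^{\mu,\nu}$) to argue that along some prefix the conditional state distributions stabilize into a trace-equivalent pair. Once this stabilization is established, verifying $\mu_1 \equiv \mu_2$ reduces to the identity of cylinder masses inherited from $\rho$, and the support conditions are automatic from the construction.
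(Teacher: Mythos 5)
The paper does not prove this proposition; it is quoted from \cite[Theorem~21]{CK2014}, so you are not reconstructing the paper's argument but supplying one of your own, and it has gaps in both directions.

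In $(\Leftarrow)$ you fix one $r_2 \in \support(\mu_2)$ and its witnessing word $w_{r_2}$, set $\rho(\Run(w_{r_2}v)) \propto |\mu_1 M(v)|$, and assert that a small enough scale $\varepsilon$ gives $\rho \le \mathrm{Pr}_{\mu}$ and $\rho \le \mathrm{Pr}_{\nu}$. The former requires $\varepsilon\,|\mu_1 M(v)| \le |\mu M(w_{r_2})M(v)|$ for every $v$, which fails whenever $\mu_1$ puts mass on a state outside $\support(\mu M(w_{r_2}))$ that emits some $v$ having zero probability under $\mu M(w_{r_2})$; and the hypothesis places no constraint at all on $\support(\mu_1)$ beyond containing $r_1$. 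The inequality $\varepsilon\,|\mu_2 M(v)| \le |\nu M(w_{r_2})M(v)|$ is similarly unjustified, because $w_{r_2}$ only certifies that $r_2$, not all of $\support(\mu_2)$, lies in $\support(\nu M(w_{r_2}))$; the other elements of $\support(\mu_2)$ may need different words. The real content of this direction is to produce a single word $w$ with $\support(\mu_1) \subseteq \support(\mu M(w))$ and $\support(\mu_2) \subseteq \support(\nu M(w))$ (or to replace $\mu_1,\mu_2$ by trace-equivalent subdistributions whose supports are so contained), and your sketch elides that step entirely.

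In $(\Rightarrow)$, ``stabilization'' is never made precise, and the closing justification for $\mu_1 \equiv \mu_2$ is circular: $\rho$ is a measure on $L^{\omega}$, and you never actually extract state-side subdistributions $\mu_1,\mu_2$ from it, so the claim that both are identified with the conditional cylinder masses of $\rho$ presupposes the trace-to-state lift that you yourself flag as the main obstacle. As written, that direction is an outline of intent rather than a proof.
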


It is known that $\ETR$ is closed under {\sf NP}-reductions~\cite{CateKO13} which is needed for showing the membership of $\TVeqONE$ in $\ETR$, and later the membership of $\TVneqONE$ in $\ETR$. 

\theoryTVDistanceOneUB*
\begin{proof}
Let $B_{\alpha} \in \Reals^{S \times r}$ be a matrix consisting of $r \le |S|$ linearly independent columns which we denote by $\vecb_0, \cdots, \vecb_{r-1}$. Furthermore, we have  
\begin{itemize}
\item[-]
$\vecb_0 = \one$;
\item[-]
 $\vecb_i = M_\alpha(w_i) \one$ where $w_i \in L^{\le|S|}$ for all $1 \le i \ls r$.
\end{itemize}

The columns of $B_{\alpha}$ are linearly independent, i.e., $B_{\alpha}$ has full rank $r$, if and only if there exists a \emph{reduced QR factorization} of $B_{\alpha}$, i.e., there exist a matrix $Q \in \Reals^{S \times r}$ with orthonormal columns and an upper triangular matrix $R \in \Reals^{r \times r}$ with all diagonal entries being nonzero such that $B_{\alpha} = Q R$. 

The matrix $B_{\alpha}$ is a basis for the vector space $$ \left < M_\alpha(w) \cdot \one:  w \in L^{*} \right >$$ if and only if $B_\alpha$ is closed under pre-multiplication with $M_{\alpha}(a)$ for any $a \in L$, i.e., for each label $a \in L$, there exists a matrix $F(a) \in \mathbb{R}^{r \times r}$ such that  
$$M_{\alpha}(a) \cdot B_{\alpha} = B_{\alpha}F(a).$$


Let $I_n \in \Reals^{n \times n}$ denote the identity matrix of size $n$. Let $H_{\alpha} \in \Reals^{S \times r'}$ be a matrix consisting of $r'$ columns which are denoted by $\vech_0, \cdots, \vech_{r'-1}$. Furthermore, we require that all of the columns have length one and they are mutually orthogonal, i.e., $H_{\alpha}^{T}H_{\alpha} = I_{r'}$. It is an orthonormal basis for the vector space $<\mathbf{x}: B_{\alpha} ^{T} \mathbf{x} = \zero >$, i.e., the orthogonal complement of~$<B_{\alpha}>$, if and only if $ B_{\alpha}^{T} \vech_i= \zero$ for all $0 \le i \ls r'$ and $\mathit{rank}(B_\alpha) + \mathit{rank}(H_\alpha) = r+r'=|S|$. 

Recall that $\vecc_i \in \{0, 1\}^{S}$ is the column bit vector whose only non-zero entry is the $i$th one. For each $s\in S$, define a convex polyhedron
$$\mathcal{P}_{s} =\Big\{ \vecc_s + \sum_{t \in S}\lambda_t \vecc_t + \sum_{t \in R_{s}^{\mu, \nu}}\lambda_t' (-\vecc_t) \; \bigr\vert \; \lambda_t \ge 0,\ \lambda_t '\ge 0 \Big\} .$$
We call $\vecc_t$ for $t \in S$ and $-\vecc_t$ for $t \in R_{s}^{\mu, \nu}$ the \emph{spanning} vectors of~$\mathcal{P}_s$.

Assume the matrix $B_{\alpha}$ is a basis for $\left < M_\alpha(w) \cdot \one:  w \in L^{*} \right >$ and $H_{\alpha}$ is an orthonormal basis for the orthogonal complement of $<B_{\alpha}>$. We show that the two convex polyhedra $<H_{\alpha}>$ and $\mathcal{P}_s$ intersect if and only if $d_{\tv}(\mu, \nu) \ls 1$ in $\D(\alpha)$. We distinguish the following two cases:
\begin{itemize}
	\item[-] Assume $s \in R_{s}^{\mu, \nu}$. It is easy to check that $\zero \in <H_{\alpha}> \cap \mathcal{P}_s$. Define the two subdistributions $\mu_1$ and $\mu_2$ as $\mu_1 =\mu_2 = \delta_s$.  By \cref{proposition:tvdistance-one-theorem}, $d_{\tv}(\mu, \nu) \ls 1$ holds since $\mu_1$ and $\mu_2$ satisfy~\eqref{equation:conditions-vardistance-less-than-one}.
	
	\item[-] Assume $s \not\in R_{s}^{\mu, \nu}$. We first show the backward implication. From \cref{proposition:tvdistance-one-theorem}, there exist subdistributions $\mu_1$ and $\mu_2$ satisfying~\eqref{equation:conditions-vardistance-less-than-one}. Let $N = \mu_1(s) -\mu_2(s)$. Since $s \in \support(\mu_1)$ and $s \not\in R_{s}^{\mu,\nu}$, we have $N = \mu_1(s) \gr 0$. Define the vector $\vecv =  \frac{(\mu_1 - \mu_2)^T}{N}$. We can easily verify that it is in both  $<H_{\alpha}>$ and $\mathcal{P}_s$, and hence, $<H_{\alpha}> \cap \mathcal{P}_s \not= \emptyset$.
	
	For the converse,  assume $<H_{\alpha}> \cap \mathcal{P}_s \not= \emptyset$. Let $\vecv$ be a column vector such that $\vecv \in <H_{\alpha}>$ and $\vecv \in \mathcal{P}_s$. 
	Since $\vecv \in \mathcal{P}_{s}$ and $s \not\in R_{s}^{\mu, \nu}$, we have $\vecv(s) \ge 1$ and $\vecv(t) \ge 0$ for all $t \in S \setminus R_{s}^{\mu, \nu}$. Since $B_{\alpha}^{T} \vecv= \zero$ and $\vecb_0 = \one$, we have $\one^{T}\vecv = 0$. It follows that $\{t: \vecv(t) \ls 0\}  \subseteq R_{s}^{\mu, \nu} $ and $\{t: \vecv(t) \ls 0\} \not= \emptyset$. Let $N=\sum_{u \in S}|\vecv(u)|$. Define the two subdistributions $\mu_1$  and $\mu_2$ as follows:
	
	$
	\mu_1(u) = \left \{
	\begin{array}{ll}
	\frac{\vecv(u)}{N} & \mbox{if $\vecv(u) \gr 0$,}\\
	0 & \mbox{otherwise;}
	\end{array}
	\right .
	$ and 
	$
	\mu_2(u) = \left \{
	\begin{array}{ll}
	-\frac{\vecv(u)}{N} & \mbox{if $\vecv(u) \ls 0$,}\\
	0 & \mbox{otherwise.}
	\end{array}
	\right .
	$ 
	
	Since $\mu_1 -\mu_2 = \frac{\vecv}{N}$ and $\vecv$ is orthogonal with $<B_{\alpha}>$,  $\mu_1 -\mu_2$ is also  orthogonal with $<B_{\alpha}>$, and thus $\mu_1 \equiv \mu_2$.  Furthermore, we have $\mu_1(s) = \frac{\vecv(s)}{N} \ge \frac{1}{N} \gr 0$, and $\support(\mu_2) = \{t: \vecv(t) \ls 0 \} \subseteq R_{s}^{\mu,\nu}$. From \cref{proposition:tvdistance-one-theorem}, it follows that $d_{\tv}(\mu, \nu) \ls 1$ in the LMC $\D(\alpha)$ since $\mu_1$ and $\mu_2$ satisfy \eqref{equation:conditions-vardistance-less-than-one}. 
\end{itemize}

From the analysis above, to show that there exists a memoryless strategy $\alpha$ such that $d_{\tv}(\mu, \nu) = 1$ in the LMC $\D(\alpha)$, it suffices to show that there exists a memoryless strategy $\alpha$ such that $<H_{\alpha}> \cap \mathcal{P}_s = \emptyset$ for all $s \in S$. By \cite[Theorem~5.5.1]{lindahl2016}, if the two convex polyhedra $<H_{\alpha}>$ and $\mathcal{P}_{s}$ are disjoint, then there exists a hyperplane that strictly separates them, i.e., there exist $a_s, b_s \in \mathbb{R}$ and a row vector $v_s \in \mathbb{R}^{S}$ such that $v_s \cdot \vecx \le a_s$ for all $\vecx \in <H_{\alpha}>$, $v_s \cdot \mathbf{x} \ge b_s$ for all $\mathbf{x} \in \mathcal{P}_s$ and $a_s \ls b_s$. Since $\zero \in <H_{\alpha}>$, we have $a_s \ge 0$ and $0 \le a_s  \ls b_s$. For any column vector $\vech$ of $H_{\alpha}$, $a\cdot \vech$ is also in $<H_{\alpha}>$ for any $a\in \mathbb{R}$. It follows that $v_s \cdot \vech \le \frac{a_s}{a}$ and $v_s \cdot \vech \ge -\frac{a_s}{a}$ for all $a \gr 0$, and hence,  $$-\lim\limits_{a\to\infty}\frac{a_s}{a} \le v_s \cdot \vech \le \lim\limits_{a\to\infty}\frac{a_s}{a}.$$ 

Since both the left and right limits exist and are equal to zero, we have  $ v_s \cdot \vech = 0$ for all column vectors $\vech$ of $H_{\alpha}$. It follows that $v_s \cdot \vecx = 0$ for all $\vecx \in <H_{\alpha}>$, $v_s \cdot \vecx \ge b_s$ for all $\mathbf{x} \in \mathcal{P}_s$ and $b_s \gr 0$.

A memoryless strategy $\alpha$ for $\D$ can be characterised by numbers $x_{s,\m} \in [0,1]$ where $s \in S$ and $\m \in \A$ such that $x_{s,\m} = \alpha(s)(\m)$. We write $\bar{x}$ for the collection $(x_{s,\m})_{s\in S, \m \in \A}$. Thus, to decide if there exists a memoryless strategy such that $d_{\tv}(\mu, \nu) = 1$, we nondeterministically guess a set of $r-1$ words $w_i \in L^{\le|S|}$ where $1 \le i \ls r$ and a nonnegative integer $r'$, then check the following decision problem, which is a closed formula in the existential theory of the reals:

$\exists \bar{x}$, a matrix $B_\alpha \in \mathbb{R}^{S \times r}$ the columns of which are denoted by $\vecb_0, \cdots, \vecb_{r-1}$, a matrix $Q \in \Reals^{S \times r}$, an upper triangular matrix $R \in \Reals^{r \times r}$, matrices $F(a) \in \mathbb{R}^{r \times r}$ for all $a \in L$, a matrix $H_\alpha \in \mathbb{R}^{S \times r'}$ the columns of which are denoted by $\vech_0, \cdots, \vech_{r'-1}$, row vectors $v_s \in \mathbb{R}^{S}$ and $b_s \in \mathbb{R}$ for all $s \in S$  such that
\begin{flalign*}
&\left.\begin{array}{l} -\text{for all $s \in S: \textstyle\sum_{\m \in \A(s)}x_{s, \m} = 1$;} \qquad \commenteq{$\bar{x}$ characterising a memoryless strategy}\\ \end{array}\right. \MoveEqLeft[4]\\
&\left. \begin{array}{l}
- \vecb_0 = \one;\\
-\text{for all $1 \le i \ls r: \vecb_i = M_\alpha(w_i) \one$};\\
- Q^{T}Q = I_{r};\\
- R[i, i] \not= 0 \text{ for all } i; \\
- B_{\alpha} = QR; \\
- \text{for all labels $a \in L: M_{\alpha}(a) \cdot B_{\alpha} = B_{\alpha}F(a)$};\\
\end{array}\right\} \text{$B_{\alpha} \text{ is a basis for } < M_\alpha(w) \cdot \one:  w \in L^{*}  >$} \\
&\left.\begin{array}{l} 
- H_{\alpha}^{T}H_{\alpha} = I_{r'};\\
- \text{for all } 0 \le i \ls r': B_{\alpha}^{T} \vech_i= \zero;\\
- r + r' = |S|; \\ 
\end{array}\right\} \parbox{3.1in}{$H_\alpha$ is an orthonormal basis for the orthogonal complement of $<B_{\alpha}>$} \\
&\left. \begin{array}{l}
-\text{for all $s \in S: v_s \cdot \vech_i = 0$ for all $0 \le i \ls r'$};\\
-\text{for all $s \in S: v_s \cdot \vecx \ge b_s$ for all $\vecx \in \mathcal{P}_s$, i.e.,} \\
  \text{\phantom{for all} $v_s \cdot \vecc_s = b_s$ and $v_s \cdot \vecc \ge 0$ for all spanning vectors $\vecc$ of~$\mathcal{P}_s$};\\
-\text{for all $s \in S: b_s \gr 0$}.\\
\end{array}\right\} \parbox{1.0in}{for all $s \in S$, $<H_{\alpha}>$ and $\mathcal{P}_s$ do not intersect}
\end{flalign*}


\end{proof}


Let $\mu_1, \mu_2$ be two subdistributions on $S$. We write $\mu_1 \le \mu_2$ to say that $\mu_1(u) \le \mu_2(u)$ for all $u \in  S$. According to \cite[Proposition~17]{CK2014}, the following proposition holds.

\begin{proposition}\label{proposition:tvdistance-one}
	We have $d_{\tv}(\mu , \nu) \ls 1$ if and only if there are $w \in L^{*}$ and $\mu_1$ and $\mu_2$ with $\mu_1 \le \mu M(w)$ and $\mu_2 \le \nu M(w)$ and $\mu_1 \equiv \mu_2$ and $|\mu_1| = |\mu_2| \gr 0$.
\end{proposition}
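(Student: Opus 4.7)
The plan is to prove the two directions of the equivalence separately: the backward direction ($\Leftarrow$) is a clean measure-theoretic construction, while the forward direction ($\Rightarrow$) is the main challenge and leverages Proposition~\ref{proposition:tvdistance-one-theorem}.

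For the backward direction, given a witness $(w, \mu_1, \mu_2)$, I would define a sub-probability pre-measure $\rho$ on the algebra of cylinders by setting $\rho(\Run(u)) := |\mu_1 M(v)|$ when $u = wv$ for some $v \in L^*$, and $\rho(\Run(u)) := 0$ otherwise. Consistency under cylinder refinement reduces to $|\mu_1 M(v)| = \sum_{a \in L} |\mu_1 M(va)|$, which holds because $\sum_{a \in L} M(a)$ is row-stochastic. The key dominations are $\rho(\Run(wv)) = |\mu_1 M(v)| \le |\mu M(w) M(v)| = \Pr_\mu(\Run(wv))$ (from $\mu_1 \le \mu M(w)$ componentwise and the nonnegativity of $M(v)$) and symmetrically $\rho(\Run(wv)) = |\mu_2 M(v)| \le \Pr_\nu(\Run(wv))$ (using $\mu_1 \equiv \mu_2$ to rewrite $|\mu_1 M(v)|$ and then $\mu_2 \le \nu M(w)$). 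Carath\'eodory extension then gives a sub-probability measure dominated by both $\Pr_\mu$ and $\Pr_\nu$ with total mass $|\mu_1| \gr 0$, and the classical identity $1 - d_{\tv}(\mu,\nu) = (\Pr_\mu \wedge \Pr_\nu)(L^\omega)$ yields $d_{\tv}(\mu,\nu) \ls 1$.

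For the forward direction, I would start from Proposition~\ref{proposition:tvdistance-one-theorem}, which yields $r_1 \in S$ and sub-distributions $\mu_1', \mu_2'$ with $\mu_1' \equiv \mu_2'$, $r_1 \in \support(\mu_1')$, and $\support(\mu_2') \subseteq R_{r_1}^{\mu,\nu}$. If one could find a single word $w$ with $\support(\mu_1') \subseteq \support(\mu M(w))$ and $\support(\mu_2') \subseteq \support(\nu M(w))$, then the assignment $\mu_1 := c \mu_1'$ and $\mu_2 := c \mu_2'$ for sufficiently small $c \gr 0$ would preserve trace equivalence, keep positive mass, and give the required dominations.

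The main obstacle is that Proposition~\ref{proposition:tvdistance-one-theorem} only provides a possibly distinct witness word $w_{r_2}$ for each $r_2 \in \support(\mu_2')$, so a common $w$ need not exist a priori. To resolve this, the plan is to iteratively refine the pair $(\mu_1', \mu_2')$: pick any $r_2 \in \support(\mu_2')$ together with a witnessing word $w_{r_2}$, split off the maximal convex sub-pair $(\mu_1'', \mu_2'')$ whose supports fit inside $\support(\mu M(w_{r_2}))$ and $\support(\nu M(w_{r_2}))$ respectively while preserving trace equivalence, and if this sub-pair has positive mass apply the scaling argument above; otherwise iterate on the residual, whose $\mu_2'$-support is strictly smaller and still satisfies the hypothesis of Proposition~\ref{proposition:tvdistance-one-theorem}. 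Termination follows from finiteness of $S$. The technically delicate step I expect to wrestle with is the extraction of such a sub-pair with positive mass: trace equivalence is a linear constraint, but combined with support restrictions it requires a careful LP/linear-algebraic argument over the trace equivalence cone.
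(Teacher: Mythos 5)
The paper does not prove this proposition; it is cited from \cite[Proposition~17]{CK2014}, so there is no in-paper argument to compare against. On its own terms, your backward direction has a fixable bug: you set $\rho(\Run(u)) = 0$ whenever $u$ is not of the form $wv$, which includes every strict prefix of $w$; but $\Run(u) \supseteq \Run(w)$ while $\rho(\Run(w)) = |\mu_1| \gr 0$, so monotonicity and additivity on the cylinder algebra both fail. Set instead $\rho(\Run(u)) = |\mu_1|$ for $u$ a prefix of $w$, or equivalently take $\rho$ to be the pushforward of $\Pr_{\mu_1}$ along $\sigma \mapsto w\sigma$; then the dominations you state and the identity $1 - d_{\tv}(\mu,\nu) = (\Pr_\mu \wedge \Pr_\nu)(L^\omega)$ close that direction.

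The forward direction has a genuine gap, and it is precisely the step you flag and do not close. \cref{proposition:tvdistance-one-theorem} constrains only a single coordinate $r_1$ of $\support(\mu_1')$ and supplies a possibly different word $w_{r_2}$ for each $r_2 \in \support(\mu_2')$; it gives no control over the rest of $\support(\mu_1')$ nor a common word. Your plan to iteratively extract a positive-mass trace-equivalent sub-pair whose supports fit inside $\support(\mu M(w_{r_2}))$ and $\support(\nu M(w_{r_2}))$ is unjustified: trace equivalence combined with support restrictions is a nontrivial cone constraint, and nothing in the sketch shows a positive-mass sub-pair must exist. Even if one did, the residual need not retain $r_1$ in its $\mu_1'$-support (breaking the hypothesis you would need to re-apply \cref{proposition:tvdistance-one-theorem}), and ``termination by finiteness of $S$'' is not established, since a zero-mass split does not shrink supports. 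Note also that in \cite{CK2014} Proposition~17 precedes and is used to derive Theorem~21 (your \cref{proposition:tvdistance-one-theorem}); running the logic in reverse is not automatically wrong, but it means the analytic work the forward direction requires is not actually carried out anywhere in your sketch.
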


\theoryTVDistanceOneLB*
\begin{proof}
	Let $<S, \C>$ be an instance of Set Splitting where $S = \{e_1, \cdots, e_n\}$ and $\C = \{C_1, \cdots, C_m\}$ is a collection of subsets of $S$. We construct an MDP $\D$, see \figurename~\ref{fig:reductionfromSetSplit} for example, consisting of the following states: two states $s$ and $t$, a state $e_i$ for each element in $S$, twin states $C_j$ and $C_j'$ for each element in $\C$, two sink states $u$ and $v$. State $v$ has label $b$ while all other states have label $a$. State $s$ ($t$) has a single action which goes with uniform probability $\frac{1}{m}$ to states $C_i$ ($C_i'$) for $1 \le i \le m$.  For each $e_i \in C_j$, there is an action from state $C_j$ and $C_j'$ leading to state $e_i$ with probability one. Each state $e_i$ has two actions going to the sink states $u$ and $v$ with probability one, respectively. We show that 
	$$<S, \C> \in {\mbox{Set Splitting}} \iff \exists \, \mbox{memoryless strategy $\alpha$ such that} \; d_{\tv}(\mu, \nu) = 1 \, \mbox{in} \; \D(\alpha).$$
	
	Intuitively, making $C_i$ (resp.~$C_i'$) select the transition to $e_j$ simulates the membership of $e_j$ in $S_1$ (resp.~$S_2$).
	
	($\implies$)
	Let $S_1$ and $S_2$ be the two disjoint sets that partition $S$ and split the elements in $\C$. For the MDP $\D$, we define an MD strategy $\alpha$ as follows: 
	let state $e_i \in S_1$ select the action transitioning to $u$ and state $e_i \in S_2$ the action to $v$; let state $C_i$ select an available action that goes to a state in $S_1$ and $C_i'$ an available action that goes to a state in $S_2$. 
	
	We show that $d_{\tv}(\mu, \nu) = 1$ in the LMC $\D(\alpha)$. Let $\mu_1$ and $\mu_2$ be subdistributions over the states that are reachable from $s$ and $t$, respectively. Let $E' \in \mathcal{F}$ be a set of words always ending with infinite number of $b$'s. Since a word emitted by running $\D(\alpha)$ from an arbitrary state in $\support(\mu_2)$ always ends with infinitely many $b$'s, we have $\Pr_{\mu_2}(E') \gr 0$. On the other hand, a word emitted by running $\D(\alpha)$ from an arbitrary state in $\support(\mu_1)$ always ends with infinitely many $a$'s, we have $\Pr_{\mu_1}(E') = 0$. Then, 
	\begin{align*}
		d_{\tv}(\mu_1, \mu_2)
		=&  \sup_{E \in \mathcal{F}} |\textstyle\Pr_{\mu_1}(E) - \textstyle\Pr_{\mu_2}(E)| \\
		 \ge & |\textstyle\Pr_{\mu_1}(E') - \textstyle\Pr_{\mu_2}(E')|  \gr 0
	\end{align*}	   
	By Proposition~\ref{proposition:tvdistance-one}, we have $d_{\tv}(\mu,\nu) = 1$ in the LMC $\D(\alpha)$.
	
	($\impliedby$)
	Let $\alpha$ be a memoryless strategy for $\D$ such that $d_{\tv}(\mu, \nu) = 1$. Let~$\tau$ be the transition function for the LMC $\D(\alpha)$. Let $S_1 = \bigcup_{C_i}\support(\tau(C_i))$ and $S_2 = S \setminus S_1$. Let $S_2' = \bigcup_{C_i'}\support(\tau(C_i'))$. It suffices to show that $S_2' \subseteq S_2$ and $S_1$ and $S_2'$ split the elements of $\C$.

	Since $S_2' \bigcap S_1 = \emptyset$, otherwise $d_{\tv} (\mu, \nu) \ls 1$ by Proposition~\ref{proposition:tvdistance-one}. We have $S_2' \subseteq S_2$. We prove by contradiction that $S_1$ and $S_2'$ split the elements of $\C$. Assume there is a set $C_i \in \C$ which is not split by $S_1$ and $S_2'$. Furthermore, without loss of generality, assume $C_i \subseteq S_1$, that is, for all states $e \in C_i: e \in \support(\tau(C_i))$. Since state $C_i$ and $C_i'$ have the same successors in the MDP $\D$, there must exist a state $e' \in C_i$ such that $e' \in \support(\tau(C_i'))$. Let $\mu_1 = \mu_2 =\delta_{e'}$. We have $d_{\tv}(\mu_1, \mu_2) = 0$, which leads to the desired contradiction $d_{\tv} (\mu, \nu) \ls 1$ in $\D (\alpha)$ by Proposition~\ref{proposition:tvdistance-one}.
\end{proof}

\subsection{Proofs of $\PBeqONE$}\label{appendix:pbDistanceOne}
Next, we show that the problem $\PBeqONE$ is {\sf NP}-complete. Recall that $\PBeqONE$ is the problem asking whether there is a memoryless strategy $\alpha$ for $\D$ such that the probabilistic bisimilarity distance of the two initial states is one in the induced labelled Markov chain $\D(\alpha)$, i.e., $d_{\pb}(s, t) = 1$.

\begin{definition}\label{definition:pb-distance-graph}
	The directed graph $G = (V, E)$ is defined by
	\[
	\begin{array}{rcl}
	V &=& \{ (u, v): \ell(u) = \ell(v) \}\\
	E &=& \{\, <(u, v), (s', t')>: \tau(s')(u) \gr 0 \wedge \tau(t')(v) \gr 0 \,\}
	\end{array}
	\]	
\end{definition}

By \cite[Theorem~4, Proposition~5]{TvB2018}, the following proposition holds.

\begin{proposition}\label{proposition:pbdistance-neq-one-graph-reachability}
	We have $d_{\pb}(s, t) \ls 1$ if and only if in the graph $G = (V, E)$ the vertex $(s, t)$ is reachable from some $(u, v)$ with $u \sim v$.
\end{proposition}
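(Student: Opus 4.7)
The plan is to prove both directions of the equivalence. Throughout, let $R$ denote the set of $G$-vertices reachable from $B := \{(u,v) : u \sim v\}$, and let $\bar R := V \setminus R$. The backward direction is routine, whereas the forward direction is the main obstacle.

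For the backward ($\Leftarrow$) direction, I induct on the length of a shortest $G$-path from $B$ to $(s,t)$. If $(s,t) \in B$ then $d_{\pb}(s,t) = 0 \ls 1$. For the inductive step, $(s,t)$ has a $G$-predecessor $(u,v) \in R$ on such a path with $d_{\pb}(u,v) \ls 1$ by the hypothesis, and the edge $(u,v) \to (s,t)$ means $\tau(s)(u), \tau(t)(v) \gr 0$. I then construct a coupling $\omega$ of $\tau(s), \tau(t)$ placing mass $q := \min(\tau(s)(u), \tau(t)(v)) \gr 0$ on $(u,v)$ (e.g., by superimposing the Dirac on $(u,v)$ with a product coupling of the residual mass), which yields $d_{\pb}(s,t) \le q\, d_{\pb}(u,v) + (1-q) \ls 1$.

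For the forward ($\Rightarrow$) direction, assume $d_{\pb}(s,t) \ls 1$; since $d_{\pb} = 1$ when labels disagree, $(s,t) \in V$, and it suffices to derive a contradiction from $(s,t) \in L \cap \bar R$, where $L := \{(u,v) \in V : d_{\pb}(u,v) \ls 1\}$. The key combinatorial observation is that for every $(u,v) \in \bar R$, every coupling of $\tau(u), \tau(v)$ has its support contained in $(S \times S) \setminus R$: if the support contained some $(u',v') \in R$, then $\tau(u)(u'), \tau(v)(v') \gr 0$ together with $\ell(u') = \ell(v')$ would supply a $G$-edge $(u',v') \to (u,v)$, forcing $(u,v) \in R$.

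To derive the contradiction, let $M := \max\{1 - d_{\pb}(c) : c \in L \cap \bar R\} \gr 0$, attained since $L \cap \bar R$ is finite and nonempty, and let $S_M$ be the set of maximizers. For $c = (s',t') \in S_M$, pick an optimal coupling $\omega^*_c$ and rewrite $M = \sum \omega^*_c(u,v)(1 - d_{\pb}(u,v))$. The key observation gives $\support(\omega^*_c) \subseteq \bar R \cup (S \times S \setminus V)$; on this set, $1 - d_{\pb}$ is $0$ outside $L \cap \bar R$ and at most $M$ inside it, so the only way for a convex combination of values in $[0,M]$ to equal $M$ is for the support to lie entirely in $S_M$. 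Now let $R^*$ be the equivalence closure on $S$ of $S_M \cup \{(x,x) : x \in S\}$. For any $(s',t') \in S_M$ and any $R^*$-equivalence class $E$, every pair $(u,v) \in \support(\omega^*_c) \subseteq S_M$ satisfies $u\, R^*\, v$, so marginalizing $\omega^*_c$ gives $\tau(s')(E) = \omega^*_c(E \times E) = \tau(t')(E)$; combined with $\ell(s') = \ell(t')$ this makes $R^*$ a probabilistic bisimulation. Hence $R^* \subseteq \mathord{\sim}$, giving $S_M \subseteq B \subseteq R$, contradicting $S_M \subseteq \bar R$.
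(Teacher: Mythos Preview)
The paper does not prove this proposition; it simply cites it from~\cite{TvB2018} (``By \cite[Theorem~4, Proposition~5]{TvB2018}, the following proposition holds''). Your argument, by contrast, is a complete self-contained proof, and it is correct.

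A couple of places where the write-up is a touch terse but still valid: in the backward direction you implicitly use that $(s,t)\in V$ (so that the fixpoint formula applies rather than the $d_{\pb}=1$ clause), which is immediate since $(s,t)$ is a $G$-vertex. In the forward direction, after showing that the optimal-coupling support for each $c\in S_M$ lies in $S_M$ and deducing $\tau(s')(E)=\omega^*_c(E\times E)=\tau(t')(E)$ for $(s',t')\in S_M$, you assert that $R^*$ is a probabilistic bisimulation. Strictly, the bisimulation condition must be checked for \emph{every} pair in $R^*$, not just those in $S_M$; but since $R^*$ is the reflexive--symmetric--transitive closure of $S_M$, and both the label-equality and the transition-agreement conditions are symmetric and propagate along chains, this extension is routine. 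With that small elaboration the argument is complete, and the final contradiction $S_M\subseteq\mathord\sim=B\subseteq R$ versus $S_M\subseteq\bar R$ goes through.
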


\begin{theorem}\label{theorem:pb-distanceone-ub}
	The problem $\PBeqONE$ is in {\sf NP}.
\end{theorem}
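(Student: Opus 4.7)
The plan is to nondeterministically guess the support structure of the witness strategy and verify the distance-one condition in polynomial time via \cref{proposition:pbdistance-neq-one-graph-reachability}. The certificate will be a family of subsets $\{D_u \subseteq \A(u)\}_{u \in S}$, intended to represent $\support(\alpha(u))$ for a memoryless witness~$\alpha$. The key observation is that the graph $G = (V, E)$ of \cref{definition:pb-distance-graph} depends only on these supports and not on the exact transition probabilities: for every $\alpha$ with $\support(\alpha(u)) = D_u$, one has $\tau(v)(u) \gr 0$ iff some $\m \in D_v$ satisfies $\varphi(v,\m)(u) \gr 0$. Thus $G$ is recoverable from $\{D_u\}$ in polynomial time.

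The verifier will then run the variant of \cref{alg:polynomial-optimistic-partition-refinement} in which ``$\A(u)$'' is replaced by ``$D_u$'' throughout the definition of $\equiv_{X_i}$, and take $X$ to be the fixed-point partition. Immediate adaptations of \cref{lemma:partition-gets-finer,lemma:X-is-a-bisimulation} will show that $\equiv_X$ is a probabilistic bisimulation in $\D(\alpha)$ for every memoryless $\alpha$ with $\support(\alpha(u)) \subseteq D_u$, hence $\equiv_X \subseteq \sim_{\D(\alpha)}$ for all such $\alpha$. Dually, a restricted version of \cref{alg:polynomial-construction-partial-strategy} will produce in polynomial time a specific memoryless strategy $\beta$ with $\support(\beta(u)) = D_u$ and $\sim_{\D(\beta)} = \equiv_X$; the prime-number randomisation will rule out accidental bisimilarities between states in distinct $\equiv_X$-classes.

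The verifier finally checks in polynomial time that in $G$ no vertex $(u,v) \in V$ with $u \equiv_X v$ reaches $(s,t)$, and accepts if so. For soundness, acceptance together with $\sim_{\D(\beta)} = \equiv_X$ and the fact that the graph of $\D(\beta)$ equals $G$ will yield $d_{\pb}(s,t) = 1$ in $\D(\beta)$ by \cref{proposition:pbdistance-neq-one-graph-reachability}. For completeness, given any witness strategy~$\alpha^*$, setting $D_u := \support(\alpha^*(u))$ makes the graph of $\D(\alpha^*)$ coincide with $G$, and the inclusion $\equiv_X \subseteq \sim_{\D(\alpha^*)}$ transfers the non-reachability from $\sim_{\D(\alpha^*)}$-bisimilar pairs (guaranteed by \cref{proposition:pbdistance-neq-one-graph-reachability}) to the $\equiv_X$-equivalent pairs inspected by the verifier.

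The main technical obstacle will be the faithful adaptation of \cref{lemma:partial-strategy-construction} to the support-restricted setting: one must show that the prime-number construction still forces $\sim^{i}_{\D(\alpha)} \subseteq \equiv_{X_i}$ for every $\alpha \sqsupseteq \alpha'_i$ when only actions in $D_u$ are available at state~$u$. The argument will parallel the original, with $\A(u)$ systematically replaced by $D_u$, the same coprimality arithmetic on primes assigned to states, and the two distinguishing actions at each refinement step chosen from within~$D_u$, which is always possible by the definition of the adapted refinement.
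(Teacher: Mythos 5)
Your approach is essentially the one the paper takes. Both guess the structural information that fixes the graph $G$ of \cref{definition:pb-distance-graph} (you via the action supports $\{D_u\}$, the paper via $G$ directly — these determine one another), both then apply the polynomial-time ``finest bisimulation'' strategy construction from \cref{section:pbInequivalence} to the action-restricted sub-MDP (the paper packages this by building a restricted MDP $\D'$ and invoking \cref{theorem:polynomial-time-compute-strategy-PBneqZero} as a black box; you re-run \cref{alg:polynomial-optimistic-partition-refinement} and \cref{alg:polynomial-construction-partial-strategy} with $\A(u)$ replaced by $D_u$, which is the same computation), and both conclude via the reachability criterion of \cref{proposition:pbdistance-neq-one-graph-reachability}.

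One minor overstatement to fix: the constructed $\beta$ will in general satisfy $\support(\beta(u)) \subsetneq D_u$, since \cref{alg:polynomial-construction-partial-strategy} puts weight on at most two actions per split state and the fallback for unconstrained states is a Dirac; consequently the graph of $\D(\beta)$ may be a \emph{proper} subgraph of $G$. This does not break soundness — any pair reaching $(s,t)$ in the graph of $\D(\beta)$ also reaches it in $G$, so your check on $G$ is conservative — but the claimed equalities $\support(\beta(u))=D_u$ and ``graph of $\D(\beta)$ equals $G$'' should be weakened to inclusions, as the paper does when noting that $G'$ is a subgraph of $G$ and arguing $R' \subseteq R$.
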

\begin{proof}
 Suppose there exists a memoryless strategy $\beta$ such that $d_{\pb}(s, t) = 1$ in $\D(\beta)$. Let $G$ be the graph of \cref{definition:pb-distance-graph} induced by the LMC $\D(\beta)$. Consider an MDP $\D' = <S,\A', L, \varphi', \ell>$, which is over the same state space as $\D$ but is restricted to choose actions that conform to the graph $G$. Thus, $\beta$ is also a strategy of $\D'$. Furthermore, we have $\D'(\alpha) = \D(\alpha)$ for all memoryless strategy $\alpha$ of~$\D'$.

According to \cref{theorem:polynomial-time-compute-strategy-PBneqZero}, a memoryless strategy $\alpha'$ of $\D'$ such that $\mathord{\sim_{\D'(\alpha')}} \subseteq \mathord{\sim_{\D'(\alpha)}}$ for all memoryless strategy~$\alpha$ can be computed in polynomial time. Thus, we have 
$\mathord{\sim_{\D'(\alpha')}} \subseteq \mathord{\sim_{\D'(\beta)}}$,
that is, if $u \not\sim_{\D'(\beta)} v$ then $u \not\sim_{\D'(\alpha')} v$ for $u, v \in S$. Let $G'$ be the graph of \cref{definition:pb-distance-graph} for the LMC $\D'(\alpha')$. Since $\D'$ conforms to $G$, $G'$ is a subgraph of $G$. Let $R$ and $R'$ be the set of state pairs that can reach $(s, t)$ in $G$ and $G'$, respectively.  We have  $R' \subseteq R$.

According to \cref{proposition:pbdistance-neq-one-graph-reachability}, since $d_{\pb}(s, t) = 1$ in the LMC $\D'(\beta)$, we have $u \not\sim_{\D'(\beta)} v$ for all $(u, v) \in R$. By $\mathord{\sim_{\D'(\alpha')}} \subseteq \mathord{\sim_{\D'(\beta)}}$ and $R' \subseteq R$, we have $u \not\sim_{\D'(\alpha')} v$ for all $(u, v) \in R'$. By \cref{proposition:pbdistance-neq-one-graph-reachability}, we have $d_{\pb}(s, t) = 1$ in the LMC $\D'(\alpha')$, and hence, $\alpha'$ is a memoryless strategy that witnesses $d_{\pb}(s, t) = 1$. 

This induces the following nondeterministic algorithm: we guess the graph $G$ and check whether $d_{\pb}(s, t) = 1$ holds in $\D(\alpha')$, where both the construction of the memoryless strategy $\alpha'$ (using Algorithm~\ref{alg:polynomial-construction-partial-strategy}) and the checking of $d_{\pb}(s, t) = 1$ are in polynomial time.
\end{proof}

\begin{theorem}\label{theorem:pb-distanceone-lb}
	The Set Splitting problem is polynomial-time many-one reducible to $\PBeqONE$, hence $\PBeqONE$ is {\sf NP}-hard. 
\end{theorem}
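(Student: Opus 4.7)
The plan is to reuse verbatim the polynomial-time reduction from Set Splitting developed for \cref{theorem:tvdistance-one-NP-hardness}: given $\langle S, \C \rangle$, construct the MDP $\D$ of \cref{fig:reductionfromSetSplit} with initial states $s$ and $t$, and prove
\[
\langle S, \C \rangle \in \mbox{Set Splitting} \;\iff\; \exists\,\mbox{memoryless $\alpha$ such that } d_{\pb}(s,t) = 1 \mbox{ in } \D(\alpha).
\]
The analysis will hinge on \cref{proposition:pbdistance-neq-one-graph-reachability}: $d_{\pb}(s,t) = 1$ in $\D(\alpha)$ iff no bisimilar pair $(u,v)$ can reach $(s,t)$ in the induced graph $G$, which (reading edges against their direction) amounts to saying that no bisimilar pair lies on the forward LMC trajectory that starts at $(s,t)$ and remains within~$V$.

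For the forward direction I would take any Set Splitting solution $(S_1, S_2)$ and reuse the same MD strategy as in the $\TVeqONE$ proof: each $C_i$ picks some $e_k \in C_i \cap S_1$, each $C_j'$ picks some $e_l \in C_j \cap S_2$, each $e_k \in S_1$ fires the action to $u$, and each $e_k \in S_2$ fires the action to $v$. Enumerating the pairs forward-reachable from $(s,t)$ that remain in $V$ yields $(s,t)$, the pairs $(C_i, C_j')$, and pairs $(e_k, e_l)$ with $e_k \in S_1, e_l \in S_2$; the next forward step leaves $V$ because $u$ and $v$ have mismatched labels. None of the three families is bisimilar, since in each case the two components transition deterministically into different bisimilarity classes. \cref{proposition:pbdistance-neq-one-graph-reachability} then yields $d_{\pb}(s,t) = 1$.

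The hard part is the converse. Given any memoryless $\alpha$ with $d_{\pb}(s,t) = 1$, I would define $T = \bigcup_i \support(\tau(C_i))$ and $T' = \bigcup_j \support(\tau(C_j'))$ in $\D(\alpha)$ and claim $T \cap T' = \emptyset$. Indeed, any $e_k \in T \cap T'$ would make the reflexive, trivially bisimilar pair $(e_k, e_k)$ appear along a positive-probability forward trajectory $(s,t) \to (C_i, C_j') \to (e_k, e_k)$ for suitable $i,j$, contradicting \cref{proposition:pbdistance-neq-one-graph-reachability}. Setting $S_1 := T$ and $S_2 := S \setminus T$, the support condition $\emptyset \neq \support(\alpha(C_i)) \subseteq C_i$ forces every $C_i$ to meet $T$, and every $C_j$ to meet $T' \subseteq S_2$, so no $C_i$ is contained in either side, yielding a valid Set Splitting. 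The main subtlety I foresee is making sure the argument does not secretly need $\alpha$ to be MD: randomisation cannot save the adversary here, because the existence of a single shared element $e_k \in T \cap T'$ already suffices to construct a bisimilar reachable pair along a positive-probability path.
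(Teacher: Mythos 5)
Your proof is correct and uses the same reduction (the MDP of \cref{fig:reductionfromSetSplit}), but the argument for the forward direction differs from the paper's. The paper's forward direction is a one-liner: it reuses \cref{theorem:tvdistance-one-NP-hardness} to get a strategy~$\alpha$ with $d_{\tv}(\delta_s,\delta_t)=1$, and then invokes the general bound $d_{\tv} \le d_{\pb}$ from~\cite{CBW2012}, which immediately gives $d_{\pb}(s,t)=1$. You instead verify $d_{\pb}(s,t)=1$ directly through \cref{proposition:pbdistance-neq-one-graph-reachability}, enumerating the forward-reachable pairs in the induced graph and checking that none is bisimilar. That works (with a small clarification: for $(s,t)$ itself the successors are not deterministic, but all $C_i$ land in one bisimilarity class and all $C_j'$ in a disjoint one, so the class probabilities disagree), but it is more labour-intensive than citing the upper bound. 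For the converse, your idea and the paper's are essentially the same: both boil down to showing $\support(\tau(C_i)) \cap \support(\tau(C_j'))=\emptyset$ for all $i,j$. The paper derives this by propagating $d_{\pb}=1$ through the coupling-minimum in the definition of $\Delta$, whereas you derive it by exhibiting a positive-probability path from the trivially bisimilar pair $(e_k,e_k)$ to $(s,t)$ in the graph, which by \cref{proposition:pbdistance-neq-one-graph-reachability} would force $d_{\pb}(s,t)<1$. These are two phrasings of the same fact. Your observation that randomisation in $\alpha$ cannot help (all that matters is nonemptiness of the supports) is also the reason the paper's argument applies to arbitrary memoryless strategies. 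One minor notational slip: you write $\support(\alpha(C_i))$ where you mean $\support(\tau(C_i))$, the support of the induced transition distribution.
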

\begin{proof}
	Given an instance of Set Splitting $<S, \C>$ where $S = \{e_1, \cdots, e_n\}$ and $\C = \{C_1, \cdots, C_m\}$ is a collection of subsets of $S$, we construct the same MDP $\D$ as shown in \cref{theorem:tvdistance-one-NP-hardness}, see \figurename~\ref{fig:reductionfromSetSplit} for example. We show that 
	$$<S, C> \in {\mbox{Set Splitting}} \iff \exists \, \alpha \;\mbox{for $\D$ such that} \; d_{\pb}(s, t) = 1 \, \mbox{in} \; \mathcal{D}(\alpha).$$
	
	($\implies$)
	Let $S_1$ and $S_2$ be the two disjoint sets that partition $S$ and split the elements of $C$. According to Theorem~\ref{theorem:tvdistance-one-NP-hardness}, there exists a memoryless strategy $\alpha$ such that $d_{\tv}(\delta_s, \delta_t) = 1$ in the induced LMC $\D(\alpha)$. Since probabilistic bisimilarity distance is an upper bound of the total variational distance \cite{CBW2012},  we have that $d_{\pb}(s, t) = 1$ in $\D(\alpha)$.
	
	($\impliedby$)
	Let $\alpha$ be a memoryless strategy for $\D$ such that $d_{\pb}(s, t) = 1$ in the LMC  $\D(\alpha)$. Let~$\tau$ be the transition function for the LMC $\D(\alpha)$. Let $S_1 = \bigcup_{C_i}\support(\tau(C_i))$ and $S_2 = S \setminus S_1$. Let $S_2' = \bigcup_{C_i'}\support(\tau(C_i'))$. It suffices to show that $S_1$ and $S_2'$ split the elements of $\C$ and $S_2' \subseteq S_2$.
	
	Since $d_{\pb}(s, t) = 1$, by definition of probabilistic bisimilarity distance, $d_{\pb}(C_i, C_j') = 1$ for any choice of $C_i$ and $C_j'$. We can obtain, by the same argument,  $d_{\pb}(e_k, e_l) = 1$ for any $e_k \in \support(\tau(C_i))$ and $e_l \in \support(\tau(C_j'))$. Thus, we have $\support(\tau(C_i)) \cap \,\support(\tau(C_j')) = \emptyset$ for any choice of $C_i$ and $C_j'$. It follows that $S_1 \cap S_2' = \emptyset$, that is, $S_2' \subseteq S_2$. Furthermore, for any set $C_i \in \C$, there are two states $e_k, e_l \in C_i$ such that $e_k \in \support(\tau(C_i))$ and $e_l \in \support(\tau(C_i'))$, that is, $e_k$ and $e_l$  split the set $C_i$.  
\end{proof}

\theoremPBDistanceOneUBLB*
\begin{proof}
	It follows from \cref{theorem:pb-distanceone-ub} and \cref{theorem:pb-distanceone-lb}.
\end{proof}

\section{Proofs of \cref{section:summaryDistanceZeroAndNeqOne}}
\subsection{Proofs of $\TVeqZERO$}\label{appendix:tvDistanceZERO}
In this section we show that the problem $\TVeqZERO$ is $\ETR$-complete. Recall that $\TVeqZERO$ is the problem asking whether there is a memoryless strategy $\alpha$ for $\D$ such that the total variation distance of the two initial distributions is zero in the induced labelled Markov chain $\D(\alpha)$, i.e., $d_{\tv}(\mu, \nu) = 0$.

The following proposition is adapted from \cite[Proposition~10]{14KW-ICALP}, which will be used to prove Theorem~\ref{theorem:trace-equivalence-reduce-to-ETR}.
\begin{proposition}
	\label{prop:language-equivalence}
	Let $\mathcal{M} = <S,L,\tau,\ell>$ be an LMC and $\mu$ and $\nu$ be two (sub)distributions. We have that $\mu \equiv \nu$ if and only if there exists $F \in \mathbb{R}^{S \times S}$ such that
	\begin{itemize}
		\item[-]
		the first row of $F$ is $\mu -\nu$;
		\item[-]
		$F \mathbf{1} = \mathbf{0}$
		and for each label $a \in L$ there exists a matrix $B(a) \in \mathbb{R}^{S \times S}$ such that 
		$$F  M(a) = B(a) F.$$ 
	\end{itemize}
\end{proposition}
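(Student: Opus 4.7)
The plan is to prove both directions via the vector space of iterated row differences $W := \langle (\mu-\nu) M(w) \suchthat w \in L^* \rangle \subseteq \Reals^{S}$, which is closed under right-multiplication by any $M(a)$ since $(\mu-\nu)M(w) \cdot M(a) = (\mu-\nu)M(wa)$.

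For the ``if'' direction, I would first establish by induction on $|w|$ that $F M(w) = B(w) F$ for some matrix $B(w) \in \Reals^{S \times S}$. The base case $w = \varepsilon$ is trivial with $B(\varepsilon) = I$. For the inductive step $w = a w'$, the hypothesis $FM(a) = B(a)F$ gives $FM(aw') = B(a) F M(w') = B(a) B(w') F$. Then, letting $e_1$ denote the first standard row basis vector so that $e_1 F = \mu - \nu$, I would compute
\[
(\mu - \nu) M(w) \one \;=\; e_1 F M(w) \one \;=\; e_1 B(w) F \one \;=\; e_1 B(w) \zero \;=\; 0,
\]
which means $|\mu M(w)| = |\nu M(w)|$ for every $w \in L^*$, i.e.\ $\mu \equiv \nu$.

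For the ``only if'' direction, assume $\mu \equiv \nu$. Every generator $(\mu-\nu)M(w)$ of $W$ satisfies $(\mu-\nu)M(w)\one = 0$, so $v \one = 0$ for all $v \in W$. Since $W \subseteq \Reals^S$, its dimension $k$ is at most $|S|$. If $\mu = \nu$ take $F = 0$; otherwise $\mu - \nu \in W \setminus\{0\}$, so I would extend $\mu - \nu$ to a basis $v_1 = \mu-\nu, v_2, \ldots, v_k$ of $W$. Define $F$ to be the $S \times S$ matrix whose first $k$ rows are $v_1, \ldots, v_k$ and whose remaining rows are zero. Then the first row of $F$ is $\mu-\nu$ and $F\one = \zero$ because each $v_i \in W$.

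The remaining task, and the only one requiring a small argument, is to exhibit $B(a)$ with $FM(a) = B(a)F$. Because $W$ is closed under right-multiplication by $M(a)$, for each $i \le k$ there exist scalars $b_{ij}(a)$ with $v_i M(a) = \sum_{j=1}^k b_{ij}(a)\, v_j$. Taking $B(a)$ to be the $S \times S$ matrix whose top-left $k \times k$ block is $(b_{ij}(a))$ and which is zero elsewhere, the rows of $B(a) F$ match the rows of $FM(a)$ (with zero rows below row $k$), establishing the identity. The construction uses $\mu-\nu$ as the first basis element, so the three conditions are simultaneously satisfied. I do not expect any serious obstacle: the only subtlety is the basis extension in the case $\mu \ne \nu$ and the use of the stability of $W$ under $M(a)$, both of which are routine linear algebra.
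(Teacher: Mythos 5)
Your proof is correct. The paper does not give its own argument for this proposition---it cites it as adapted from Kiefer et al.\ (Proposition~10 of the referenced ICALP paper)---and your vector-space-of-differences argument, taking $W = \langle (\mu-\nu)M(w) : w \in L^*\rangle$ as the invariant subspace, building $F$ from a basis of $W$ anchored at $\mu-\nu$, and reading off $B(a)$ from the action of $M(a)$ on that basis, is precisely the standard linear-algebra proof that the cited source uses.
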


\begin{theorem}
	\label{theorem:trace-equivalence-reduce-to-ETR}
	The problem $\TVeqZERO$ is in $\ETR$.
\end{theorem}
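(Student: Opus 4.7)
The plan is to write a single existential formula over the reals whose satisfiability is equivalent to the existence of a witnessing memoryless strategy. A memoryless strategy $\alpha$ is parameterised by nonnegative reals $x_{s,\m}$ (for $s \in S$, $\m \in \A(s)$) subject to the stochasticity constraints $\sum_{\m \in \A(s)} x_{s,\m} = 1$ and $x_{s,\m} \ge 0$. Under this parameterisation, each entry of the matrix $M_\alpha(a)$ is a linear (hence polynomial) expression in $\bar{x}$, since by definition $\tau(s)(t) = \sum_{\m \in \A(s)} x_{s,\m}\cdot \varphi(s,\m)(t)$.

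The next ingredient is the characterisation of trace equivalence supplied by \cref{prop:language-equivalence}: inside a fixed LMC, $\mu \equiv \nu$ if and only if there exist a matrix $F \in \mathbb{R}^{S \times S}$ and matrices $B(a) \in \mathbb{R}^{S \times S}$ for every $a \in L$ such that (i) the first row of $F$ equals $\mu - \nu$, (ii) $F\mathbf{1} = \mathbf{0}$, and (iii) $F M(a) = B(a) F$ for all $a \in L$. Combining this with the strategy parameterisation, I can write the ETR sentence
\begin{align*}
\exists \bar{x},\ \exists F,\ \exists (B(a))_{a \in L} \;:\;
& \textstyle\bigwedge_{s \in S} \bigl( \sum_{\m \in \A(s)} x_{s,\m} = 1 \bigr) \;\wedge\; \textstyle\bigwedge_{s,\m} x_{s,\m} \ge 0 \\
& \wedge\; (\text{first row of } F = \mu - \nu) \;\wedge\; F\mathbf{1} = \mathbf{0} \\
& \wedge\; \textstyle\bigwedge_{a \in L} \bigl( F\, M_\alpha(a) = B(a)\, F \bigr),
\end{align*}
where each entry of $M_\alpha(a)$ is replaced by its explicit linear expression in $\bar{x}$. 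Every constraint is a polynomial (in fact, of degree at most two) equality or inequality in the real variables $\bar{x}$, $F$ and $B(a)$, so the formula is a genuine instance of {\sf ETR}.

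It remains to observe that the number of variables and the total size of the formula are polynomial in $\size(\D)$: there are $O(|S|\cdot|\A|)$ strategy variables, $O(|S|^2)$ entries in $F$, and $O(|L|\cdot|S|^2)$ entries across the matrices $B(a)$; each constraint involves a polynomial of polynomial bit-length. Correctness follows directly: the first block of constraints says that $\bar{x}$ encodes a memoryless strategy $\alpha$, and by \cref{prop:language-equivalence} (applied to the LMC $\D(\alpha)$ with transition matrices $M_\alpha(a)$) the remaining constraints are satisfiable if and only if $\mu \equiv \nu$ in $\D(\alpha)$, which is in turn equivalent to $d_{\tv}(\mu,\nu)=0$. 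I do not foresee any nontrivial obstacle, since the matrix characterisation of \cref{prop:language-equivalence} was designed precisely to turn trace equivalence into an existential polynomial system; the reduction is essentially a direct combination of that characterisation with the linear dependence of $M_\alpha(a)$ on the strategy.
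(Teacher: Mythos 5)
Your proposal is correct and follows essentially the same route as the paper's own proof: parameterise the memoryless strategy by real variables $x_{s,\m}$, invoke \cref{prop:language-equivalence} to express trace equivalence in $\D(\alpha)$ via an existential polynomial system over $F$ and the $B(a)$, and observe that the combined sentence is a polynomial-size {\sf ETR} instance. The only difference is cosmetic (you write the nonnegativity constraints $x_{s,\m}\ge 0$ explicitly and comment on degree and size bounds, which the paper leaves implicit).
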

\begin{proof}
	The proof is very similar to the one of \cite[Theorem~4.3]{FKS2020}. 
	
	A memoryless strategy $\alpha$ for $\D$ can be characterised by numbers $x_{s,\m} \in [0,1]$ where $s \in S$ and $\m \in \A$ such that $x_{s,\m} = \alpha(s)(\m)$. We write $\bar{x}$ for the collection $(x_{s,\m})_{s\in S, \m \in \A}$. 
	
	According to Proposition~\ref{prop:language-equivalence}, in the LMC $\mathcal{D}(\alpha)$, we have $\mu \equiv \nu$  if and only if the following decision problem, which is a closed formula in the existential theory of the reals, has answer ``yes'':
	
	$\exists \bar{x}$, matrices $B(a) \in \mathbb{R}^{S \times S}$ for all $a \in L$ and a matrix $F \in \mathbb{R}^{S \times S}$ such that 
	\begin{itemize}
		\item[-]
		$\sum_{m \in \A(s)}x_{s, m} = 1$ for all $s \in S$;
		\item[-]
		the first row of $F$ is $\mu - \nu$;
		\item[-]
		$F \mathbf{1} = \mathbf{0}$;
		\item[-]
		$F  M_{\alpha}(a) = B(a) F$ for all $a \in L$. \qedhere
	\end{itemize}
\end{proof}
 
 \begin{figure}[h]
	\begin{minipage}{0.45\linewidth}
		\centering
		\begin{adjustwidth*}{}{0em} \resizebox{1.3\columnwidth}{!}{\centering
\tikzstyle{BoxStyle} = [draw, circle, fill=black, scale=0.4,minimum width = 1pt, minimum height = 1pt]
\rotatebox{270}{
\begin{tikzpicture}[xscale=.6,scale=0.85,>=latex',shorten >=1pt,node distance=3cm,on grid,auto]
  \node[state,initial,initial text={\rotatebox{90}{$1$}}] (qin) at (-8,2) {\rotatebox{90}{$s$}};
  \node[BoxStyle] (bqin) at (-6,2){};
  
 \node[state,fill=green!20] (qin1) at (-2,4) {\rotatebox{90}{$s_{n}$}};
 \node[BoxStyle] (bqin1) at (0,4){};

 \node[state,fill=purple!20] (qinn) at (-2,0) {\rotatebox{90}{$s_{1}$}};
 \node[BoxStyle] (bqinn) at (0,0){};
 \node[label] at (-2,2.1) {$\vdots$};
 
 \node[state] (q1) at (4,4) {\rotatebox{90}{$s_{n}'$}};
 \node[BoxStyle] (bq1) at (6,4){};
 \node[label] at (4,2.1) {$\vdots$};
 \node[state] (qn) at (4,0) {\rotatebox{90}{$s_{1}'$}};
 \node[BoxStyle] (bqn) at (6,0){};
 \node[state,fill=blue!20] (qfi1) at (10,4) {\rotatebox{90}{$p_{m}$}};
 \node[state,fill=orange!20] (qfin) at (10,0) {\rotatebox{90}{$p_{1}$}};
 \node[label] at (10,2.1) {$\vdots$};
 \node[BoxStyle] at (8,5){};
 \node[BoxStyle] at (8,-1){};

 \path[-] (qin) edge  node [midway, right] {} (bqin);
 \path[->] (bqin) edge  node [midway, above, yshift=1mm] {\rotatebox{90}{$\frac{1}{n}$}} (qin1);
 \path[->] (bqin) edge  node [midway, below, yshift=-1mm] {\rotatebox{90}{$\frac{1}{n}$}} (qinn);
   
 \path[-] (qin1) edge  node [midway, right] {} (bqin1);
 \path[->] (bqin1) edge  node [midway, above] {\rotatebox{90}{$1$}} (q1);

 \path[-] (qinn) edge  node [midway, right] {} (bqinn);
 \path[->] (bqinn) edge  node [midway, above] {\rotatebox{90}{$1$}} (qn);

 \path[-] (q1) edge  node [midway, right] {} (bq1);
 \path[->] (bq1) edge [bend left=0] node [very near start, above, yshift=-1mm] {\rotatebox{90}{$J[n,m]$}} (qfi1);
 \node[label] at (7.8,3.4) {$\vdots$};
 \path[->] (bq1) edge [bend right=0] node [very near start, below, xshift=-2mm, yshift=2mm] {\rotatebox{90}{$J[n,1]$}} (qfin);

 \path[-] (qn) edge  node [midway, right] {} (bqn);
 \path[->] (bqn) edge [bend left=0] node [very near start, above,xshift=-2mm, yshift=-2mm] {\rotatebox{90}{$J[1,m]$}} (qfi1);
 \node[label] at (7.8,0.8) {$\vdots$};
 \path[->] (bqn) edge [bend right=0] node [near start, below] {\rotatebox{90}{$J[1,1]$}} (qfin);

 \draw [->,postaction={decorate}] (qfi1) to [out=150, in=30] node [pos=0.15, above] {\rotatebox{90}{$1$}} (qin);
  
 \draw [->,postaction={decorate}] (qfin) to [out=-150, in=-30] node [pos=0.15, below] {\rotatebox{90}{$1$}} (qin);
   
   
\end{tikzpicture}
}}
		\end{adjustwidth*}
	\end{minipage}
	\hspace{0.1cm}
	\begin{minipage}{0.45\linewidth}
		\centering
		\begin{adjustwidth*}{}{0em}
			\resizebox{1.3\columnwidth}{!}{\centering

\tikzstyle{BoxStyle} = [draw, circle, fill=black, scale=0.4,minimum width = 1pt, minimum height = 1pt]
\rotatebox{270}{
\begin{tikzpicture}[xscale=.6,scale=0.85,>=latex',shorten >=1pt,node distance=3cm,on grid,auto]
 \node[state,initial,initial text={\rotatebox{90}{$1$}}] (pin) at (-8,2) {\rotatebox{90}{$t$}};
 \node[BoxStyle] (bpin) at (-6,2){};

 \node[state,fill=green!20] (p1) at (-2,4) {\rotatebox{90}{$t_{n}$}};
 \node[label] at (-2,2.1) {$\vdots$};
 \node[state,fill=purple!20] (pn) at (-2,0) {\rotatebox{90}{$t_{1}$}};

 \node[BoxStyle,label={above}:  \rotatebox{90}{$\m_{n,r}$}] at (0,4){};
 \node[BoxStyle,label={left,xshift=3mm,yshift=-4mm}: \rotatebox{90}{$\m_{n,1}$}] at (-1,3.3){};

 \node[BoxStyle,label={below}:\rotatebox{90}{$\m_{1,1}$}] at (0,0){};
 \node[BoxStyle,label={left, xshift=3mm,yshift=4mm}:\rotatebox{90}{$\m_{1,r}$}] at (-1,.7){};

 \node[state] (l1) at (4,4) {\rotatebox{90}{$t_{r}'$}};
 \node[label] at (4,2.1) {$\vdots$};
 \node[state] (lk) at (4,0) {\rotatebox{90}{$t_{1}'$}};

 \node[BoxStyle, label=above:\rotatebox{90}{$\m'_{r,m}$}] at (5.3,4){};
 \node[BoxStyle,label={below,xshift=-1mm}:\rotatebox{90}{$\m'_{r,1}$}] at (4.9,3.4){};

 \node[BoxStyle,label={below}:\rotatebox{90}{$\m'_{1,1}$}] at (5.3,0){};
 \node[BoxStyle,label={above,xshift=-1mm}:\rotatebox{90}{$\m'_{1,m}$}] at (4.9,.6){};

\node[state,fill=blue!20] (pfi1) at (10,4) {\rotatebox{90}{$q_{m}$}};
\node[state,fill=orange!20] (pfin) at (10,0) {\rotatebox{90}{$q_{1}$}};
\node[label] at (10,2.1) {$\vdots$};
\node[BoxStyle] at (8,5){};
\node[BoxStyle] at (8,-1){};

\path[-] (pin) edge  node [midway, right] {} (bpin);
\path[->] (bpin) edge  node [midway, above, yshift=1mm] {\rotatebox{90}{$\frac{1}{n}$}} (p1);
\path[->] (bpin) edge  node [midway, below, yshift=-1mm] {\rotatebox{90}{$\frac{1}{n}$}} (pn);

\path[->] (p1) edge node  [very near end] {} (l1);
\path[->] (p1) edge node  [very near end,right] {} (lk);
 
\path[->] (pn) edge node [very near end,right] {} (l1);
\path[->] (pn) edge node [very near end,above] {} (lk);

 \path[->] (l1) edge [bend left=0]  node [near end, above] {} (pfi1);
 \node[label] at (6.8,3.2) {$\vdots$};
 \path[->] (l1) edge [bend right=0] node [near end, above] {} (pfin);

 \path[->] (lk) edge [bend left=0] node [near end, below] {} (pfi1);
 \node[label] at (6.8,1) {$\vdots$};
 \path[->] (lk) edge [bend right=0] node[yshift=-1mm] [near end, below] {} (pfin);

 \draw [->,postaction={decorate}] (pfi1) to [out=150, in=30] node [pos=0.15, above] {\rotatebox{90}{$1$}} (pin);
  \draw [->,postaction={decorate}] (pfin) to [out=-150, in=-30] node [pos=0.15, below] {\rotatebox{90}{$1$}} (pin);

\end{tikzpicture}
}}
		\end{adjustwidth*}
	\end{minipage}
	\caption{The MDP~$\mathcal{D}$ in the reduction for $\ETR$-hardness of $\TVeqZERO$ (or $\TVneqONE$). The labels of the states are as follows: $\ell(s_i) = \ell(t_i) = a_i$ for $1 \le i \le n$, $\ell(p_j) = \ell(q_j) = b_j$ for $1 \le j \le m$ and all remaining states have label $c$.}\label{fig:reductionfromNMFMDP}
\end{figure}

To show that the problem $\TVeqZERO$ is hard for $\ETR$, we present the reduction from the nonnegative matrix factorization (NMF) problem. Given the instance of the NMF, a nonnegative matrix $J \in \mathbb{Q}^{n \times m}$ and a number $r \in \mathbb{N}$, we construct an MDP $\mathcal{D}$; see \figurename~\ref{fig:reductionfromNMFMDP}. Similar to \cite[Theorem~4.5]{FKS2020}, we assume, without loss of generality, that~$J$ is a stochastic matrix. The left part is an LMC. The transition probability from $s_i'$ to $p_j$ in the LMC encodes the entry $J[i, j]$. 

The other part is an MDP; see the right of \figurename~\ref{fig:reductionfromNMFMDP}.
The initial state $t$ transitions to the successors $t_1, \cdots, t_n$ with equal probabilities. In each~$t_i$ where~$1\leq i\leq n$, there are $r$ actions~$\m_{i,1}, \m_{i,2},\dots,\m_{i,r}$
where~$\varphi(t_i, \m_{i,k})=\delta_{t_k'}$ for $1\leq k \leq r$.
In each~$t_k'$, there are $m$ actions~$\m'_{k,1}, \m'_{k,2},\dots,\m'_{k,m}$ where
$\varphi(t_k', \m'_{k,j})=\delta_{q_{j}}$ where $1\leq j\leq m$.
In state~$q_{j}$, there is only one action which transitions back to state $t$ with probability one.

The probabilities of choosing the action~$\m_{i,k}$ in~$s_{i}$ and
choosing~$\m'_{k,j}$ in~$s_k'$ simulate the entries of~$A[i,k]$ and~$W[k,j]$.

The distribution $\mu$ and $\nu$ are the Dirac distribution on $s$ and $t$, respectively. The labels of the states are as follows: $\ell(s_i) = \ell(t_i) = a_i$ for $1 \le i \le n$, $\ell(p_j) = \ell(q_j) = b_j$ for $1 \le j \le m$ and all remaining states have label $c$. The construction is very similar to the one in \cite[Theorem~4.5]{FKS2020}.

The following proposition is technical and is used in proving \cref{theorem:trace-equivalence-ETR-hardness} and \cref{theorem:tvdistance-lt-one-ETR-hardness}.

\begin{proposition} \label{proposition:NMF-yes-iff-tvdistance-zero}
	The NMF instance is a yes-instance if and only if there is a memoryless strategy $\alpha$ such that $d_{\tv}(\mu,\nu) = 0$ in $\D(\alpha)$.
\end{proposition}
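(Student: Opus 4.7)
The plan is to show that under any memoryless strategy $\alpha$ for $\D$, the recurrent ``4-step'' structure of the induced LMC reduces trace equivalence $\mu \equiv \nu$ to a single matrix equation $J = A_\alpha W_\alpha$, where $A_\alpha$ and $W_\alpha$ are nonnegative row-stochastic matrices naturally extracted from~$\alpha$; combining this with a normalization trick for NMF and with the identity $d_{\tv} = 0 \Leftrightarrow \mathord{\equiv}$ yields the proposition.

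First I would set up the correspondence between strategies and matrix pairs. Given a memoryless strategy $\alpha$, define $A_\alpha \in \Reals^{n \times r}$ and $W_\alpha \in \Reals^{r \times m}$ by $A_\alpha[i,k] = \alpha(t_i)(\m_{i,k})$ and $W_\alpha[k,j] = \alpha(t_k')(\m'_{k,j})$. Both matrices are nonnegative and row-stochastic; conversely, since every other state has at most one action, any such pair $(A, W)$ determines a memoryless strategy $\alpha_{A,W}$ with $A_{\alpha_{A,W}} = A$ and $W_{\alpha_{A,W}} = W$. Next I would compute trace distributions. Starting from $s$, the LMC deterministically emits labels $c, a_i, c, b_j$ in $4$ steps (for some $i, j$) and returns to $s$ with probability~$1$; the same $4$-step structure holds from $t$ in $\D(\alpha)$. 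A direct computation gives
\[
\mathrm{Pr}_{\mu,\D(\alpha)}(\Run(c\, a_i\, c\, b_j)) = \tfrac{1}{n} J[i,j], \qquad \mathrm{Pr}_{\nu,\D(\alpha)}(\Run(c\, a_i\, c\, b_j)) = \tfrac{1}{n} (A_\alpha W_\alpha)[i,j].
\]
Since both chains return to their initial state with probability~$1$ after each $4$-step block, the distribution over traces of length $4k$ factorizes as a $k$-fold product of single-block probabilities; moreover, all traces of positive probability have length a multiple of~$4$. Hence $\mu \equiv \nu$ in $\D(\alpha)$ iff these length-$4$ cylinders already agree for every $i, j$, iff $J = A_\alpha W_\alpha$.

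This reduces the proposition to the statement that $J$ admits a nonnegative factorization $J = AW$ of inner dimension~$r$ iff it admits such a factorization with $A$ and $W$ additionally row-stochastic. The ``if'' direction is trivial. For the ``only if'' direction, given nonnegative $A, W$ with $J = AW$, I would normalize by letting $D$ be diagonal with $D[k,k] = \sum_j W[k,j]$, arbitrarily redefining any zero row of $W$ (which contributes nothing to $AW$), then setting $W' := D^{-1} W$ and $A' := AD$. Then $W'$ is row-stochastic by construction, and $A'$ is row-stochastic because $\sum_k A'[i,k] = \sum_k A[i,k] \sum_j W[k,j] = \sum_j (AW)[i,j] = \sum_j J[i,j] = 1$, where the last equality uses that $J$ is stochastic (WLOG, as in \cite[Theorem~4.5]{FKS2020}). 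Since $A' W' = J$, the strategy $\alpha_{A',W'}$ witnesses $\mu \equiv \nu$, hence $d_{\tv}(\mu, \nu) = 0$. The main obstacle is the block-factorization step: formally propagating agreement of length-$4$ cylinders to agreement on all traces relies on the $4$-step return-to-origin property together with the independence of successive blocks, which is a clean but slightly technical induction on trace length.
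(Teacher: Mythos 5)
Your approach is essentially the same as the paper's: you extract $A_\alpha[i,k]=\alpha(t_i)(\m_{i,k})$ and $W_\alpha[k,j]=\alpha(t_k')(\m'_{k,j})$ from a memoryless strategy $\alpha$, show that agreement of the length-$4$ cylinder probabilities forces $J=A_\alpha W_\alpha$, and propagate the converse direction through the $4$-step return-to-origin structure of $\D$ (the paper does this by an explicit induction on trace length). What you make explicit, and the paper leaves implicit by deferring to the analogous construction in FKS2020, is that a general nonnegative factorization $J=AW$ can be replaced by a row-stochastic one when $J$ is row-stochastic, so that the factorization actually corresponds to a valid strategy. Your normalization step has a small bug, however: if row $k$ of $W$ is zero and you simply redefine that row to be nonzero while leaving column $k$ of $A$ untouched, the product $AW$ changes (it acquires the rank-one contribution of column $k$ of $A$ with the redefined row of $W$), so the chain $\sum_k A'[i,k]=\sum_k A[i,k]\sum_j W[k,j]=\sum_j (AW)[i,j]=\sum_j J[i,j]=1$ is no longer justified. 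The fix is immediate: for each such $k$, first set column $k$ of $A$ to zero (this does not change $AW$ because row $k$ of $W$ was zero), and only then redefine row $k$ of $W$ to a distribution and apply the diagonal scaling $D$.
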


\begin{proof}
	($\impliedby$)
	Assume there is a memoryless strategy $\alpha$ such that in the induced Markov chain $d_{\tv}(\mu,\nu) = 0$, that is, we have $\mathrm{Pr}_{\mu}(\Run(w))=\mathrm{Pr}_{\nu}(\Run(w))$ for all words $w \in L^{*}$ . For all $1\leq i \leq n$,  $1 \leq k\leq r$ and $1\leq j\leq m$, let
	\[A[i,k]=\alpha(t_{i})(\m_{i,k}) \quad \quad \text{ and } \quad \quad W[k,j]=\alpha(t_k')(\m'_{k,j}).\]
	In the LMC $\D(\alpha)$, for all $1\leq i\leq n$ and all $1\leq j\leq m$, we have
	\begin{align*}
		\mathrm{Pr}_{\mu}(\Run(ca_icb_j)) =& \frac{1}{n} J[i,j]  \quad  \text{ and } \\
		\mathrm{Pr}_{\nu}(\Run(ca_icb_j)) =& \frac{1}{n} \sum_{k=1}^{r}\alpha(t_i)(\m_{i,k})\cdot \alpha(t_k')(\m'_{k,j}) = \frac{1}{n} \sum_{k=1}^{r} A[i, k] \cdot W[k, j].
	\end{align*}
	
	For all $i,j$ we have $\mathrm{Pr}_{\mu}(\Run(ca_icb_j)) = \mathrm{Pr}_{\nu}(\Run(ca_icb_j))$. Thus, we have
	$\sum_{k=1}^{r} A[i,k]\cdot W[k,j] =J[i,j]$ for all $i,j$.
	
	($\implies$)
	Assume the NMF instance is a yes-instance, that is, $\sum_{k=1}^{r} A[i,k]\cdot W[k,j] =J[i,j]$ for all $i, j$. We construct a memoryless strategy $\alpha$ such that $d_{\tv}(\mu,\nu) = 0$ in $\D(\alpha)$. For all state~$s' \in S$ and $\m \in \A$, the strategy~$\alpha$ is defined by
	\[
	\alpha(s')(\m) =
	\left\{
	\begin{array}{ll}
	A[i,k]   & \mbox{if } s'=t_{i} \text{ and } \m = \m_{i,k} \text{ where } 1 \leq i\leq n \text{ and } 1 \leq k\leq r\\
	&\\
	W[k,j]  & \mbox{if } s'=t_{k}'  \text{ and } \m =\m'_{k,j} \text{ where } 1\leq k\leq r \text{ and } 1 \leq j \leq m\\
	&\\
	1  & \mbox{if $\m$ is the only action available to $s'$}\\
	&\\
	0  & \mbox{otherwise. } \\   
	\end{array}
	\right.
	\]
	
	Let $k \in \nat$. Define $w_k$ to be the word $c a_{i_k} c b_{j_k}$ where $1 \le i_{k} \le n$ and $1 \le j_{k} \le m$. To show that $\mathrm{Pr}_{\mu}(\Run(w)) = \mathrm{Pr}_{\nu}(\Run(w))$ for all $w \in L^{*}$, it suffices to show for all $k \in \nat$,  we have: 
	\begin{itemize}
		\item[-] $\Pr_{\mu}(\Run(w_1 \cdots w_k))= \Pr_{\nu}(\Run(w_1 \cdots w_k))  = \frac{1}{n^{k}}\prod_{k' = 1}^{k}J[i_{k'}, j_{k'}]$;
		\item[-]  $\Pr_{\mu}(\Run(w_1 \cdots w_kc)) =  \Pr_{\nu}(\Run(w_1 \cdots w_kc)) $;
		\item[-] $\Pr_{\mu}(\Run(w_1 \cdots w_kca_{i_{k+1}})) = \Pr_{\nu}(\Run(w_1 \cdots w_kca_{i_{k+1}})) $;
		\item[-] $\Pr_{\mu}(\Run(w_1 \cdots w_kca_{i_{k+1}}c))  = \Pr_{\nu}(\Run(w_1 \cdots w_kca_{i_{k+1}}c))$.
	\end{itemize}
	
	We prove the statement by induction on $k$. The base case is $k = 0$. We have $\Pr_{\mu}(\Run(\varepsilon)) = \Pr_{\nu}(\Run(\varepsilon)) = \Pr_{\mu}(\Run(c)) = \Pr_{\nu}(\Run(c)) = 1$ and  $\Pr_{\mu}(\Run(ca_{i_1})) = \Pr_{\nu}(\Run(ca_{i_1})) = \Pr_{\mu}(\Run(ca_{i_1}c)) = \Pr_{\nu}(\Run(ca_{i_1}c))  =\frac{1}{n}$. 
	
	For the induction step, assume the statement holds for all $k' \le k$. By the induction hypothesis, we have: 
	\begin{align}
	\mu M_\alpha(w_1 \cdots w_k) =& \big(\frac{1}{n^{k}}\textstyle\prod_{k' = 1}^{k}J[i_{k'}, j_{k'}]\big) \delta_{s}   = \big(\frac{1}{n^{k}}\prod_{k' = 1}^{k}J[i_{k'}, j_{k'}]\big) \mu \text{ and } \label{equation:Dirac-distribution-on-s}\\
	\nu M_\alpha(w_1 \cdots w_k) =& \big(\frac{1}{n^{k}}\textstyle\prod_{k' = 1}^{k}J[i_{k'}, j_{k'}]\big) \delta_{t} = \big(\frac{1}{n^{k}}\prod_{k' = 1}^{k}J[i_{k'}, j_{k'}]\big) \nu \label{equation:Dirac-distribution-on-t}
	\end{align}
	
	First, we show that 
	\begin{equation}\label{equation:Dirac-distribution-on-s-t-induction}
	\textstyle\Pr_{\mu}(\Run(w_1 \cdots w_{k+1}))  = \textstyle\Pr_{\nu}(\Run(w_1 \cdots w_{k+1})) = \frac{1}{n^{k+1}}\textstyle\prod_{k' = 1}^{k+1}J[i_{k'}, j_{k'}]. 
	\end{equation} We have
	\begin{align*}
		&\textstyle\Pr_{\mu}(\Run(w_1 \cdots w_{k+1})) \\
		=& |\mu M_{\alpha} (w_1 \cdots w_{k+1})| \\ 
		=& |\mu M_{\alpha} (w_1 \cdots w_{k}) M_{\alpha} (w_{k+1}) | \\ 
		=& |\big(\frac{1}{n^{k}}\textstyle\prod_{k' = 1}^{k}J[i_{k'}, j_{k'}] \big) \mu M_{\alpha} (w_{k+1})| \commenteq{\eqref{equation:Dirac-distribution-on-s}}\\
		=& \big(\frac{1}{n^{k}}\textstyle\prod_{k' = 1}^{k}J[i_{k'}, j_{k'}] \big)  |\mu M_{\alpha} (w_{k+1})| \\
		=& \big(\frac{1}{n^{k}}\textstyle\prod_{k' = 1}^{k}J[i_{k'}, j_{k'}] \big)  \frac{1}{n} J[i_{k+1}, j_{k+1}]\commenteq{induction hypothesis}\\
		=& \frac{1}{n^{k+1}}\textstyle\prod_{k' = 1}^{k+1}J[i_{k'}, j_{k'}]  
	\end{align*}
	
	Similarly, 
	\begin{align*}
		&\textstyle\Pr_{\nu}(\Run(w_1 \cdots w_{k+1})) \\
		=& |\nu M_{\alpha} (w_1 \cdots w_{k+1})| \\ 
		=& |\nu M_{\alpha} (w_1 \cdots w_{k}) M_{\alpha} (w_{k+1}) | \\ 
		=& |\big(\frac{1}{n^{k}}\textstyle\prod_{k' = 1}^{k}J[i_{k'}, j_{k'}] \big) \nu M_{\alpha} (w_{k+1})| \commenteq{\eqref{equation:Dirac-distribution-on-t}}\\
		=& \big(\frac{1}{n^{k}}\textstyle\prod_{k' = 1}^{k}J[i_{k'}, j_{k'}] \big)  |\nu M_{\alpha} (w_{k+1})| \\
		=& \big(\frac{1}{n^{k}}\textstyle\prod_{k' = 1}^{k}J[i_{k'}, j_{k'}] \big)  \frac{1}{n} J[i_{k+1}, j_{k+1}]\commenteq{induction hypothesis}\\
		=& \frac{1}{n^{k+1}}\textstyle\prod_{k' = 1}^{k+1}J[i_{k'}, j_{k'}]. 
	\end{align*}
	
	By equation~\eqref{equation:Dirac-distribution-on-s-t-induction}, we have 
	\begin{align}
	\mu M_\alpha(w_1 \cdots w_{k+1}) =& \big(\frac{1}{n^{k+1}}\textstyle\prod_{k' = 1}^{k+1}J[i_{k'}, j_{k'}]\big) \delta_{s}   = \big(\frac{1}{n^{k+1}}\textstyle\prod_{k' = 1}^{k+1}J[i_{k'}, j_{k'}]\big) \mu \text{ and } \label{equation:Dirac-distribution-on-s-induction}\\
	\nu M_\alpha(w_1 \cdots w_{k+1}) =& \big(\frac{1}{n^{k+1}}\textstyle\prod_{k' = 1}^{k+1}J[i_{k'}, j_{k'}]\big) \delta_{t} = \big(\frac{1}{n^{k}}\textstyle\prod_{k' = 1}^{k}J[i_{k'}, j_{k'}]\big) \nu \label{equation:Dirac-distribution-on-t-induction}
	\end{align}
	
	Thus, 
	\begin{align*}
		&\textstyle\Pr_{\mu}(\Run(w_1 \cdots w_{k+1}c)) \\
		=& |\mu M_{\alpha} (w_1 \cdots w_{k+1}c)| \\ 
		=& |\mu M_{\alpha} (w_1 \cdots w_{k+1}) M_{\alpha} (c) | \\ 
		=& |\big(\frac{1}{n^{k+1}}\textstyle\prod_{k' = 1}^{k+1}J[i_{k'}, j_{k'}] \big) \mu M_{\alpha} (c)| \commenteq{\eqref{equation:Dirac-distribution-on-s-induction}}\\
		=& \big(\frac{1}{n^{k+1}}\textstyle\prod_{k' = 1}^{k+1}J[i_{k'}, j_{k'}] \big)  |\mu M_{\alpha} (c)| \\
		=& \frac{1}{n^{k+1}}\textstyle\prod_{k' = 1}^{k+1}J[i_{k'}, j_{k'}]  
	\end{align*}
	
	Similarly, $\textstyle\Pr_{\nu}(\Run(w_1 \cdots w_{k+1}c)) = \frac{1}{n^{k+1}}\textstyle\prod_{k' = 1}^{k+1}J[i_{k'}, j_{k'}]$. We also have  
	\begin{align*}
		&\textstyle\Pr_{\mu}(\Run(w_1 \cdots w_{k+1}ca_{i_{k+2}})) \\
		=& |\mu M_{\alpha} (w_1 \cdots w_{k+1}ca_{i_{k+2}})| \\ 
		=& |\mu M_{\alpha} (w_1 \cdots w_{k+1}) M_{\alpha} (ca_{i_{k+2}}) | \\ 
		=& |\big(\frac{1}{n^{k+1}}\textstyle\prod_{k' = 1}^{k+1}J[i_{k'}, j_{k'}] \big) \mu M_{\alpha} (ca_{i_{k+2}})| \commenteq{\eqref{equation:Dirac-distribution-on-s-induction}}\\
		=& \big(\frac{1}{n^{k+1}}\textstyle\prod_{k' = 1}^{k+1}J[i_{k'}, j_{k'}] \big)  |\mu M_{\alpha} (ca_{i_{k+2}})| \\
		=& \frac{1}{n^{k+2}}\textstyle\prod_{k' = 1}^{k+1}J[i_{k'}, j_{k'}]  
	\end{align*}
	
	Similarly, $\textstyle\Pr_{\nu}(\Run(w_1 \cdots w_{k+1}ca_{i_{k+2}})) =  \Pr_{\mu}(\Run(w_1 \cdots w_{k+1}ca_{i_{k+2}}c)) = \textstyle\Pr_{\nu}(\Run(w_1 \cdots w_{k+1}ca_{i_{k+2}}c)) = \frac{1}{n^{k+2}}\textstyle\prod_{k' = 1}^{k+1}J[i_{k'}, j_{k'}]$. \qedhere
\end{proof}

\begin{theorem}
	\label{theorem:trace-equivalence-ETR-hardness}
		The NMF problem is polynomial-time reducible to the problem $\TVeqZERO$, hence $\TVeqZERO$ is $\ETR$-hard.
\end{theorem}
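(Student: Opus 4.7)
The plan is to exhibit the polynomial-time many-one reduction from NMF to $\TVeqZERO$ that takes an instance $(J,r)$ with $J \in \mathbb{Q}^{n \times m}$ nonnegative to the MDP~$\mathcal{D}$ depicted in \figurename~\ref{fig:reductionfromNMFMDP} together with initial distributions $\mu = \delta_s$ and $\nu = \delta_t$. The correctness will then be an immediate invocation of \cref{proposition:NMF-yes-iff-tvdistance-zero}, so the only work left for this theorem is (i) justifying the WLOG reduction to the stochastic case, (ii) checking that the construction is polynomial in the bit size of $(J,r)$, and (iii) observing $\ETR$-hardness from the $\ETR$-completeness of NMF~\cite{Shitov2016}.

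First, I would recall the standard preprocessing (following \cite[Theorem~4.5]{FKS2020}): given an arbitrary nonnegative~$J$, one can in polynomial time produce an equivalent NMF instance in which $J$ is row-stochastic (scale each row by its $L_1$-norm and compensate by an added column or by rescaling, absorbing the scalars into~$A$). This justifies assuming $\sum_j J[i,j] = 1$ for every~$i$, which is what makes the transitions out of each~$s_i'$ a proper probability distribution in the left LMC of \figurename~\ref{fig:reductionfromNMFMDP}.

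Next, I would describe the MDP construction explicitly: the left component is a deterministic LMC in which $s$ goes uniformly to $s_1,\dots,s_n$, each $s_i$ deterministically goes to $s_i'$, and each $s_i'$ transitions to $p_j$ with probability $J[i,j]$, with each $p_j$ looping back to~$s$ via a single action; this encodes the matrix~$J$ into cylinder probabilities $\mathrm{Pr}_\mu(\Run(c a_i c b_j)) = \frac{1}{n} J[i,j]$. The right component mirrors the left in topology but inserts a rank-$r$ bottleneck: $t_i$ has actions $\m_{i,1},\dots,\m_{i,r}$ leading deterministically to $t_1',\dots,t_r'$, and each $t_k'$ has actions $\m'_{k,1},\dots,\m'_{k,m}$ leading deterministically to $q_1,\dots,q_m$. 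Under a memoryless strategy~$\alpha$ the probability that $\nu = \delta_t$ produces the trace $ca_icb_j$ equals $\tfrac{1}{n}\sum_{k=1}^r \alpha(t_i)(\m_{i,k})\cdot \alpha(t_k')(\m'_{k,j})$, i.e.\ $\tfrac{1}{n}(A W)[i,j]$ for the matrices $A[i,k]:=\alpha(t_i)(\m_{i,k})$ and $W[k,j]:=\alpha(t_k')(\m'_{k,j})$. The labelling $\ell$ (with $\ell(s_i)=\ell(t_i)=a_i$, $\ell(p_j)=\ell(q_j)=b_j$, and label~$c$ elsewhere) ensures that the only traces with positive probability are of the form $(ca_*cb_*)^\omega$, so cylinders over such traces are the only ones that matter.

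Having built~$\mathcal{D}$, the size of its description is linear in $n+m+r$ plus the bit sizes of the entries of~$J$, so the reduction is polynomial. Finally, \cref{proposition:NMF-yes-iff-tvdistance-zero} directly gives the equivalence ``$(J,r)$ is a yes-instance of NMF $\iff$ there is a memoryless~$\alpha$ with $d_{\tv}(\mu,\nu)=0$ in~$\mathcal{D}(\alpha)$'', which is exactly the statement that the instance we produced is a positive instance of $\TVeqZERO$. The only mildly subtle point is verifying that the stochastic preprocessing does not change the existence of a rank-$r$ nonnegative factorisation; this is standard and was already used in \cite[Theorem~4.5]{FKS2020}, so it should not constitute a genuine obstacle. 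Combined with \cref{theorem:trace-equivalence-reduce-to-ETR}, this yields $\ETR$-completeness.
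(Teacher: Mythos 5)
Your proposal is correct and follows the paper's argument essentially verbatim: the reduction is the MDP of \figurename~\ref{fig:reductionfromNMFMDP} with $\mu=\delta_s$, $\nu=\delta_t$, the stochastic normalisation of $J$ is noted as WLOG following \cite[Theorem~4.5]{FKS2020}, and correctness is exactly \cref{proposition:NMF-yes-iff-tvdistance-zero} plus the $\ETR$-completeness of NMF from \cite{Shitov2016}. You merely spell out the construction and the cylinder-probability bookkeeping in more detail than the paper's two-line proof does, which is fine but not a different route.
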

\begin{proof}
	Proposition~\ref{proposition:NMF-yes-iff-tvdistance-zero} shows that the NMF problem is polynomial-reducible to the problem $\TVeqZERO$.  Since the NMF problem is $\ETR$-complete \cite{Shitov2016}, we have that the problem $\TVeqZERO$ is $\ETR$-hard.
\end{proof}

\theoremTVDistanceZeroUBLB*
\begin{proof}
	It follows from \cref{theorem:trace-equivalence-reduce-to-ETR} and \cref{theorem:trace-equivalence-ETR-hardness}.
\end{proof}

\subsection{Proofs of $\TVneqONE$}\label{appendix:tvDistanceNeqOne}
 Next, we show that the problem $\TVneqONE$ is $\ETR$-complete. Recall that $\TVneqONE$ is the problem asking whether there is a memoryless strategy $\alpha$ for $\D$ such that the total variation distance of the two initial distributions is less than one in the induced labelled Markov chain $\D(\alpha)$, i.e., $d_{\tv}(\mu, \nu) \ls 1$.

With \cref{proposition:tvdistance-one-theorem} at hand, we obtain:

\begin{theorem} \label{theorem:tvdistance-lt-one-reduce-to-ETR}
	The problem $\TVneqONE$ is in $\ETR$.
\end{theorem}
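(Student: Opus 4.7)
The plan is to adapt the ETR encoding from the proof of Theorem~\ref{theorem:tvdistance-eq-one-reduce-to-ETR} by negating its polyhedral separation condition. That proof establishes, for any memoryless strategy $\alpha$, that $d_{\tv}(\mu,\nu) \ls 1$ in $\D(\alpha)$ if and only if there exists $s \in S$ with $\langle H_\alpha \rangle \cap \mathcal{P}_s \neq \emptyset$, where $B_\alpha$, $H_\alpha$ and $\mathcal{P}_s$ are as defined there. Consequently, $\TVneqONE$ holds iff there exist a memoryless strategy $\alpha$ and a state $s \in S$ witnessing this non-empty intersection.

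First, I would nondeterministically guess the state $s \in S$ and, for each $u \in S$, a non-empty subset $A(u) \subseteq \A(u)$ of actions to have positive probability under the eventual strategy. The guessed supports fix the support graph of $\D(\alpha)$, from which the set $R^{\mu,\nu}$ -- and thereby the spanning vectors of $\mathcal{P}_s$ -- can be computed in polynomial time via \cite[Lemma~20]{CK2014}.

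Next, I would write an ETR formula whose conjuncts are exactly those in the proof of Theorem~\ref{theorem:tvdistance-eq-one-reduce-to-ETR} that express: the variables $\bar{x}$ define a memoryless strategy $\alpha$ with the prescribed supports ($x_{u,\m} \gr 0$ iff $\m \in A(u)$, and $\sum_{\m} x_{u,\m} = 1$ for every $u$); the matrix $B_\alpha$ is a basis of $\langle M_\alpha(w)\one : w \in L^* \rangle$ (via its QR factorization and closure under pre-multiplication by each $M_\alpha(a)$); and $H_\alpha$ is an orthonormal basis of its orthogonal complement. The separation block is replaced by a non-empty-intersection block: introduce a vector $\mathbf{y} \in \mathbb{R}^{r'}$ together with nonnegative reals $\lambda_t$ for $t \in S$ and $\lambda_t'$ for $t \in R^{\mu,\nu}_s$, and require
\[
H_\alpha \mathbf{y} \;=\; \vecc_s + \sum_{t \in S} \lambda_t \vecc_t \;-\; \sum_{t \in R^{\mu,\nu}_s} \lambda_t' \vecc_t,
\]
which asserts that $H_\alpha \mathbf{y}$ lies in $\langle H_\alpha \rangle \cap \mathcal{P}_s$.

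Finally, the resulting formula has polynomial size, and since $\ETR$ is closed under polynomial-time NP-reductions~\cite{CateKO13}, the outer nondeterministic choices of $s$ and the $A(u)$'s incur no overhead. The main obstacle is that $R^{\mu,\nu}$ genuinely depends on the chosen strategy (through the support graph of $\D(\alpha)$); this is circumvented by pre-committing to the action supports $A(u)$ via the NP guess and enforcing them inside the ETR formula, so that the pre-computed $R^{\mu,\nu}$ coincides with the one induced by the strategy produced by the formula.
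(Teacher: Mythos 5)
Your proof is correct, but it takes a genuinely different route from the paper's. The paper's proof of \cref{theorem:tvdistance-lt-one-reduce-to-ETR} does not reuse the $H_\alpha$/$\mathcal{P}_s$ polyhedral machinery from \cref{theorem:tvdistance-eq-one-reduce-to-ETR}; instead it goes directly through \cref{proposition:tvdistance-one-theorem} together with the matrix formulation of trace equivalence (\cref{prop:language-equivalence}): it guesses $r_1$ and $\support(\mu_2)$, and then writes an ETR formula asserting the existence of subdistributions $\mu_1, \mu_2$ and a matrix $F$ with first row $\mu_1 - \mu_2$ satisfying $F\one = \zero$ and $F M_\alpha(a) = B(a) F$, plus the support conditions $r_1 \in \support(\mu_1)$ and $\support(\mu_2) \subseteq R^{\mu,\nu}_{r_1}$. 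That formula is considerably leaner than yours, which carries the full QR-factorization, orthogonal-complement, and polyhedron-intersection apparatus. On the other hand, your observation that $R^{\mu,\nu}$ depends on the strategy through the support graph of $\D(\alpha)$, and your remedy of guessing the action supports $A(u)$ and pinning them down inside the ETR formula via $x_{u,\m} \gr 0 \Leftrightarrow \m \in A(u)$, addresses a point that the paper's own proof leaves implicit: the paper's formula also refers to $R^{\mu,\nu}_{r_1}$ (and, in \cref{theorem:tvdistance-eq-one-reduce-to-ETR}, to the spanning vectors of $\mathcal{P}_s$) without saying how the spanning data is obtained before $\alpha$ is known. So your approach is more elaborate at the linear-algebra level, but it makes an implicit NP guess explicit and is self-contained given the proof of \cref{theorem:tvdistance-eq-one-reduce-to-ETR}; the paper's approach is shorter by bypassing the polyhedral characterization entirely.
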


\begin{proof}
	A memoryless strategy $\alpha$ for $\D$ can be characterised by numbers $x_{s,\m} \in [0,1]$ where $s \in S$ and $\m \in \A$ such that $x_{s,\m} = \alpha(s)(\m)$. We write $\bar{x}$ for the collection $(x_{s,\m})_{s\in S, \m \in \A}$. 
	
	From \cref{proposition:tvdistance-one-theorem}, to check whether there is a memoryless strategy $\alpha$ such that $d_{\tv}(\mu , \nu) \ls 1$, it suffices to check if there are subdistributions $\mu_1$ and $\mu_2$ that satisfy Equation~\eqref{equation:conditions-vardistance-less-than-one}. Thus, we can nondeterministically guess $r_1$ and support of $\mu_2$ such that $\support(\mu_2) \subseteq R_{r_1}^{\mu,\nu}$, and then check the following decision problem, which is a closed formula in the existential theory of the reals: 
	
	 $\exists \bar{x}$, matrices $B(a) \in \mathbb{R}^{S \times S}$ for all $a \in L$, a matrix $F \in \mathbb{R}^{S \times S}$, subdistributions $\mu_1$ and $\mu_2$ such that 
	
	\begin{itemize}
		\item[-]
		$\sum_{\m \in \A(s)}x_{s, \m} = 1$ for all $s \in S$;
		\item[-]
		the first row of $F$ is $\mu_1 - \mu_2$;
		\item[-]
		$F \mathbf{1} = \mathbf{0}$;
		\item[-]
		$F  M(a) = B(a) F$ for all $a \in L$;
		\item[-]
		$r_1 \in \support(\mu_1)$;
		\item[-]
		$\support(\mu_2)  \subseteq  R^{\mu, \nu}_{r_1}$.
	\end{itemize}
	
	It follows that the problem is in $\ETR$ since $\ETR$ is closed under {\sf NP}-reductions~\cite{CateKO13}.
\end{proof}

To show that the problem $\TVneqONE$ is hard for $\ETR$, we present the reduction from the nonnegative maitrx factorization (NMF) problem. We construct the same MDP $\D$ as shown in Figure~\ref{fig:reductionfromNMFMDP}. The reduction is similar to \cite[Theorem~4.5]{FKS2020}.

The proposition below is technical and is only used in the proof of \cref{theorem:tvdistance-lt-one-ETR-hardness}.
\begin{proposition} \label{prop:NMF-no-to-inequivalent-distributions}
	If the NMF instance is a no-instance then for all memoryless strategy $\alpha$ and all (sub)distributions $\mu_1$ over the left part of $\D$ and all (sub)distributions $\mu_2$ over the right part, we have $d_{\tv}(\mu_1,\mu_2) \gr 0$ in the LMC $\D(\alpha)$.
\end{proposition}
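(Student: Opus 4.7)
The plan is to establish the contrapositive: whenever there exist a memoryless strategy $\alpha$, a subdistribution $\mu_1$ on the left part of $\D$, and a subdistribution $\mu_2$ on the right part with $d_{\tv}(\mu_1,\mu_2)=0$ in $\D(\alpha)$ (equivalently, $\mu_1\equiv\mu_2$), the NMF instance admits a factorization. Writing $A[i,k]:=\alpha(t_i)(\m_{i,k})$ and $W[k,j]:=\alpha(t_k')(\m'_{k,j})$, whose rows are probability distributions by definition of $\alpha$, the goal is to show that $J=AW$, so that $A$ and $W$ witness the instance.

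I would start by harvesting the easy consequences of $\mu_1\equiv\mu_2$ from short traces. Because the labels $a_i$ and $b_j$ each appear on a single state per side, the single-letter traces $a_i$ and $b_j$ yield $\mu_1(s_i)=\mu_2(t_i)$ and $\mu_1(p_j)=\mu_2(q_j)$ for all $i,j$. The trace $ca_i$ can only be produced from a phase-$0$ initial state, giving $\mu_1(s)=\mu_2(t)$; the trace $cb_j$ can only be produced from a phase-$2$ state, giving $\sum_k \mu_1(s_k')J[k,j]=\sum_k \mu_2(t_k')W[k,j]$ for all $j$.

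Next, I would assume for contradiction that $J\ne AW$ and fix $(i_0,j_0)$ with $\delta:=J[i_0,j_0]-(AW)[i_0,j_0]\ne 0$. The key is to consider four families of traces of the form $u\cdot c\,a_{i_0}\,c\,b_{j_0}$, choosing the prefix $u$ to pick out each of the four phases of the left part. For $u=\varepsilon$ only the phase-$0$ states contribute, and the equation reduces to $\mu_1(s)\cdot\delta/n=0$, forcing $\mu_1(s)=0$. For $u=a_i\,c\,b_j$ only the phase-$1$ state $s_i$ (resp.\ $t_i$) contributes, giving $\mu_1(s_i)\bigl(J[i,j]\,J[i_0,j_0]-(AW)[i,j](AW)[i_0,j_0]\bigr)=0$; summing over $j$ and using that the rows of $J$ and of $AW$ each sum to $1$ yields $\mu_1(s_i)\cdot\delta=0$, hence $\mu_1(s_i)=0$. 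For $u=b_j$ we obtain $\mu_1(p_j)\cdot\delta/n=0$, so $\mu_1(p_j)=0$. For $u=c\,b_j$ we obtain $\bigl(\sum_k \mu_1(s_k')J[k,j]\bigr)\cdot\delta/n=0$, i.e., $\lambda^{\top}J=\mathbf{0}$ where $\lambda_k:=\mu_1(s_k')$; right-multiplying by $\mathbf{1}$ and using $J\mathbf{1}=\mathbf{1}$ (since $J$ is stochastic, as assumed w.l.o.g.\ in the reduction) forces $|\lambda|=0$, hence $\lambda=\mathbf{0}$. Thus $\mu_1$ vanishes on every state of the left part, contradicting $|\mu_1|>0$.

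The main obstacle is the phase-$2$ case: the coordinates $\mu_1(s_k')$ never appear individually in any trace probability but only through the aggregated form $\sum_k \mu_1(s_k')J[k,j]$, and the matching right-hand aggregate $\sum_k\mu_2(t_k')W[k,j]$ uses a different matrix $W$. The trick is that the trace $cb_j$ has already equated these two aggregates, so after plugging into the longer trace $cb_j\cdot c\,a_{i_0}\,c\,b_{j_0}$ they can be factored out, leaving $\delta$ alone; the stochasticity of $J$ is then exactly what lets us pass from $\lambda^{\top}J=\mathbf{0}$ to $\lambda=\mathbf{0}$.
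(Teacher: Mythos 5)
Your proof is correct and takes a genuinely different route from the paper's. The paper ``funnels'' $\mu_1$ onto state $s$: it finds a word $w'$ making $\mu_1 M_\alpha(w')$ a positive multiple of $\delta_s$, does a three-way case analysis on $\mu_2' := \mu_2 M_\alpha(w')$ (total mass, then mass on $t$), and in the remaining case invokes the preceding proposition (a no-instance makes $\delta_s \not\equiv \delta_t$), finally prepending $w'$ to the distinguishing word. You instead work the contrapositive head-on: from $\mu_1 \equiv \mu_2$ you read off $J = AW$ with $A[i,k]=\alpha(t_i)(\m_{i,k})$, $W[k,j]=\alpha(t_k')(\m'_{k,j})$, by showing that any discrepancy $\delta = J[i_0,j_0]-(AW)[i_0,j_0] \neq 0$, inserted into traces $u\cdot ca_{i_0}cb_{j_0}$ with $u$ ranging over the four phases, annihilates every coordinate of $\mu_1$, contradicting $|\mu_1| > 0$. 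Both are sound. The paper's route is shorter because it reuses the earlier proposition, but it silently assumes the existence of the funnelling word $w'$, which is true but deserves a sentence (the word must be chosen according to which phase of the cycle $\support(\mu_1)$ reaches last, so that no spurious mass from other phases survives). Your route is longer but self-contained and makes explicit exactly which traces do the work; the one genuinely new ingredient beyond the proof machinery of the earlier proposition is the phase-$2$ step, where you first equate $\sum_k\mu_1(s_k')J[k,j]$ with $\sum_k\mu_2(t_k')W[k,j]$ via the short trace $cb_j$, factor this common aggregate out of the longer trace, and then pass from $\lambda^\top J = \mathbf{0}$ to $\lambda = \mathbf{0}$ using $J\mathbf{1}=\mathbf{1}$ and nonnegativity.
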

\begin{proof}
	Let $\mu_1$ and $\mu_2$ be two (sub)distributions where $\mu_1$ is over the left part of $\D$ and $\mu_2$ is over the right part. Let the NMF instance be a no-instance. Let $\alpha$ be any memoryless strategy. 
	
	By the construction of the MDP $\D$ (see Figure~\ref{fig:reductionfromNMFMDP}), there must exist a word $w' \in L^{*}$ such that $\mu_1M_{\alpha}(w')$ is a Dirac distribution on state $s$. Let $\mu_1' = \mu_1M_{\alpha}(w')$ and $\mu_2'= \mu_2M_{\alpha}(w')$. We have that $\mu_1' =  \mu_1'(s) \delta_s$. We distinguish the following three cases. 
	\begin{enumerate}[label=(\alph*)]
		\item Assume $|\mu_1'| \not= |\mu_2'|$.  
		Let  $E' =  L^{\omega}$. We have
		\[
		\begin{array}{rcl}
		d_{\tv}(\mu_1', \mu_2')	& = & \sup_{E \in \mathcal{F}} |\textstyle\Pr_{\mu_1'}(E) - \textstyle\Pr_{\mu_2'}(E)| \\
		& \ge & |\Pr_{\mu_1'}(E') - \Pr_{\mu_2'}(E')|  \commenteq{$E' \in \mathcal{F}$}\\
		&=& \big||\mu_1'| - |\mu_2'|\big|\\
		& \gr & 0 \commenteq{$|\mu_1'| \not= |\mu_2'|$}\\
		\end{array}
		\]
		\item Assume $|\mu_1'| = |\mu_2'|$ and $\mu_1'(s) \not= \mu_2'(t)$. 
		Let  $E' = \Run(ca_i) \in \mathcal{F}$. We have
		\[
		\begin{array}{rcl}
		d_{\tv}(\mu_1', \mu_2')
		& = & \sup_{E \in \mathcal{F}} |\textstyle\Pr_{\mu_1'}(E) - \textstyle\Pr_{\mu_2'}(E)| \\
		& \ge & |\Pr_{\mu_1'}(E') - \Pr_{\mu_2'}(E')|  \commenteq{$E' \in \mathcal{F}$}\\
		& = & |\Pr_{\mu_1'}(\Run(ca_i)) - \Pr_{\mu_2'}(\Run(ca_i))| \commenteq{$E' = \Run(ca_i)$}\\
		& = & \big||\mu_1'M_{\alpha}(ca_i)| - |\mu_2'M_{\alpha}(ca_i)|\big|\\
		& = & |\frac{1}{n}\mu_1'(s) - \frac{1}{n}\mu_2'(t)| \\
		& \gr & 0 \commenteq{$\mu_1'(s) \not= \mu_2'(t)$}\\
		\end{array}
		\]
		\item Assume $|\mu_1'| = |\mu_2'|$ and $\mu_1'(s) = \mu_2'(t) \gr 0$.
		
		Since $\mu_1' =  \mu_1'(s) \delta_s$, $|\mu_1'| = |\mu_2'|$ and $\mu_2'(t) = \mu_2'(s) \gr 0$, we have that $\mu_2' =  \mu_2'(t) \delta_t$.
		
		By Proposition~\ref{proposition:NMF-yes-iff-tvdistance-zero}, we have that if the NMF instance is a no-instance, then $d_{\tv}(\mu, \nu) \gr 0$, that is, there exists a word $w \in L^{*}$ such that $\Pr_{\mu}(\Run(w)) \not= \Pr_{\nu}(\Run(w))$. 
		This word $w$ is of the form $ca_{i_1}cb_{j_1}ca_{i_2}\cdots$, since it is emitted by running the MDP $\D$ from state $s$.
		
		Let  $E' = \Run(w) \in \mathcal{F}$. We have
		\[
		\begin{array}{rcl}
		d_{\tv}(\mu_1', \mu_2')
		& = & \sup_{E \in \mathcal{F}} |\textstyle\Pr_{\mu_1'}(E) - \textstyle\Pr_{\mu_2'}(E)| \\
		& \ge & |\Pr_{\mu_1'}(E') - \Pr_{\mu_2'}(E')|  \commenteq{$E' \in \mathcal{F}$}\\
		& = & |\Pr_{\mu_1'}(\Run(w)) - \Pr_{\mu_2'}(\Run(w))| \commenteq{$E' = \Run(w)$}\\
		& = & \big| |\mu_1'M_{\alpha}(w)| - |\mu_2'M_{\alpha}(w)| \big| \\
		& = & \big| |\mu_1'(s)\delta_s M_{\alpha}(w)| - |\mu_2'(t)\delta_t M_{\alpha}(w)| \big| \commenteq{$\mu_1' =  \mu_1'(s) \delta_s$ and $\mu_2' =  \mu_2'(t) \delta_t$} \\ 
		& = & |\Pr_{\mu_1'(s)\delta_s}(\Run(w)) - \Pr_{\mu_2'(t)\delta_t}(\Run(w))| \\
		& = & |\Pr_{\mu_1'(s)\mu}(\Run(w)) - \Pr_{\mu_2'(t)\nu}(\Run(w))| \commenteq{$\mu = \delta_s$ and $\nu = \delta_t$}\\
		& = & |\mu_1'(s) \Pr_{\mu}(\Run(w)) - \mu_2'(t)\Pr_{\nu}(\Run(w))| \\
		& = & \mu_1'(s) |\Pr_{\mu}(\Run(w)) - \Pr_{\nu}(\Run(w))| \commenteq{$\mu_1'(s) = \mu_2'(t)$}\\
		& \gr&  0 \commenteq{$\mu_1'(s) \gr 0$ and $\Pr_{\mu}(\Run(w)) \not= \Pr_{\nu}(\Run(w))$}
		\end{array}
		\]
	\end{enumerate}
	
	Following the three cases, we have $d_{\tv}(\mu_1',\mu_2') \gr 0$, that is, there exists a word $w \in L^{*}$ such that $\Pr_{\mu_1'}(\Run(w)) \not= \Pr_{\mu_2'}(\Run(w))$. Consider the word $w'w$, we have
	\[
	\begin{array}{rcl}
	\Pr_{\mu_1}(\Run(w'w)) 
	&=& |\mu_1 M_\alpha(w'w)| \\
	&=& |\mu_1 M_{\alpha}(w') M_\alpha(w)|\\
	&=& \Pr_{\mu_1M_{\alpha}(w')}(\Run(w) \\
	&=& \Pr_{\mu_1'}(\Run(w)) \commenteq{$\mu_1' = \mu_1M_{\alpha}(w')$} \\
	&\not=& \Pr_{\mu_2'}(\Run(w)) \\
	&=& \Pr_{\mu_2M_{\alpha}(w')}(\Run(w)) \commenteq{$\mu_2'= \mu_2M_{\alpha}(w')$}\\
	&=& |\mu_2 M_{\alpha}(w') M_{\alpha}(w)|\\
	&=& |\mu_2 M_{\alpha}(w'w)| \\
	&=& \Pr_{\mu_2}(\Run(w'w)).
	\end{array}
	\]
	
	Thus, we have $d_{\tv}(\mu_1, \mu_2) \gr 0$.
\end{proof}

\begin{theorem}\label{theorem:tvdistance-lt-one-ETR-hardness}
	The NMF problem is polynomial-time reducible to the problem $\TVneqONE$, hence $\TVneqONE$ is $\ETR$-hard.
\end{theorem}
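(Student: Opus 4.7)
I plan to reuse the MDP~$\D$ from Theorem~\ref{theorem:trace-equivalence-ETR-hardness} (Figure~\ref{fig:reductionfromNMFMDP}) together with the initial distributions $\mu=\delta_s$ and $\nu=\delta_t$, and to establish
\[
\langle J,r\rangle\in\mathrm{NMF}\iff\exists\text{ memoryless strategy }\alpha\ \text{such that}\ d_{\tv}(\mu,\nu)\ls 1\ \text{in}\ \D(\alpha).
\]
Since this construction is polynomial in the size of~$\langle J,r\rangle$ and NMF is $\ETR$-hard~\cite{Shitov2016}, the equivalence will yield the claimed $\ETR$-hardness of~$\TVneqONE$.

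The forward direction is immediate from Proposition~\ref{proposition:NMF-yes-iff-tvdistance-zero}: any positive NMF instance yields a memoryless strategy $\alpha$ with $d_{\tv}(\mu,\nu)=0$, in particular strictly less than~$1$.

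For the backward direction I intend to argue the contrapositive. Assume $\langle J,r\rangle$ is a no-instance, fix any memoryless~$\alpha$, and suppose for contradiction that $d_{\tv}(\mu,\nu)\ls 1$ in~$\D(\alpha)$. Applying Proposition~\ref{proposition:tvdistance-one} to the induced LMC~$\D(\alpha)$, I obtain a word $w\in L^{*}$ and subdistributions $\mu_1,\mu_2$ satisfying $\mu_1\le\mu M_\alpha(w)$, $\mu_2\le\nu M_\alpha(w)$, $\mu_1\equiv\mu_2$, and $|\mu_1|=|\mu_2|\gr 0$. The crucial structural observation is that the states reachable from~$s$ (the ``left part'') and those reachable from~$t$ (the ``right part'') of~$\D$ are disjoint, and each of these sets is forward-invariant under every memoryless strategy; consequently $\mu M_\alpha(w)$ is supported on the left part and $\nu M_\alpha(w)$ on the right part. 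Hence $\mu_1$ is a nonzero subdistribution over the left part and $\mu_2$ a nonzero subdistribution over the right part. This directly contradicts Proposition~\ref{prop:NMF-no-to-inequivalent-distributions}, which under the no-instance hypothesis forces $d_{\tv}(\mu_1,\mu_2)\gr 0$, incompatible with $\mu_1\equiv\mu_2$.

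The step I expect to require the most care is the combination of Proposition~\ref{proposition:tvdistance-one} with the verification that the produced $\mu_1$ and $\mu_2$ really land in the two halves of~$\D$; once this support claim is made explicit, Proposition~\ref{prop:NMF-no-to-inequivalent-distributions} closes the argument essentially for free, so the overall proof will inherit its structure directly from the reduction used for Theorem~\ref{theorem:trace-equivalence-ETR-hardness} and no new combinatorial analysis of the MDP is needed.
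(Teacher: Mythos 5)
Your proposal is correct and matches the paper's proof essentially step for step: same construction from Theorem~\ref{theorem:trace-equivalence-ETR-hardness}, same appeal to Proposition~\ref{proposition:NMF-yes-iff-tvdistance-zero} for the forward direction, and the same combination of Proposition~\ref{proposition:tvdistance-one} with the left/right support observation and Proposition~\ref{prop:NMF-no-to-inequivalent-distributions} for the contrapositive. The only cosmetic difference is that you phrase the final step as a contradiction whereas the paper reads Proposition~\ref{proposition:tvdistance-one} directly as the nonexistence of the required $w,\mu_1,\mu_2$ forcing $d_{\tv}(\mu,\nu)=1$.
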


\begin{proof}
	According to \cref{proposition:NMF-yes-iff-tvdistance-zero}, we have that if the NMF instance is a yes-instance then there is a memoryless strategy such that $d_{\tv}(\mu, \nu) = 0$ in the induced LMC, which implies $d_{\tv}(\mu, \nu) \ls 1$.
	
	It remains to show that if there is a memoryless strategy such that $d_{\tv}(\mu, \nu) \ls 1$, then the NMF instance is a yes-instance. We show the contrapositive, that is,  if the NMF instance is a no-instance, then for all memoryless strategy $d_{\tv}(\mu, \nu) = 1$ in the induced LMC. For all $w \in L^{*}$ and memoryless strategy $\alpha$, we have that if $|\mu M_{\alpha}(w)| \gr 0$ and $|\nu M_{\alpha}(w)| \gr 0$ then $\mu M_{\alpha}(w)$ and $\nu M_{\alpha}(w)$ are subdistributions over the left and right part of $\D$, respectively. It follows from Proposition~\ref{prop:NMF-no-to-inequivalent-distributions} that  $d_{\tv}(\mu_1, \mu_2) \gr 0$ for all subdistributions $\mu_1, \mu_2$ over the left and right part of $\D$, respectively. By Proposition~\ref{proposition:tvdistance-one}, we have that $d_{\tv}(\mu,\nu) = 1$ in all LMC~$\D(\alpha)$. 
\end{proof}

\theoremTVDistanceNeqOneUBLB*
\begin{proof}
	It follows from \cref{theorem:tvdistance-lt-one-reduce-to-ETR} and \cref{theorem:tvdistance-lt-one-ETR-hardness}.
\end{proof}

\subsection{Proofs of $\PBeqZERO$}\label{appendix:pbDistanceZERO}
Next, we show that the problem $\PBeqZERO$ is {\sf NP}-complete. Recall that $\PBeqZERO$ is the problem asking whether there is a memoryless strategy $\alpha$ for $\D$ such that the probabilistic bisimilarity distance of the two initial states is zero in the induced labelled Markov chain $\D(\alpha)$, i.e., $d_{\pb}(s, t) = 0$.

\begin{theorem} {\normalfont\cite[Theorem~1]{DGJP1999}}\label{theorem:prob-bisimilar-iff-pbdistance-zero}
	For all $s,t \in S$, $s \sim t$ if and only if $d_{\pb}(s,t) = 0$.
\end{theorem}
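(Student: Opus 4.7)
The plan is to prove both directions of the equivalence by exploiting the least-fixed-point characterisation of $d_{\pb}$ and the coupling-based nature of $\Delta$.

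For the forward direction $s \sim t \Rightarrow d_{\pb}(s,t) = 0$, I would define an auxiliary function $d \in [0,1]^{S \times S}$ by $d(u,v) = 0$ if $u \sim v$ and $d(u,v) = 1$ otherwise, and show that $d$ is a pre-fixed point of $\Delta$, i.e.\ $\Delta(d) \le d$ pointwise. The only nontrivial case is $u \sim v$: here $\ell(u) = \ell(v)$ and, because $\sim$ is a probabilistic bisimulation, $\tau(u)(E) = \tau(v)(E)$ for every $\sim$-equivalence class $E$. This common mass-on-classes property lets me construct a coupling $\omega \in \Omega(\tau(u), \tau(v))$ whose support lies inside $\mathord{\sim}$, by combining, class-by-class, any coupling of the conditional distributions $\tau(u){\restriction}_E$ and $\tau(v){\restriction}_E$. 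For this $\omega$, $\sum_{u',v'} \omega(u',v')\,d(u',v') = 0$, so $\Delta(d)(u,v) = 0 = d(u,v)$. Since $d_{\pb}$ is the least fixed point of the monotone map $\Delta$ on the complete lattice $[0,1]^{S\times S}$, Knaster--Tarski gives $d_{\pb} \le d$, so $s \sim t$ implies $d_{\pb}(s,t) = 0$.

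For the converse $d_{\pb}(s,t) = 0 \Rightarrow s \sim t$, I would set $R = \{(u,v) : d_{\pb}(u,v) = 0\}$ and prove that $R$ is itself a probabilistic bisimulation, whence $R \subseteq \mathord{\sim}$ by maximality of $\sim$. Reflexivity of $R$ follows because the diagonal coupling $\omega(u,u) = \tau(u)(u')$ witnesses $\Delta(d_{\pb})(u,u) = 0$ (checked on the Kleene iterates $d_{\pb}^{(n)} = \Delta^n(\mathbf{0})$, which converge up to $d_{\pb}$ by continuity of $\Delta$). Symmetry is immediate from symmetry of $\Delta$. For the bisimulation conditions: if $(u,v) \in R$ then $\ell(u) = \ell(v)$ (otherwise $\Delta(d_{\pb})(u,v) = 1$), and any coupling $\omega^*$ attaining the minimum in $\Delta(d_{\pb})(u,v) = 0$ is necessarily supported on $R$, so for every $R$-equivalence class $E$,
\[
\tau(u)(E) \;=\; \omega^*(E \times S) \;=\; \omega^*(E \times E) \;=\; \omega^*(S \times E) \;=\; \tau(v)(E),
\]
using that $\omega^*$ assigns zero mass off $R$ and that $E$ is an $R$-class.

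The main obstacle is transitivity of $R$, which I would derive from a triangle inequality $d_{\pb}(u,w) \le d_{\pb}(u,v) + d_{\pb}(v,w)$. I would prove this by induction on $n$ for the Kleene approximants $d_{\pb}^{(n)}$, using the standard gluing trick: given optimal couplings $\omega_1 \in \Omega(\tau(u),\tau(v))$ and $\omega_2 \in \Omega(\tau(v),\tau(w))$, define a joint distribution $\omega(x,y,z) = \omega_1(x,y)\omega_2(y,z)/\tau(v)(y)$ on $\{y : \tau(v)(y) > 0\}$, whose $(x,z)$-marginal is a coupling of $\tau(u)$ and $\tau(w)$ satisfying $\sum \omega(x,z) d_{\pb}^{(n)}(x,z) \le \sum \omega_1 d_{\pb}^{(n)} + \sum \omega_2 d_{\pb}^{(n)}$ by the inductive hypothesis. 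Passing to the supremum in $n$ yields the triangle inequality for $d_{\pb}$, and hence transitivity of $R$, closing the argument.
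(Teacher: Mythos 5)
The paper does not prove this statement at all: it is quoted verbatim, with the citation \cite[Theorem~1]{DGJP1999}, and used as a black box. So there is no ``paper's proof'' to compare against; your proposal is a from-scratch reconstruction, and the question is simply whether it is sound.

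It essentially is, and it follows the standard route from the literature. For the forward direction, taking $d$ to be the $\{0,1\}$-valued indicator of non-bisimilarity, showing $\Delta(d)\le d$ by gluing class-by-class couplings (valid precisely because $\tau(u)(E)=\tau(v)(E)$ on $\sim$-classes), and invoking Knaster--Tarski is correct. For the converse, the crucial observation that an optimal coupling $\omega^*$ at a zero-distance pair must have $\support(\omega^*)\subseteq R$ (since $\sum\omega^*\,d_{\pb}=0$ with nonnegative terms forces every positive-mass pair to have distance zero) is exactly right, and the marginal computation $\tau(u)(E)=\omega^*(E\times S)=\omega^*(E\times E)=\omega^*(S\times E)=\tau(v)(E)$ then gives the bisimulation condition. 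You correctly identify transitivity of $R$ as the load-bearing lemma and prove the triangle inequality by gluing couplings and inducting on Kleene approximants.

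Two points deserve explicit attention. First, a small notational slip: the diagonal coupling should read $\omega(u',u')=\tau(u)(u')$ and $\omega(u',v')=0$ for $u'\ne v'$, not $\omega(u,u)=\tau(u)(u')$. Second, and more substantively, both the reflexivity argument and the triangle-inequality argument silently rely on the identity $d_{\pb}=\sup_n\Delta^n(\zero)$, i.e.\ that the Kleene chain from below reaches the least fixed point. For a merely monotone operator on a complete lattice this can fail at $\omega$; you need $\omega$-continuity of $\Delta$, which does hold here (each $\Omega(\tau(s),\tau(t))$ is a compact polytope and the minimum of the linear functional $\omega\mapsto\sum\omega\,d$ commutes with increasing suprema in $d$), but this should be stated, since it is the only thing that lets you transfer the induction on approximants to $d_{\pb}$ itself. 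With that made explicit, the proof is complete and correct; it is also somewhat heavier machinery than strictly necessary, as the triangle inequality is only needed to show that $E$ being an $R$-class is meaningful, but it is the clean way to get there.
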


\begin{theorem}\label{theorem:pbdistance-zero-NP}
	The problem $\PBeqZERO$ is in $\sf NP$.
\end{theorem}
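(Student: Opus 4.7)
The plan is to exhibit a polynomial-time verifier that, given a witness partition, confirms the existence of a memoryless strategy realising that partition as a probabilistic bisimulation. Concretely, I would have the nondeterministic algorithm guess an equivalence relation $R \subseteq S \times S$ (equivalently a partition $X$ of $S$) such that $(s,t) \in R$, and then check in deterministic polynomial time that there exists a memoryless strategy $\alpha$ for which $R$ is a probabilistic bisimulation in $\D(\alpha)$.

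The verification step reduces to a linear programming feasibility problem. The variables are $x_{u,\m}$ for $u \in S$ and $\m \in \A(u)$, intended to encode $\alpha(u)(\m)$. I would impose the following constraints: (i) the structural check $\ell(u) = \ell(v)$ for all $(u,v) \in R$ (done on the partition itself, outside the LP); (ii) stochasticity, $x_{u,\m} \ge 0$ and $\sum_{\m \in \A(u)} x_{u,\m} = 1$ for every $u \in S$; and (iii) the bisimulation conditions, namely for every pair $(u,v) \in R$ and every block $E \in X$,
\[
\sum_{\m \in \A(u)} x_{u,\m} \cdot \varphi(u,\m)(E) \;=\; \sum_{\m \in \A(v)} x_{v,\m} \cdot \varphi(v,\m)(E).
\]
This system has $O(|S|\cdot|\A|)$ variables and $O(|S|^2\cdot|X|)$ linear constraints, so feasibility can be decided in polynomial time by standard LP algorithms. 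The verifier accepts if and only if both the structural check passes and the LP is feasible.

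For correctness I would argue both directions separately. If the guess is accepted with solution $\bar{x}$, then setting $\alpha(u)(\m) = x_{u,\m}$ yields a memoryless strategy under which $R$ satisfies the conditions of a probabilistic bisimulation in $\D(\alpha)$ by construction; hence $s \sim_{\D(\alpha)} t$ and therefore $d_{\pb}(s,t)=0$ by Theorem~\ref{theorem:prob-bisimilar-iff-pbdistance-zero}. Conversely, if some memoryless strategy $\alpha$ already makes $s \sim_{\D(\alpha)} t$, then guessing $R = \mathord{\sim_{\D(\alpha)}}$ clearly passes the label check, contains $(s,t)$, and the vector $x_{u,\m} := \alpha(u)(\m)$ is a feasible LP solution. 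Thus the algorithm accepts on some computation path exactly when a witnessing memoryless strategy exists.

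The only subtle point — and the piece I expect to need the most care — is justifying that an arbitrary memoryless strategy suffices as a witness, rather than having to worry about more complicated strategies: this is precisely where I rely on the fact that $\PBeqZERO$ is defined over memoryless strategies and that the bisimulation conditions are \emph{linear} in the action-selection probabilities $\alpha(u)(\m)$, so that the existence of \emph{some} memoryless strategy compatible with a fixed partition is captured exactly by LP feasibility. The example in \cref{fig:noMDstrategyPBeqZero} shows that we cannot strengthen the guess to an MD strategy, which is why a proper (randomising) memoryless $\alpha$ must be recovered from the LP rather than guessed combinatorially.
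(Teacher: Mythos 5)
Your proposal is correct and follows essentially the same approach as the paper: guess a partition with $s$ and $t$ in the same block, then reduce verification to an LP feasibility test over the action-selection probabilities $x_{u,\m}$ with the stochasticity and block-matching constraints. Your write-up is slightly more explicit than the paper's (you spell out the label check, the nonnegativity constraints, and both directions of correctness), but the underlying argument is the same.
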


\begin{proof}
	According to \cref{theorem:prob-bisimilar-iff-pbdistance-zero} and the definition of probabilistic bisimilarity, there exists a memoryless strategy $\alpha$ such that $d_{\pb}(s, t) = 0$ in the induced LMC $\D(\alpha)$, if and only if the initial states $s$ and $t$ are probabilistic bisimilar, i.e., $s$ and $t$ are in the same probabilistic bisimulation induced equivalence class. 
	
	We can nondeterministically guess a partition $E_1, \cdots, E_n$ of $S$ such that each subset $E_i$ is a probabilistic bisimulation induced equivalence class and state $s, t$ are in the same equivalence class, that is, $\bigcup_{E_i} = S$, $E_i \cap E_j = \emptyset$ for any $i \not= j$, and $s, t \in E_i$ for some $i$.  
	
	A memoryless strategy $\alpha$ for $\D$ can be characterised by numbers $x_{s,m} \in [0,1]$ where $s \in S$ and $\m \in \A$ such that $x_{s,\m} = \alpha(s)(\m)$. We write $\bar{x}$ for the collection $(x_{s,\m})_{s\in S, \m \in \A}$. To check $d_{\pb}(s, t) = 0$ in the induced LMC $\D(\alpha)$ amounts to a feasibility test of the following linear program:
	\begin{align*}
	\exists \bar{x} \text{ such that } &\textstyle\sum_{\m\in\A(s)} x_{s, \m} = 1 \text{ for all $s\in S$ and}\\
	&\tau(s')(E_j) = \tau(t')(E_j) \text{ for all $E_i, E_j$ and all $s', t' \in E_i$}, 
	\end{align*}
	and hence can be decided in polynomial time.
\end{proof}

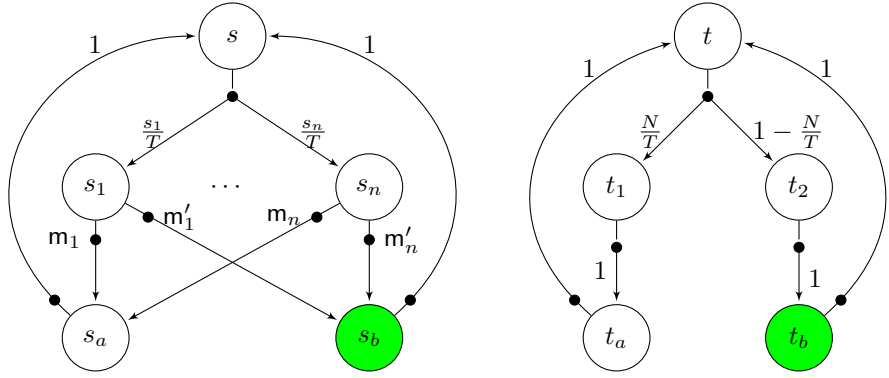
\begin{figure}[h]
	\begin{minipage}{0.45\linewidth}
		\centering
		\centering

\tikzstyle{BoxStyle} = [draw, circle, fill=black, scale=0.4,minimum width = 1pt, minimum height = 1pt]
\begin{tikzpicture}[xscale=.6,>=latex',shorten >=1pt,node distance=3cm,on grid,auto]
 
 \node[state] (us) at (3,4) {$s$};
 \node[BoxStyle] (st) at (3,3.2){};

 \node[state] (u1) at (0,2) {$s_{1}$};
 \node[label] at (2.9,2) {$\cdots$};
 \node[state] (un) at (6,2) {$s_{n}$};

 \node[BoxStyle,label=left:{{$\m_{1}$}}] at (0,1.3){};
 \node[BoxStyle,label=right:$\m_{1}'$] at (1.15,1.6){};
 \node[BoxStyle,label=left:{{$\m_{n}$}}] at (4.85,1.6){};
 \node[BoxStyle,label=right:$\m_{n}'$] at (6,1.3){};

 \node[state] (qb) at (0,0) {$s_a$};
 \node[state, fill=green] (qc) at (6,0) {$s_b$};
 
 \node[BoxStyle] at (-.9,0.5){};
 \node[BoxStyle] at (6.9,0.5){};

 \path[-] (us) edge node  [near end,left] {} (st);
 \path[->] (st) edge node  [midway,left] {$\frac{s_1}{T}$} (u1);
 \path[->] (st) edge node  [midway,right] {$\frac{s_n}{T}$} (un);

 \path[->] (u1) edge node  [very near end,left] {} (qb);
 \path[->] (u1) edge node  [very near end,above] {} (qc);
 \path[->] (un) edge node [very near end,above] {} (qb);
 \path[->] (un) edge node [very near end,right] {} (qc);


 \draw [->,postaction={decorate}] (qb) to [out=150, in=180, looseness=1.8] node [near end, above] {$1$} (us);
 \draw [->,postaction={decorate}] (qc) to [out=30, in=0, looseness=1.8] node [near end, above] {$1$} (us);
\end{tikzpicture}
	\end{minipage}
	\hspace{0.5cm}
	\begin{minipage}{0.45\linewidth}
		\centering
		\centering
\tikzstyle{BoxStyle} = [draw, circle, fill=black, scale=0.4,minimum width = 1pt, minimum height = 1pt]

\begin{tikzpicture}[xscale=.6,>=latex',shorten >=1pt,node distance=3cm,on grid,auto]

 \node[state] (t) at (2,4) {$t$};
 \node[BoxStyle] (tt) at (2,3.2){};

 \node[state] (t1) at (0,2) {$t_1$};
 \node[state] (t2) at (4,2) {$t_2$};
 
 \node[BoxStyle] (tat) at (0,1.2){};
 \node[BoxStyle] (tbt) at (4,1.2){};

 \node[state] (qb) at (0,0) {$t_a$};
 \node[state, fill=green] (qc) at (4,0) {$t_b$};

 \node[BoxStyle]at (-.92,0.5){};
 \node[BoxStyle] at (4.92,0.5){};
 
 \path[-] (t) edge node [midway, right] {} (tt);
 \path[->] (tt) edge node [midway, left] {$\frac{N}{T}$} (t1);
 \path[->] (tt) edge node [midway, right] {$1-\frac{N}{T}$} (t2);
 \path[-] (t1) edge node [midway, left] {} (tat);
 \path[->] (tat) edge node [midway, left, yshift=0.1cm] {$1$} (qb);

  \path[-] (t2) edge node [midway, left] {} (tbt);
 \path[->] (tbt) edge node [midway, right] {$1$} (qc);

 \draw [->,postaction={decorate}] (qb) to [out=150, in=190, looseness=1.8] node [near end, above] {$1$} (t);
 \draw [->,postaction={decorate}] (qc) to [out=30, in=-10, looseness=1.8] node [near end, above] {$1$} (t);

\end{tikzpicture}

 
	\end{minipage}
	\caption{The MDP~$\mathcal{D}$ in the reduction for {\sf NP}-hardness of $\PBeqZERO$ (or $\PBneqONE$). All states have the same label $a$ except $s_b$ and $t_b$ which have label $b$.  
	}\label{fig:reductionfromSubsset}
\end{figure}

Given a set $S =\{s_1, \cdots, s_n\}$ and $N \in \nat$, Subset Sum asks whether there exists a set $P \subseteq S$ such that $\sum_{s_i \in P} s_i = N$. 

\begin{theorem}\label{theorem:pbdistance-zero-NP-hard}
	The Subset Sum problem is polynomial-time many-one reduction to $\PBeqZERO$, hence $\PBeqZERO$ is {\sf NP}-hard. 
\end{theorem}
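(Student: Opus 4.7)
The plan is to reduce from Subset Sum using the MDP $\D$ depicted in \cref{fig:reductionfromSubsset}. Given an instance $(S,N)$ with $S = \{s_1,\dots,s_n\}$ and $T = \sum_{i=1}^n s_i$, the construction is clearly polynomial in the input. The intuition is that at each state $s_i$, choosing action $\m_i$ (which deterministically routes to $s_a$) corresponds to placing $s_i$ in the subset, while $\m_i'$ (routing to $s_b$) corresponds to excluding it. On the $t$-side, the split $N/T$ vs.\ $1-N/T$ fixes the target sum. Since $\PBeqZERO$ amounts to the existence of a memoryless strategy with $s \sim t$ in $\D(\alpha)$, this is the property I will establish.

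For the forward direction, given a subset $P \subseteq S$ with $\sum_{s_i \in P} s_i = N$, I will define the MD strategy $\alpha(s_i) = \m_i$ if $s_i \in P$ and $\alpha(s_i) = \m_i'$ otherwise. The witness is the equivalence relation on $S$ induced by the partition
\[
\{s,t\},\quad \{t_1\} \cup \{s_i : s_i \in P\},\quad \{t_2\} \cup \{s_i : s_i \notin P\},\quad \{s_a,t_a\},\quad \{s_b,t_b\}.
\]
Labels agree within each class. The only nontrivial transition check is that the mass from $s$ into the class containing $t_1$ equals $\sum_{s_i \in P} s_i/T = N/T$, which matches $\tau(t)$ of that same class; all remaining transitions either land in the same single class or distribute uniformly into $\{s,t\}$.

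For the converse, I will assume some memoryless strategy $\alpha$ gives $s \sim t$ in $\D(\alpha)$ and extract a valid $P$. Write $C(u)$ for the $\sim$-class of $u$ and $x_i := \alpha(s_i)(\m_i)$. The argument proceeds in three steps: (i) $C(t_1) \neq C(t_2)$, since $t_1$'s one-step distribution is concentrated on label-$a$ states while $t_2$'s is concentrated on label-$b$ states; (ii) every $s_i$ must lie in $C(t_1) \cup C(t_2)$, since otherwise $C(s_i)$ would give $\tau(s)(C(s_i)) > 0$ but $\tau(t)(C(s_i)) = 0$, contradicting $s \sim t$; (iii) matching the one-step distributions forces $x_i \in \{0,1\}$: if $s_i \in C(t_1)$, then since $s_b$ carries label $b$ and therefore cannot belong to $C(t_a)$, the mass $1-x_i$ that $s_i$ sends to $s_b$ cannot be absorbed by the target class, forcing $x_i = 1$; symmetrically $s_i \in C(t_2)$ forces $x_i = 0$. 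Then $\tau(s)(C(t_1)) = \tau(t)(C(t_1)) = N/T$ yields $\sum_{s_i : x_i = 1} s_i = N$, so $P := \{s_i : x_i = 1\}$ solves Subset Sum. The main obstacle is step~(iii), ruling out genuinely randomized strategies that might ``accidentally'' achieve $s \sim t$; the key feature is that from each $s_i$ there is exactly one label-$a$ successor and one label-$b$ successor, so the strategy at $s_i$ cannot hedge while still matching a pure distribution.
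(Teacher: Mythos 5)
Your proof is correct and takes essentially the same approach as the paper's: same gadget, identical forward direction, and the converse likewise extracts $P$ from the bisimilarity class of $t_1$ and uses $\tau(s)(E) = \tau(t)(E)$ for that class $E$. Your steps~(ii)--(iii), which show that every $s_i$ must lie in $C(t_1)\cup C(t_2)$ and that the strategy is forced to be deterministic there, are a correct but unnecessary detour---the paper simply sets $P = \{s_i : s_i \sim t_1\}$ and proceeds without reasoning about the strategy at each $s_i$ at all.
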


\begin{proof}
	Given an instance of Subset Sum $<S, N>$ where $S =\{s_1, \cdots, s_n\}$ and $N \in \nat$, we construct an MDP $\D$; see \cref{fig:reductionfromSubsset}. Let $T = \sum_{s_i \in S} s_i$. In the MDP $\D$, state $s$ transitions to state $s_i$ with probability $s_i / T$ for all $1 \le  i \le n$. Each state $s_i$ has two available actions, each transitions to $s_a$ and $s_b$ by taking the action $\m_i$ and $\m_i'$, respectively. State $t$ transitions to $t_1$ and $t_2$ with probability $N / T$ and $1 - N / T$, respectively. All the remaining states have only one available action transitioning to the successor state with probability one. States $s_b$ and $t_b$ have label $b$ and all other states have label $a$.  
	
	Next, we show that
	$$<S, N> \in {\mbox{Subset Sum}} \iff \exists \, \mbox{memoryless strategy $\alpha$ such that} \; d_{\pb}(s, t) = 0 \, \mbox{in} \; \D(\alpha).$$ 
	
	Intuitively, making $s_i$ probabilistic bisimilar with $t_1$ simulates the membership of $s_i$ in $P$. Conversely, making $s_i$ probabilistic bisimilar with $t_2$ simulates the membership of $s_i$ in $S \setminus P$. 
	
	($\implies$)
	Let $P \subseteq S$ be the set such that $\sum_{s_i \in P} s_i = N$. Let $\alpha$ be an MD strategy such that if $s_i \in P $ then $\alpha(s_i) = \m_i$ and $\alpha(s_i) = \m_i'$ otherwise. Consider the following partition of states of $\D$, 
	$$E_1 = \{s, t\}, E_2 = P \cup \{t_1\}, E_3 = (S\setminus P) \cup \{t_2\}, E_4 = \{s_a, t_a\} \text{ and } E_5 = \{s_b, t_b\}. $$
	
	Then, 
	\begin{align*}
		\tau(s)(E_2) &= \tau(s)(P)  = \sum_{s_i \in P} \frac{s_i}{T}\\    
		&  = \frac{N}{T}  && \commenteq{$\sum_{s_i \in P} s_i = N$} \\
		&= \tau(t)(t_1) = \tau(t)(E_2)  \quad \text{and}
	\end{align*}
	
	\begin{align*}
		\tau(s)(E_3) &= \tau(s)(S \setminus P)  = \sum_{s_i \in S \setminus P} \frac{s_i}{T} = 1 - \sum_{s_i \in P} \frac{s_i}{T} \\
		& = 1 -\frac{N}{T}  && \hspace{-15ex}\commenteq{$\sum_{s_i \in P} s_i = N$} \\
		&= \tau(t)(t_2) = \tau(t)(E_3). 
	\end{align*}
	
	Similarly, we can verify that 
	$\text{for all } E_i, E_j \text{ and all } s', t' \in E_i : \tau(s')(E_j) = \tau(t')(E_j).$
	
	By the definition of probabilistic bisimulation, each set $E_i$ is a probabilistic bisimulation induced equivalence class. Since $s$ and $t$ are in the same equivalence class, we have $s \sim t$, and hence $d_{\pb}(s, t) = 0$ by \cref{theorem:prob-bisimilar-iff-pbdistance-zero}.
	
	($\impliedby$)
	Assume there is a memoryless strategy $\alpha$ such that $d_{\pb}(s, t) = 0$ in the LMC $\D(\alpha)$. By \cref{theorem:prob-bisimilar-iff-pbdistance-zero}, $s$ and $t$ are probabilistic bisimilar in $\D(\alpha)$. Let $P$ be the set of successor states of $s$ that are probabilistic bisimilar to $t_1$. 	Then, 
	\begin{align*}
		\tau(s)(P) = \sum_{s_i \in P} \frac{s_i}{T} \quad \text{ and } \quad  \tau(t)(t_1)  =\frac{N}{T} . 
	\end{align*}
	
	Since $\ell(t_a) \not= \ell(t_b)$, by definition of probabilistic bisimilarity, we have $t_a \not\sim t_b$, and hence $t_a$ and $t_b$ are not in the same $\sim$-equivalence class. Since $\tau(t_1)(t_a)= \tau(t_2)(t_b) = 1$ in the LMC $\D(\alpha)$, again by  definition of probabilistic bisimilarity,  we have $t_1 \not\sim t_2$ and $t_1$ and $t_2$ are not in the same $\sim$-equivalence class, and thus $t_2$ is not in the same $\sim$-equivalence class as the states in $P$. Since $s \sim t$, we have $\sum_{s_i \in P} \frac{s_i}{T} = \frac{N}{T}$, and hence, $\sum_{s_i \in P} = N$.
\end{proof}

\theoremPBDistanceZeroUBLB*
\begin{proof}
	It follows from \cref{theorem:pbdistance-zero-NP} and \cref{theorem:pbdistance-zero-NP-hard}.
\end{proof}

\subsection{Proofs of $\PBneqONE$}\label{appendix:pbDistanceNeqOne}
We show in this section that the problem $\PBneqONE$ is {\sf NP}-complete. Recall that $\PBneqONE$ is the problem asking whether there is a memoryless strategy $\alpha$ for $\D$ such that the probabilistic bisimilarity distance of the two initial states is less than one in the induced labelled Markov chain $\D(\alpha)$, i.e., $d_{\pb}(s, t) \ls 1$.



\theoremPBDistanceNeqOneUBLB*

\begin{proof}
	We first show that this problem is in {\sf NP}. We nondeterministically guess a partition $E_1, \cdots, E_n$ of the states of $\D$ and two states $u, v$ in the same subset $E_i$ for some $i$. We also nondeterministically guess the graph $G$ of \cref{definition:pb-distance-graph} for the LMC $\D(\alpha)$ induced by some strategy $\alpha$. By \cref{proposition:pbdistance-neq-one-graph-reachability}, if $(s, t)$ is reachable from some state pair $(u, v)$ in the graph $G$ and $u \sim v$ then $d_{\pb}(s, t) \ls 1$. The condition that $(s, t)$ is reachable from $(u, v)$ in the graph $G$ can be checked in polynomial time using e.g. breadth-first search. To check $u \sim v$, it suffices to check each subset $E_i$ is a probabilistic bisimulation induced equivalence class, which amounts to a feasibility test of the linear program:
	\begin{align*}
	\exists \bar{x} \text{ such that } &\textstyle\sum_{\m\in\A(s)} x_{s, \m} = 1 \text{ for all $s\in S$ and}\\
	&\tau(s')(E_j) = \tau(t')(E_j) \text{ for all $E_i, E_j$ and all $s', t' \in E_i$}, 
	\end{align*}
	and hence can be decided in polynomial time.  
	
	Next, we establish {\sf NP}-hardness of the problem. Similar to \cref{theorem:pbdistance-zero-NP-hard}, we provide a polynomial-time many-one reduction from Subset Sum. Given an instance of $<S, N>$ of Subset Sum, we construct the same MDP $\D$ as shown in \cref{fig:reductionfromSubsset}.
	
	Next, we show that
	$$<S, N> \in {\mbox{Subset Sum}} \iff \exists \, \mbox{memoryless strategy $\alpha$ such that} \; d_{\pb}(s, t) \ls 1 \, \mbox{in} \; \D(\alpha).$$ 
	
	($\implies$)
	Let $P \subseteq S$ be the set such that $\sum_{s_i \in P} s_i = N$. By \cref{theorem:pbdistance-zero-NP-hard}, there exists a memoryless strategy $\alpha$ such that $d_{\pb}(s, t) = 0$, and hence $d_{\pb}(s, t) \ls 1$.
	
	($\impliedby$)
	We prove its contrapositive, that is, if the instance is a no-instance then for all memoryless strategy $\alpha$ we have $d_{\pb}(s, t) = 1$ in $\D(\alpha)$.
	
	Assume the instance is a no-instance. By \cref{theorem:pbdistance-zero-NP-hard}, we have $d_{\pb}(s, t) \gr 0$ in $\D(\alpha)$ for all memoryless strategy $\alpha$, i.e. $s \not\sim t$. Let $\alpha$ be an arbitrary memoryless strategy. By the construction of the MDP, we have $s_a \not\sim t_a$ and $s_b \not\sim t_b$. Since $\ell(s_a) \not= \ell(t_b)$ and $\ell(s_b) \not= \ell(t_a)$, we also have $s_a \not\sim t_b$ and $s_b \not\sim t_a$. Thus, in the LMC $\D(\alpha)$, $s_i$ for all $1 \le i \le n$ is not probabilistic bisimilar to $t_1$ or $t_2$. In the graph of \cref{definition:pb-distance-graph}, the following vertices could reach $(s, t)$: $(s_i, t_1)$ or $(s_i, t_2)$ for all $1 \le i \le n$, $(s_a, t_a)$ and $(s_b, t_b)$. However, none of them are probabilistic bisimilar.  By \cref{proposition:pbdistance-neq-one-graph-reachability}, we have $d_{\pb}(s, t) = 1$ in the LMC $\D(\alpha)$.
\end{proof}

\end{document}